
\documentclass[12pt]{amsart}
\usepackage{amsmath,amsfonts,amssymb,amsthm,epsfig}
\usepackage{latexsym}
\usepackage{amssymb}
\usepackage{enumerate}
\usepackage{graphicx}
\usepackage{pstricks}
\input xy
\xyoption{all}
\usepackage{comment}

\usepackage{hyperref}




\textwidth      = 16cm
\textheight     = 23cm
\oddsidemargin  = -0.0cm
\evensidemargin = -0.0cm
\topmargin      = -0.6cm
\parskip        = 8pt
\parindent      = 20pt




\def\ii{{\mathfrak i}}

\def\ww{{\mathfrak w}}
\def\xx{{\mathfrak x}}
\def\yy{{\mathfrak y}}

\def\DD{{\mathfrak D}}

\def\UU{{\mathfrak U}}


\def\CCC{{\mathbb C}}

\def\RRR{{\mathbb R}}

\def\ZZZ{{\mathbb Z}}


\def\ga{{\alpha}}
\def\gb{{\beta}}

\def\gd{{\delta}}

\def\gl{{\lambda}}
\def\gm{{\mu}}
\def\gn{{\nu}}

\def\gq{{\chi}}

\def\gt{{\tau}}


\def\gD{{\Delta}}



\def\({\left(}
\def\){\right)}
\def\[{\left[}
\def\]{\right]}
\def\<{\left\langle}
\def\>{\right\rangle}


\def\union{\cup}



\def\qed{\hfill\Box\smallskip}


\newtheorem{theorem}{Theorem}[section]
\newtheorem{lemma}[theorem]{Lemma}

\newtheorem{corollary}[theorem]{Corollary}
\newtheorem{proposition}[theorem]{Proposition}


\begin{document}

\title[Arbitrary slopes]
{Scaling limits of random skew plane partitions with arbitrarily sloped back walls}

\author[Sevak Mkrtchyan]{Sevak Mkrtchyan}
\email{sevak.mkrtchyan@rice.edu}
\address{Rice University, Department of Mathematics\\ Houston, TX}

\begin{abstract}

The paper studies scaling limits of random skew plane partitions confined to a box when the inner shapes converge uniformly to a piecewise linear function $V$ of arbitrary slopes in $[-1,1]$. It is shown that the correlation kernels in the bulk are given by the incomplete Beta kernel, as expected. As a consequence it is established that the local correlation functions in the scaling limit do not depend on the particular sequence of discrete inner shapes that converge to $V$. A detailed analysis of the correlation kernels at the top of the limit shape, and of the frozen boundary is given. It is shown that depending on the slope of the linear section of the back wall, the system exhibits behavior observed in either \cite{OR2} or \cite{BMRT}. 

\end{abstract}

\maketitle

\tableofcontents

\section{Introduction}
\subsection{Background}
\subsubsection{Skew plane partitions}
Given a partition $\lambda$, a skew plane partition with boundary $\lambda$ confined to a $c\times d$ box is an array of nonnegative integers $\pi=\{\pi_{i,j}\}$ defined for all $1\leq i\leq c$, $1\leq j\leq d$, $(i,j)\notin\lambda$, which are non-increasing in $i$ and $j$. One way to visualize this is to draw a $c\times d$ rectangular grid, remove the partition $\lambda$ from a corner, and stack $\pi_{i,j}$ identical cubes at position $(i,j)$, as in Figure \ref{Fig:3d_skew_part}. The number of cubes $|\pi|:=\sum\pi_{i,j}$ is the volume of the skew plane partition $\pi$.
\begin{figure}[ht]
\caption{\label{Fig:3d_skew_part} A skew plane partition and the corresponding partition $\gl$. Here $\gl=\{4,4,3,3,3,1\}$.}
\includegraphics[width=10cm]{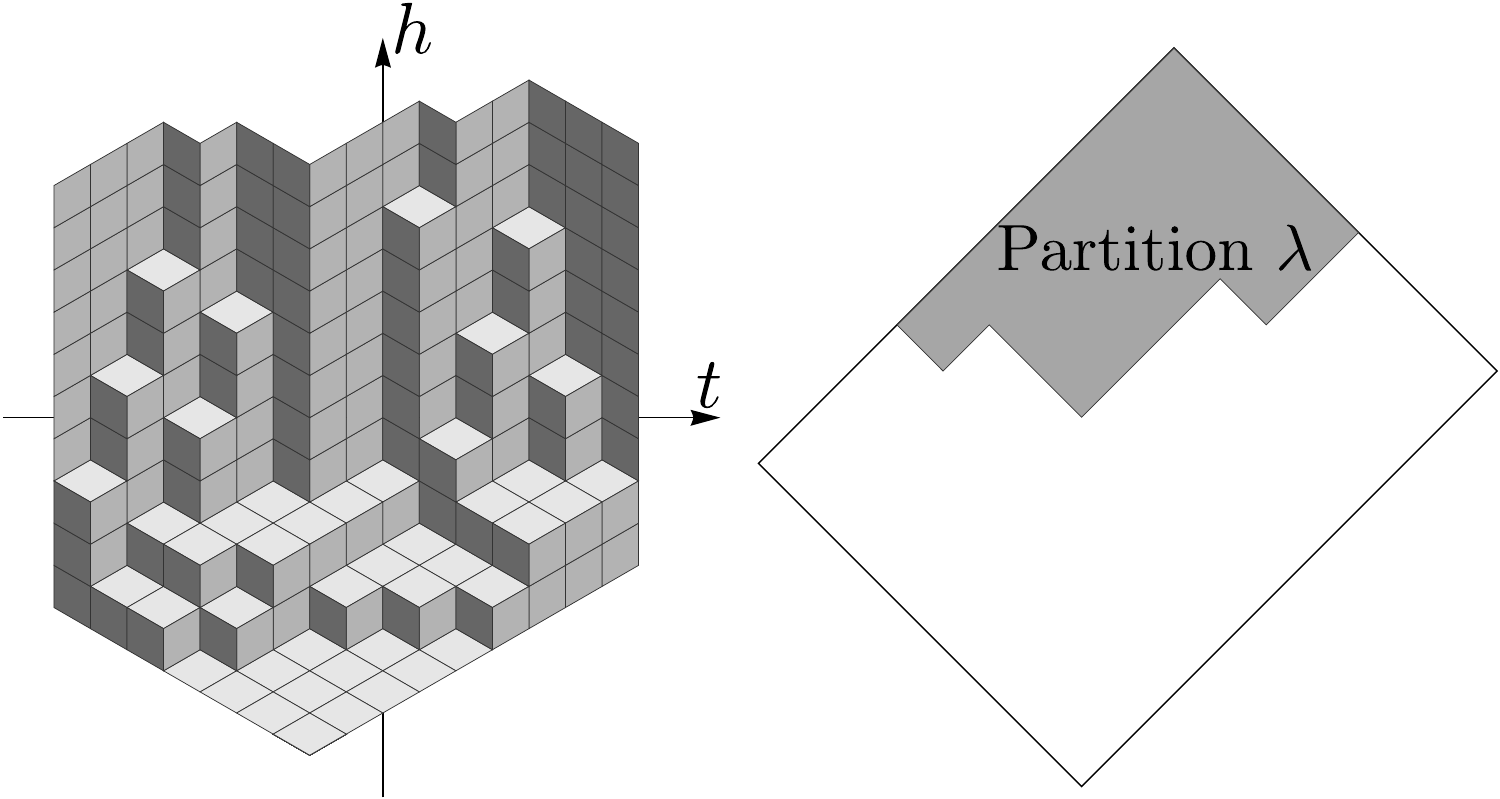}
\end{figure}

From Figure \ref{Fig:3d_skew_part} it is easy to see that skew plane partitions can be identified with tilings of a certain region of $\mathbb{R}^2$ with 3 types of rhombi (see \cite{OR1} or \cite{OR2} for details and other correspondences). Scale the axes in such a way that the centers of horizontal tiles are on the lattice $\mathbb{Z}\times \frac 12 \mathbb{Z}$. The letter $t$ will be used for the horizontal and $h$ for the vertical coordinate axes in this plane. For a partition $\gl$, let $b_\gl(t)$ encode the boundary of $\lambda$. More precisely, let $u_1<u_2<\ldots<u_{n-1}$ denote the $t$ coordinates of the corners on the outer boundary of the Young diagram $\lambda$ as shown in Figure \ref{Fig:corners}. Define $b_\lambda(t)$ as 
\begin{figure}[ht]
\caption{\label{Fig:corners} Position of corners for $\gl=\{4,4,3,3,3,1\}$.}
\includegraphics[width=7cm]{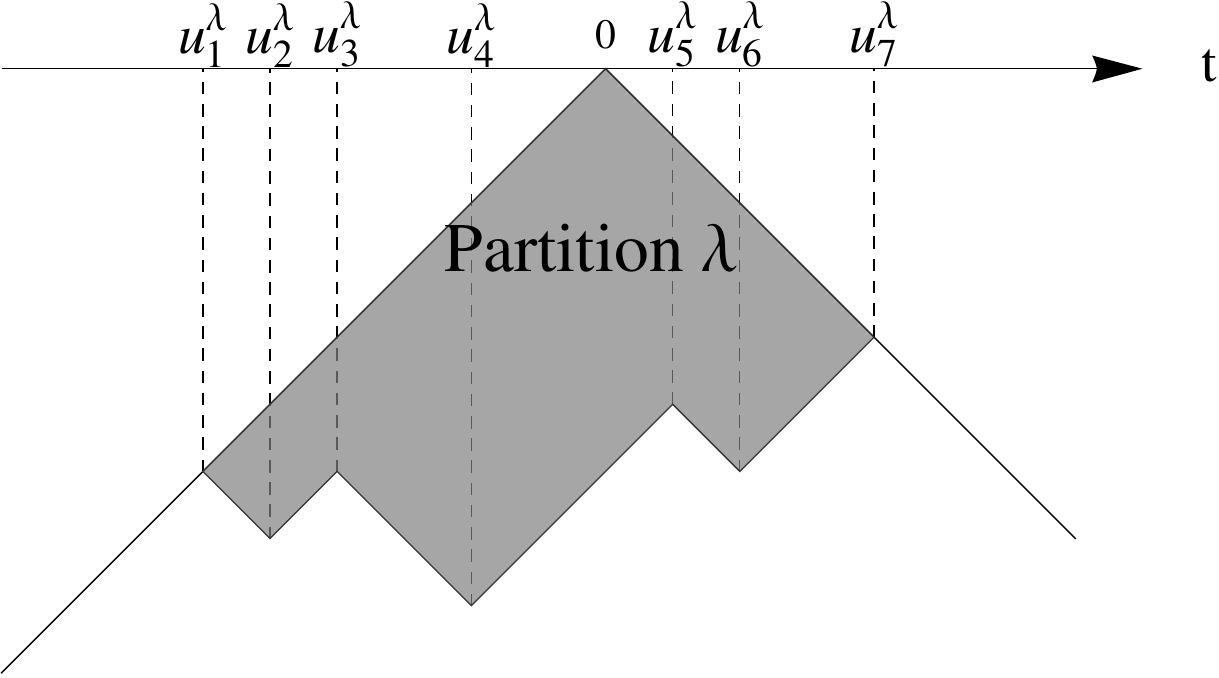}
\end{figure}
\begin{equation*}
b_\lambda(t):=t+2\sum_{i=1}^{n-1} (-1)^{i}(t-u_i)\theta(t-u_i),
\end{equation*}
where $\theta(t)$ is the step function 
\begin{equation*}
\theta(t)=\left\{\begin{array}{rr}
1,&t\geq 0\\0,&t<0
\end{array}\right..
\end{equation*}
We will call $b_\lambda(t)$ the back wall. Notice that $b_\lambda(t)$ is a piecewise linear function with slopes in $[-1,1]$. The graph of $h=\frac 12 b_\lambda(t)$ (see Figure \ref{Fig:bLambda}) gives the inner shape of the skew plane partition from Figure \ref{Fig:3d_skew_part}. 

\begin{figure}[ht]
\caption{\label{Fig:bLambda} The graph of $\frac 12 b_\lambda(t)$ when $\gl=\{4,4,3,3,3,1\}$.}
\includegraphics[width=6cm]{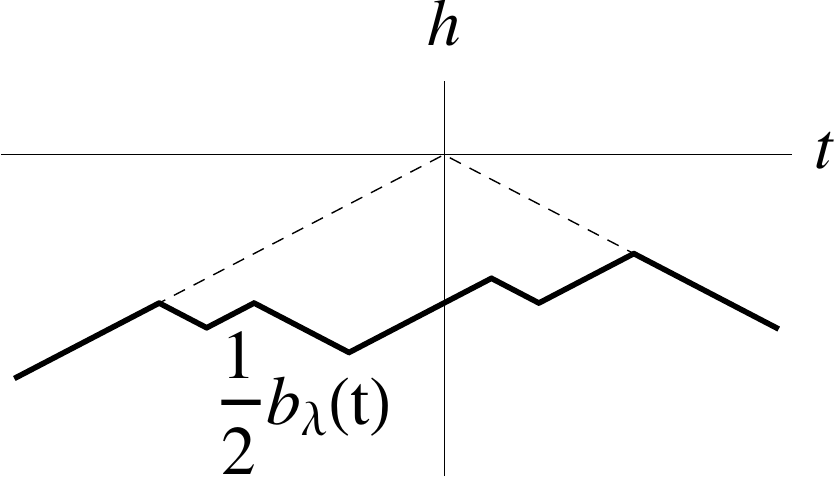}
\end{figure}

\subsubsection{The thermodynamical limit; cases studied before}
\label{subsec:before}
For $q\in (0,1)$ introduce a probability measure on skew plane partitions with boundary $\lambda$ and confined to a $c\times d$ box by 
$$Prob(\pi)\propto q^{|\pi|}.$$

Given a subset $U=\{(t_1,h_1),\ldots,(t_n,h_n)\}\subset \mathbb{Z}\times \frac 12 \mathbb{Z}$, define the corresponding local correlation function $\rho_{\lambda,q}(U)$ as the probability for a random tiling taken from the above probability space to have horizontal tiles centered at all positions $(t_i,h_i)_{i=1}^n$. 

Okounkov and Reshetikhin studied the thermodynamical limit $q=e^{-r} \rightarrow 1$ of this system in several cases. In \cite{OR1} they studied the case when $\gl$ is the empty partition and the size of the box is infinite (i.e. plane partitions are not confined to a finite $c\times d$ box). 

In \cite{OR2} they showed that in general, for arbitrary $\gl$, the correlation functions are determinants.
\begin{theorem}[Theorem 2, part 3 \cite{OR2}] 
\label{thm:fin-corr2}
The correlation functions $\rho_{\lambda,q}$ are determinants
$$\rho_{\lambda,q}(U)=\det(K_{\lambda,q}((t_i,h_i),(t_j,h_j)))_{1\leq i,j\leq n},$$ where the correlation kernel $K_{\gl,q}$ is given by the double integral
\begin{multline}\label{eq:main-corr2}
K_{\gl,q}((t_1,h_1),(t_2,h_2))=\\ \frac{1}{(2\pi \ii)^2}
\int_{z\in C_z}\int_{w\in C_w}
\frac{\Phi_{b_\lambda}(z,t_1)}{\Phi_{b_\lambda}(w,t_2)}
\frac{\sqrt{zw}}{z-w}z^{-h_1+\frac 12 b_\gl(t_1)-\frac 12}w^{h_2-\frac 12 b_\gl(t_2)+\frac12}\frac{dzdw}{zw},
\end{multline}
where $b_\lambda(t)$ is the function giving the back wall corresponding to $\lambda$ as in Figure \ref{Fig:bLambda}, 
\begin{equation} \label{eq:Phis}
\begin{aligned}
&\Phi_{b_\lambda}(z,t)&=&\frac{\Phi_{-,b_\lambda}(z,t)}{\Phi_{+,b_\lambda}(z,t)},\\
&\Phi_{+,b_\lambda}(z,t)&=&\prod_{m>t, m\in D^+, m\in \ZZZ+\frac 12}(1-zq^m),\\
&\Phi_{-,b_\lambda}(z,t)&=&\prod_{m<t, m\in D^-, m\in \ZZZ+\frac 12}(1-z^{-1}q^{-m}),
\end{aligned}
\end{equation}
$m\in D^{\pm}$ means the back wall at $t=m$, i.e. $b_\lambda(t)$ at $t=m$ has slope $\mp 1$, and $C_z$ (resp. $C_w$) is a simple positively oriented contour around 0 such that its interior contains none of the poles of $\Phi_{b_\lambda}(\cdot,t_1)$ (resp. of $\Phi_{b_\lambda}(\cdot,t_2)^{-1}$). Moreover, if $t_1< t_2$, then $C_z$ is contained in the interior of $C_w$, and otherwise, $C_w$ is contained in the interior of $C_z$. 
\end{theorem}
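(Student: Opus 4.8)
The plan is to realize the $q^{|\pi|}$-weighted ensemble as a determinantal point process through the transfer-matrix (free-fermion) formalism, following the approach of \cite{OR1}. First I would slice each skew plane partition $\pi$ along its diagonals. Reading the tile heights along successive diagonals produces a sequence of ordinary partitions $\emptyset = \mu^{(0)}, \mu^{(1)}, \dots, \mu^{(N)} = \emptyset$ in which consecutive members interlace, either $\mu^{(k)} \prec \mu^{(k+1)}$ or $\mu^{(k)} \succ \mu^{(k+1)}$, the direction of each interlacing being dictated precisely by the local slope of the back wall $b_\lambda$ at that diagonal: slope $-1$ versus slope $+1$ corresponds to the two interlacing directions, which is exactly the role of the sets $D^\pm$ appearing in \eqref{eq:Phis}. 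Under this bijection the weight $q^{|\pi|}$ factors into a product of $q$-powers attached to the slices, so that the partition function and all correlations become matrix elements in the fermionic Fock space.

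Next I would encode each partition-valued slice as a state $\langle \mu^{(k)}|$ and introduce the half-vertex operators $\Gamma_+(z), \Gamma_-(z)$, whose matrix elements generate precisely the interlacing weights. The measure then takes the form of a single vacuum matrix element $\langle \emptyset | \cdots | \emptyset \rangle$ of an alternating product of operators $\Gamma_{\pm}(q^{m})$, one factor per diagonal, with the $\pm$ pattern and the exponents read off from $b_\lambda$ and from the confining $c\times d$ box. The event ``there is a horizontal tile at $(t_i,h_i)$'' then translates into the presence of a particle at a prescribed site in the Maya diagram of the corresponding slice, i.e.\ into the insertion of a fermionic number operator $\psi(z_i)\psi^*(z_i)$ at the appropriate place in the product.

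The correlation function $\rho_{\lambda,q}(U)$ is therefore the vacuum expectation of a product of $n$ such number operators interleaved among the vertex operators. Because $\psi,\psi^*$ are free fermions, Wick's theorem expresses this expectation as the determinant of the two-point matrix, which is exactly the asserted determinantal form with kernel $K_{\lambda,q}$. To obtain the double-integral formula I would compute the two-point function $\langle \psi(z)\psi^*(w)\rangle$ in the presence of the vertex operators: commuting each $\Gamma_\pm$ past $\psi(z)$ and $\psi^*(w)$ introduces scalar factors $(1-zq^m)^{\pm 1}$, and collecting these over all diagonals lying to the relevant side of $t_1$ (resp.\ $t_2$) produces exactly the ratio $\Phi_{b_\lambda}(z,t_1)/\Phi_{b_\lambda}(w,t_2)$. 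Writing $\psi(z)=\sum_k \psi_k z^k$ and $\psi^*(w)=\sum_k \psi^*_k w^{-k}$, extracting the coefficients corresponding to the heights $h_1,h_2$, and using the residual pairing $\langle \psi(z)\psi^*(w)\rangle = \sqrt{zw}/(z-w)$ assembles the stated double contour integral, with the $z^{-h_1+\frac12 b_\lambda(t_1)-\frac12}$ and $w^{h_2-\frac12 b_\lambda(t_2)+\frac12}$ factors arising from the Maya-coordinate normalization tying heights to the back-wall profile.

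The main obstacle is the careful bookkeeping of operator ordering. The geometric series defining the commutation factors $(1-zq^m)^{-1}$ converge only in nested annuli, and it is exactly this analytic constraint that forces $C_z$ to lie inside $C_w$ when $t_1<t_2$ and outside it otherwise. Matching the pole structure of $\Phi_{b_\lambda}(\cdot,t_1)$ against that of $\Phi_{b_\lambda}(\cdot,t_2)^{-1}$ with the prescribed contour nesting, and checking that the diagonal and confining-box cut-offs truncate the products consistently without spoiling convergence, is where the delicate part of the argument lies.
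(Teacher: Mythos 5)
This statement is quoted verbatim from \cite{OR2} and the present paper offers no proof of it, so the only meaningful comparison is with the original argument there. Your sketch --- diagonal slicing into interlacing partitions governed by $D^{\pm}$, realization as a Schur-process vacuum expectation of alternating $\Gamma_{\pm}(q^m)$ operators, insertion of fermionic number operators for the horizontal tiles, Wick's theorem for the determinant, and the commutation factors $(1-zq^m)^{\pm1}$ assembling into $\Phi_{b_\lambda}$ with the contour nesting dictated by convergence of the operator expansions --- is essentially the same route as the proof in \cite{OR1,OR2}, and it is correct.
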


Using Theorem \ref{thm:fin-corr2} they studied the limit $q\rightarrow 1$ of the statistical system when the number of corners in the partition $\gl$ stays finite and showed that it exhibits the limit shape phenomenon. More precisely, let $\{r_k\}_{k=1}^{\infty}$ be a positive sequence which converges to $0$. Set $q_k=e^{-r_k}$. Let $\gl_k(t)$ be a collection of partitions confined to $c_k\times d_k$ boxes with corners at positions $u^{\lambda_k}_1\leq u^{\lambda_k}_2\leq\ldots\leq u^{\lambda_k}_{n-1}$, and let $b_{\lambda_k}(t)$ be the corresponding functions giving the back walls. Note, that the positions of corners depend on $k$ but the number of corners does not. Consider the probability measure on skew plane partitions $\pi^k$ with boundary $\lambda_k$ and confined to a $c_k\times d_k$ box by 
\begin{equation}
\label{eq:distr_k}
Prob(\pi^k)\propto q_k^{|\pi^k|}. 
\end{equation}

The characteristic scale is $r_k^{-1}$, so the system should be scaled by $r_k$. Let $B_{\lambda_k}(\tau)$ denote the function corresponding to $b_{\lambda_k}(t)$ when scaled by $r_k$, i.e. $B_{\lambda_k}(\tau):=r_k b_{\lambda_k}(\frac{\tau}{r_k}).$ Given constants $V_0<V_1<\ldots <V_{n-1}<V_n$ such that $V_0<0<V_n$, Okounkov and Reshetikhin studied the limit $k\rightarrow\infty$ when $r_k\rightarrow 0$, $r_k u^{\lambda_k}_i\rightarrow V_i$ for all $1\leq i \leq n-1$, $\frac 1{\sqrt{2}} r_k c_k\rightarrow -V_0$, and $\frac 1{\sqrt{2}} r_k d_k\rightarrow V_n$. In other words, in the scaling limit the back wall is a piecewise linear curve with line segments $V_i<\tau<V_{i+1}$ of slopes $\pm 1$ (see Figure \ref{Fig:OR2-lambda}).
\begin{figure}[ht]
\caption{\label{Fig:OR2-lambda}The back wall in \cite{OR2}.}
\includegraphics[width=6cm]{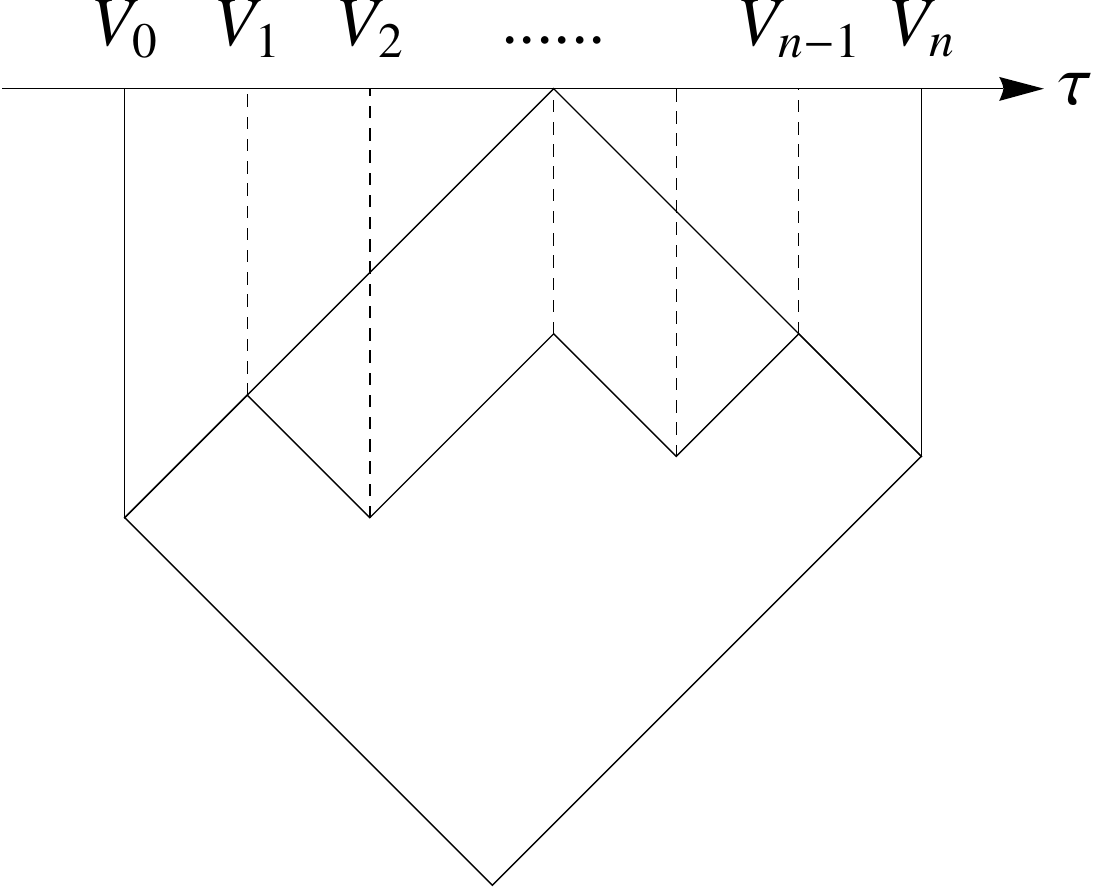}
\end{figure}

They showed that the scaled height function $r_k\pi^k$ of a random skew plane partition converges to a deterministic shape. In the two dimensional formulation of the system as random tilings of the plane by lozenges two different types of regions appear: those where only one type of tiles appear, called frozen regions, and those, where tiles of each type appear with positive density, called liquid regions. They showed that the correlation functions in the limit are described by a determinantal process given by the incomplete Beta kernel in the liquid region. They gave a characterization of the boundary between the frozen and liquid phases that arise and showed that in the limit the correlation kernel $K$ is given by the Airy kernel at generic points of the frozen boundary and by the Piercey process at the singular points of the frozen boundary, which appear near the outer corners (see Figure \ref{Fig:OR2-frozen-boundary}).

\begin{figure}[ht]
\caption{\label{Fig:OR2-frozen-boundary} The frozen boundary in \cite{OR2}.}
\includegraphics[width=8cm]{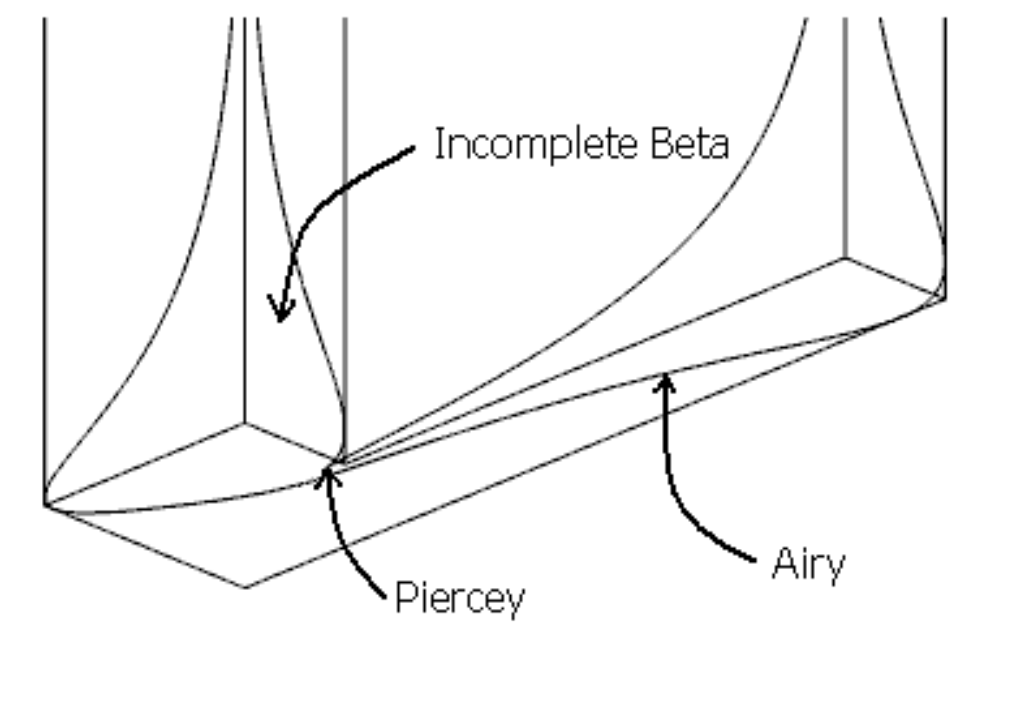}
\end{figure}

While in \cite{OR1} and \cite{OR2} the limiting back wall consisted of a piecewise linear function of slopes $\pm 1$, in \cite{BMRT} we studied the case when the limiting back wall is piecewise periodic in such a way, that in the scaling limit it converges to a continuous piecewise linear function of rational slopes strictly between $-1$ and $1$ (see Figure \ref{Fig:BMRT-periodic-wall}).  We showed that the correlation functions in the limit are given by the same processes as before, both in the bulk and at generic points of the \begin{figure}[ht]
\caption{\label{Fig:BMRT-periodic-wall} The periodic back wall in \cite{BMRT}.}
\includegraphics[width=14cm]{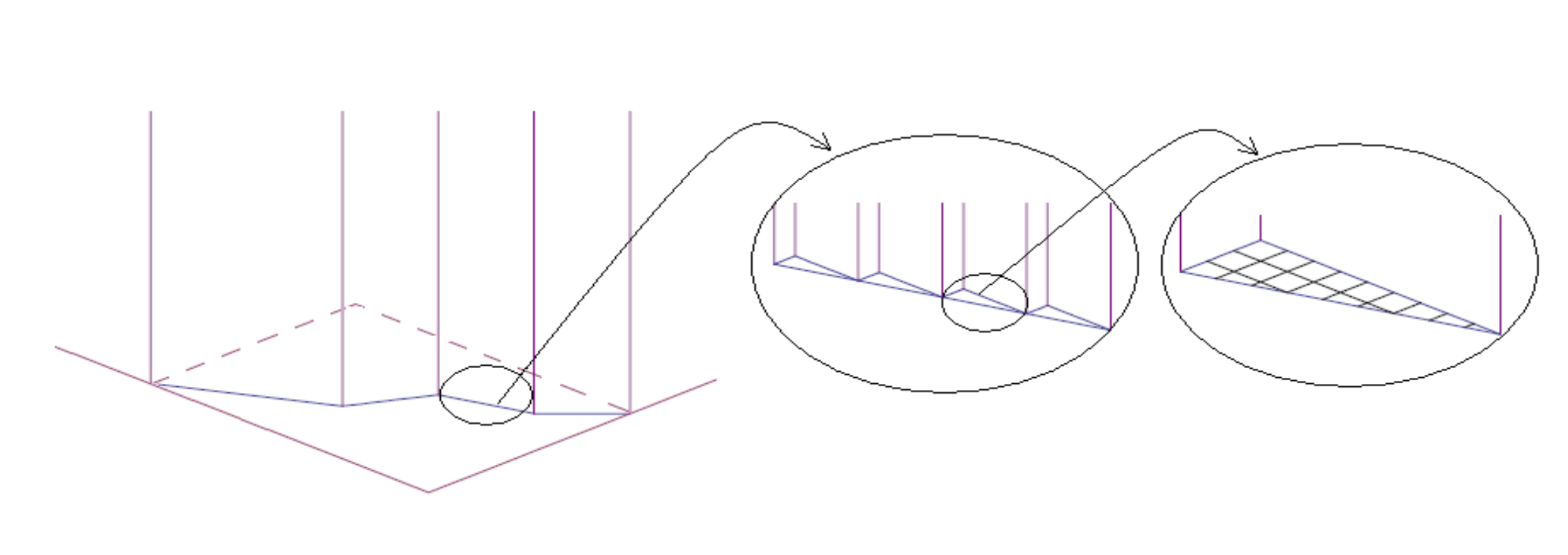}
\end{figure}
frozen boundary. Unlike the \cite{OR2} case, the frozen boundary doesn't develop cusps (see Figure \ref{Fig:BMRT-frozen}); hence, the Piercey process does not appear. Another difference is that in this setup the liquid region extends to $\infty$ everywhere on the back wall (see Figure \ref{Fig:BMRT-frozen}), and the local statistics is given by the bead process of Boutillier \cite{Bou} when you move high up on the wall. The bead process also appears high up, near the corners, with various parameters depending on the slope of the approach to the corner. 
\begin{figure}[ht]
\caption{\label{Fig:BMRT-frozen} The frozen boundary in \cite{BMRT}.}
\includegraphics[width=7cm]{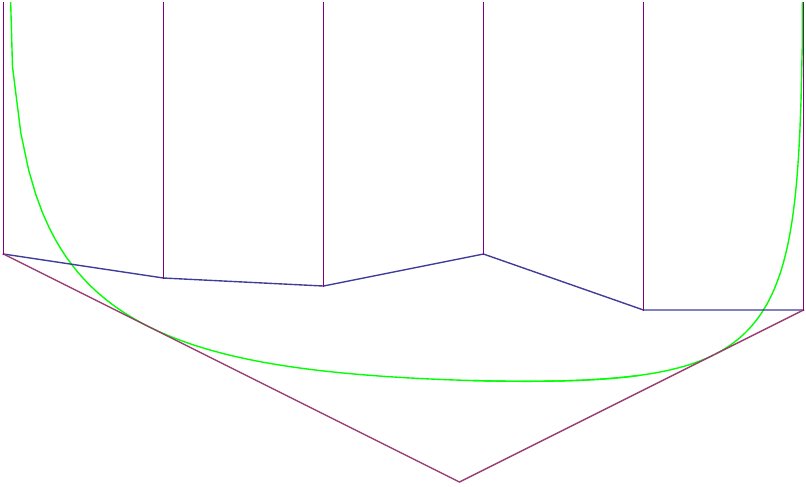}
\end{figure}

\subsection{Main results}

The present work studies the general case when the limiting back wall is an arbitrary continuous piecewise linear function of slopes in $[-1,1]$. This is a generalization of \cite{OR2} where slopes were $\pm 1$ (lattice slopes) and of \cite{BMRT} where slopes were in $(-1,1)$ (non-lattice slopes). 

Let $r_k,q_k,c_k,d_k,\lambda_k,b_{\lambda_k}$ and $B_{\lambda_k}$ be defined as in Section \ref{subsec:before}. In Section \ref{subsec:before} the number of corners in $\lambda_k$ was independent of $k$, however here there is no restriction. Consider the limit $k\rightarrow\infty$ when $r_k\rightarrow 0$, $r_k c_k\rightarrow C$, $r_k d_k\rightarrow D$, and the curves $B_{\lambda_k}(\tau)$ converge point-wise and uniformly to a continuous piecewise linear function $V(\tau)$ with slopes in $[-1,1]$. The following notation will be used throughout the paper. Let $V_0<\ldots<V_{j_\tau-1}<\gt<V_{j_\tau}<\ldots <V_n$ and $\gb_1,\ldots,\gb_n\in[-1,1]$ be such that $V(\gt)$ is linear for $\gt\in[V_{i-1},V_i]$ with slope $\gb_i$. Note that $j_\tau$ will denote the index where $V_{j_\tau-1}<\gt<V_{j_\tau}$. In order to simplify notation, from now on the index $\tau$ from $j_\tau$ will be dropped. For convenience also define $\gb_0=-1$ and $\gb_{n+1}=1$, and require that $\gb_i\neq\gb_{i+1}$ for all $i$.

\begin{figure}[ht]
\caption{\label{Fig:M-Bdry}A frozen boundary.}
\includegraphics[width=7cm]{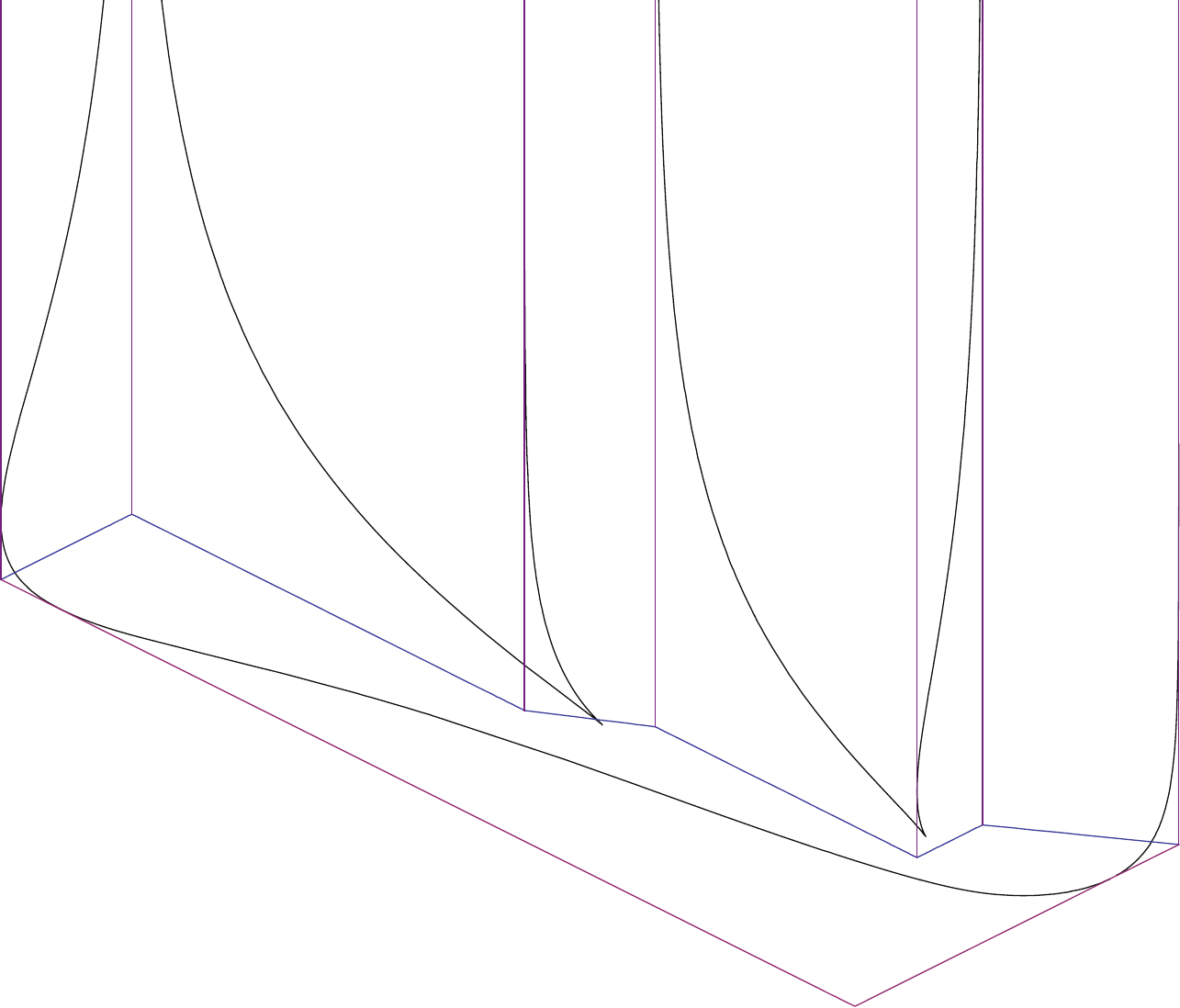}
\end{figure}
In such a scaling limit the statistical system of random skew plane partitions distributed according to \eqref{eq:distr_k} behaves as follows:
\begin{enumerate}
\item The system exhibits the limit shape phenomenon.
\item Near sections of the back wall where the slope is $\pm 1$ the disordered region is bounded above as in \cite{OR2}. Near sections with slope in $(-1,1)$ it grows infinitely high as in \cite{BMRT} (see Figure \ref{Fig:M-Bdry}). 
\item The number of connected components of the frozen boundary is one more than the number of outer corners where at least one of the slopes at the corner is a lattice slope. Note: outer corners are those where $V'(\tau-)<V'(\tau+)$.
\item The frozen boundary develops a cusp for each such outer corner.
\item The correlation functions are given by determinants with the incomplete beta kernel in the bulk, the Airy kernel on the frozen boundary, and the Piercey kernel near the cusps, as in \cite{OR2}.
\item High up, near the sections of the back wall which have non-lattice slopes, the correlation functions converge to the bead process as in \cite{BMRT}. High up, near the sections of the back wall which have lattice slopes, the region is frozen.
\item When approaching a corner and simultaneously going up along the frozen boundary, depending on the relative speed of the approach and the slopes of the back wall at the corner, either a frozen region or the bead process can be seen, with the density of the beads depending on the relative speed of the approach. 
\end{enumerate}

In this case, unlike \cite{OR1},\cite{OR3},\cite{OR2} and \cite{KO}, the boundary of the limit shape is not an algebraic curve.

Cusps on the frozen boundary can exhibit previously unknown behavior, which will be studied in future articles. One such example is when cusps approach each other as shown in Figure \ref{Fig:CuspsPass}. 

\begin{figure}[ht]
\caption{\label{Fig:CuspsPass}The frozen boundary when the back wall is given by $V = {-12.1, -12, -8.1, -8, -4, 0}$ and $\gb = {-0.9, -1, -0.9, 1, -1}$. The left picture is the right picture zoomed in inside the circle.}
\includegraphics[width=15cm]{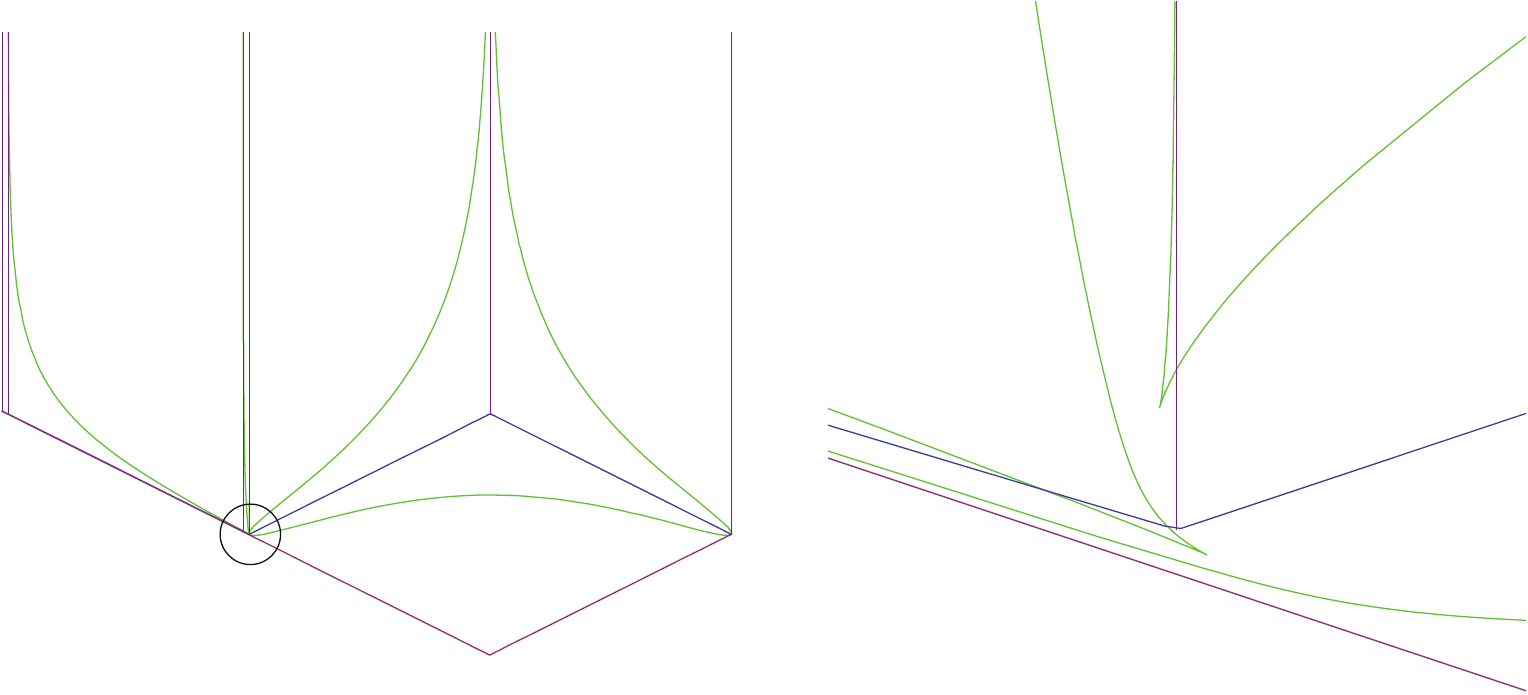}
\end{figure}

It is also shown in this paper that the local statistics in the scaling limit is independent of the intermediate approximations. Let $r_k$ and $q_k$ be as before. Let $t^k_1\in\ZZZ$, $t^k_2\in\ZZZ$, $h^k_1\in\frac 12\ZZZ$, and $h^k_2\in\frac 12\ZZZ$ be sequences such that
\begin{equation}
\label{eq:th}
\lim_{k\rightarrow\infty}t^k_1=\lim_{k\rightarrow\infty}t^k_2=\tau\text{ and }\lim_{k\rightarrow\infty}h^k_1=\lim_{k\rightarrow\infty}h^k_2=\chi.
\end{equation}
The following theorem is proven.
\begin{theorem} \label{thm:IndOfFam} 
Let $\lambda_k(t)$ and $\mu_k(t)$ be two sequences of partitions with corresponding scaled boundary functions $B_{\lambda_k}(\tau)$ and $A_{\lambda_k}(\tau)$. If in the limit $k\rightarrow \infty$ both $B_{\lambda_k}(\tau)$ and $A_{\lambda_k}(\tau)$ converge uniformly to the same piecewise linear function $V(\tau)$ corresponding to the back wall, then 
$$\lim_{k\rightarrow \infty}K_{\gl_k,q_k}((t^k_1,h^k_1),(t^k_2,h^k_2))= \lim_{k\rightarrow \infty} K_{\gm_k,q_k}((t^k_1,h^k_1),(t^k_2,h^k_2)).$$
\end{theorem}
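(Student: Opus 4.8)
The plan is to show that the entire $\gl_k$-dependence of the kernel $K_{\gl_k,q_k}$ enters only through the back wall $b_{\gl_k}$, and that after the correct $r_k$-scaling every $\gl_k$-dependent quantity converges to a functional of the limit $V$ alone; two approximating sequences with the same limit $V$ then produce identical limiting integrands. Inspecting \eqref{eq:main-corr2}, the partition $\gl_k$ enters in exactly two places: through the exponent prefactors $z^{\frac12 b_{\gl_k}(t_1^k)}$ and $w^{-\frac12 b_{\gl_k}(t_2^k)}$, and through the ratio $\Phi_{b_{\gl_k}}(z,t_1^k)/\Phi_{b_{\gl_k}}(w,t_2^k)$; the factor $\frac{\sqrt{zw}}{z-w}$ and the remaining powers of $z,w$ and of $h_i^k$ carry no dependence on $\gl_k$. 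Under the scaling of \eqref{eq:th} one has $r_k t_i^k\to\gt$, so $r_k\cdot\tfrac12 b_{\gl_k}(t_i^k)=\tfrac12 B_{\gl_k}(r_k t_i^k)\to\tfrac12 V(\gt)$, and the identical limit holds for $\gm_k$ since $A_{\gm_k}\to V$ as well. It therefore suffices to prove that the scaled logarithm $r_k\log\Phi_{b_{\gl_k}}(z,t_i^k)$ converges, for $z$ on a suitable contour, to a limit determined by $V$ only.

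The key step is to isolate the $\gl_k$-dependence inside $\log\Phi$. Writing $s(m)\in\{+1,-1\}$ for the microscopic slope of $b_{\gl_k}$ at the half-integer $m$, membership in $D^{\pm}$ is encoded by $\mathbb{1}[m\in D^+]=\tfrac12(1-s(m))$ and $\mathbb{1}[m\in D^-]=\tfrac12(1+s(m))$, so that
\begin{equation*}
\log\Phi_{+,b_{\gl_k}}(z,t)=\tfrac12\sum_{m>t}\log(1-zq_k^m)-\tfrac12\sum_{m>t}s(m)\log(1-zq_k^m),
\end{equation*}
and similarly for $\log\Phi_{-,b_{\gl_k}}$. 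The first, unsigned sum runs over \emph{all} half-integers and is completely independent of $\gl_k$; after multiplication by $r_k$ it is a Riemann sum converging to $\tfrac12\int_{\gt}^{\infty}\log(1-ze^{-\gs})\,d\gs$, the same for both sequences. All of the $\gl_k$-dependence is thus concentrated in the signed sums $\sum s(m)\log(1-zq_k^m)$.

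To these I would apply summation by parts. With $S(x)=\sum_{m\le x}s(m)$, a telescoping identity gives $S(x)=b_{\gl_k}(x)+O(1)$, and writing $g_k(m)=\log(1-zq_k^m)$, Abel summation yields
\begin{equation*}
\sum_{m>t}s(m)\,g_k(m)=-\sum_{m>t}S(m)\bigl(g_k(m+1)-g_k(m)\bigr)+(\text{boundary terms}).
\end{equation*}
Since $g_k(m+1)-g_k(m)=r_k\,\d_\gs\log(1-ze^{-\gs})\big|_{\gs=r_k m}+O(r_k^2)$ and $r_kS(m)=B_{\gl_k}(r_km)+O(r_k)$, after multiplying by $r_k$ the main sum is a Riemann sum of spacing $r_k$ for $B_{\gl_k}$ against a smooth weight, so that
\begin{equation*}
r_k\sum_{m>t}s(m)\,g_k(m)\longrightarrow-\int_{\gt}^{\infty}V(\gs)\,\d_\gs\log(1-ze^{-\gs})\,d\gs,
\end{equation*}
a functional of $V$ alone, and the $\Phi_-$ side contributes an analogous integral of $V$ against a smooth weight. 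Crucially this limit sees $V$ itself and not $V'$, so the uniform convergences $B_{\gl_k}\to V$ and $A_{\gm_k}\to V$ force identical limits for $r_k\log\Phi_{b_{\gl_k}}$ and $r_k\log\Phi_{b_{\gm_k}}$.

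\textbf{Main obstacle.} The essential difficulty is that uniform convergence $B_{\gl_k}\to V$ gives \emph{no} control on the microscopic slopes $s(m)$: the signed Riemann sum $\sum s(m)g_k(m)$ cannot be passed to the limit term by term, since doing so would demand $B_{\gl_k}'\to V'$, which fails. Summation by parts is exactly the device that moves the derivative off the uncontrolled slope profile and onto the smooth weight, leaving behind $B_{\gl_k}$, which is all that uniform convergence governs. The remaining bookkeeping is benign and likewise reduces to uniform convergence: the $O(1)$ discrepancy $S(m)-b_{\gl_k}(m)$ contributes $r_k\cdot O(1)\cdot\mathrm{Var}(g_k)=O(r_k)\to0$; the boundary terms are $O(1)$ but depend on $\gl_k$ only through the endpoint values $B_{\gl_k}(r_kt_i^k)\to V(\gt)$, hence are themselves $V$-determined; and the $z,w$-contours $C_z,C_w$ --- whose admissible radii are pinned by the poles of $\Phi_{b_{\gl_k}}$ and so move with $k$ --- must be chosen to converge to common limiting contours on which $r_k\log\Phi_{b_{\gl_k}}$ converges uniformly, so that dominated convergence applies to the double integral in \eqref{eq:main-corr2}. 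Granting these, the integrands of $K_{\gl_k,q_k}$ and $K_{\gm_k,q_k}$ converge to the same $V$-determined limit uniformly on the common contours, and the two kernel limits coincide, establishing Theorem \ref{thm:IndOfFam}.
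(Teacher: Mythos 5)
Your central technical device --- encoding membership in $D^{\pm}$ through the microscopic slope and then using Abel summation to move the discrete derivative off the uncontrolled slope profile and onto the smooth weight $\log(1-ze^{-\sigma})$, so that only $B_{\gl_k}$ itself (which converges uniformly) survives --- is exactly the mechanism of Lemma \ref{lem-IndOfFam} in Appendix \ref{sec:S_appendix}, where the identity $r_k\frac 12(1-B_{\gl_k}'(M))=\frac 12\bigl([M+r_k-B_{\gl_k}(M+r_k)]-[M-B_{\gl_k}(M)]\bigr)$ plays the role of your $S(x)=b_{\gl_k}(x)+O(1)$. Up to that point your argument matches the paper's.

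The gap is in the final step. Passing to the limit inside the double integral by dominated convergence cannot work: by \eqref{eq:K-to-S} the integrand contains the factor $e^{(S_{\gt,\gq}(z)-S_{\gt,\gq}(w))/r_k}$, which diverges exponentially wherever $\Re S(z)>\Re S(w)$ and vanishes exponentially wherever $\Re S(z)<\Re S(w)$; uniform convergence of $r_k\ln\Phi_{b_{\gl_k}}$ controls only the exponential rate and says nothing about convergence of the integrand itself. For the same reason you cannot conclude by comparing the two integrands directly: $\ln\Phi_{b_{\gl_k}}(z,t)-\ln\Phi_{b_{\gm_k}}(z,t)=o(1/r_k)$ is compatible with the ratio $\Phi_{b_{\gl_k}}/\Phi_{b_{\gm_k}}$ tending to $0$ or $\infty$, and likewise $z^{\frac12 b_{\gl_k}(t_1^k)-\frac12 b_{\gm_k}(t_1^k)}$ need not tend to $1$, since the exponent is only $o(1/r_k)$. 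The limit of the kernel is not the integral of a limiting integrand; it is produced by the steepest descent analysis of Section \ref{sec:CorrKer}: one deforms $C_z,C_w$ so that $\Re S(z)\leq\Re S(w)$ with equality only at the two non-real critical points, picks up the residue of $\frac 1{z-w}$ as the contours cross, shows the remaining double integral vanishes in the limit, and identifies the limit of the residue integral as the incomplete Beta kernel \eqref{eq:CorrKer}. Theorem \ref{thm:IndOfFam} then follows because this explicit limit depends on $\gl_k$ only through the critical points of $S_{\gt,\gq}$, which by \eqref{eq:S_integral} is a functional of $V$ alone. Your proposal is missing this entire saddle-point step, which is where the theorem is actually proved.
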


In other words, $\lim_{k\rightarrow \infty}K_{\gl_k,q_k}((t^k_1,h^k_1),(t^k_2,h^k_2))$ only depends on the limiting back wall $V(\tau)$. It does not depend on the intermediate steps $\lambda_k(t)$. In particular this is true for the one point function $\lim_{k\rightarrow\infty}\rho_{\lambda_k,q_k}(t^k_1,h^k_1)=\lim_{k\rightarrow \infty}K_{\gl_k,q_k}((t^k_1,h^k_1),(t^k_1,h^k_1))$, which gives the slope of the limit shape at the point $(\tau,\chi)$. Thus, it is obtained that the limit shape is also independent of the intermediate steps.

\subsection{Outline of the structure of the paper}
One of the main steps in what follows is to understand the asymptotics of $\ln \Phi_{b_{\lambda_k}}$. This is done in Appendix \ref{sec:S_appendix}. The reason for separating this from the rest of the paper is that the argument actually holds in greater generality; one can consider back walls where $V(\tau)$ is an arbitrary continuous Lipshitz function with constant 1.

Section \ref{sec:IndepOfSeq} studies the critical points of the asymptotically leading term in the integral formula giving the correlation kernel. The main result is that the number of non-real complex critical points is $0$ or $2$. The general strategy for proving this is the same as that from \cite{BMRT}; obtain the number of non-real complex critical points when $\chi\rightarrow\infty$ and from this deduce the result for finite $\chi$. However, the argument in \cite{BMRT} could not be used here, since it heavily relies on the frozen boundary having a certain simple shape with only one connected component (see Figure \ref{Fig:BMRT-frozen}). 

The results of Section \ref{sec:IndepOfSeq} are used in Section \ref{sec:CorrKer} to apply the saddle point method and obtain the limit of the correlation kernel in the bulk. 

Section \ref{sec:TheBoundary} gives a detailed study of the frozen boundary. Such a study has not been carried out before.

\subsection{Acknowledgements}
I am very grateful to Peter Tingley for a suggestion on which the proof in Appendix \ref{sec:S_appendix} is based. I am very grateful to Cedric Boutillier and to an anonymous referee for many suggestions to improve the presentation of this paper. I am very grateful to Nicolai Reshetikhin for his guidance. I am also very grateful to Alexei Borodin, Cedric Boutilier and Peter Tingley for many useful discussions on the subject. Lastly, I would like to thank the organizers of the Park City Mathematics Institute Summer School on statistical mechanics, in July 2007, where much of Appendix \ref{sec:S_appendix} was written. 

\section{Critical points of the asymptotically leading term in the integral formula giving the correlation kernel}
\label{sec:IndepOfSeq}
\subsection{The function \texorpdfstring{$S(z)$}{S(z)}}
Lemma \ref{lem-IndOfFam} gives an asymptotical formula for $\Phi_{b_{\lambda_k}}(z,t)$ defined in \eqref{eq:Phis}, when the partitions are not restricted to a box. If they are bounded to a box in such a way that $\gt\in[V_0,V_n]$, then from the definition of $\Phi_{b_{\lambda_k}}(z,t)$ it is easy to see that if $\tilde b_{\lambda_k}(t)$ are defined by the conditions
$$\begin{array}{lcc}
\tilde b_{\lambda_k}(t)&\text{ is}&\text{continuous},\\
\tilde b'_{\lambda_k}(t)=&-1,&r_kt<V_0,\\
\tilde b_{\lambda_k}(t)=&b_{\lambda_k}(t),&V_0<r_kt<V_n,\\
\tilde b'_{\lambda_k}(t)=&1,&V_n<r_kt,
\end{array}$$
then $\Phi_{\tilde b_{\lambda_k}}(z,t)=\Phi_{b_{\lambda_k}}(z,t)$. Moreover, in the limit $r_k\rightarrow 0$, the sequence $\tilde B_{\lambda_k}(\tau)$ (i.e. the scaled $\tilde b_{\lambda_k}(t)$) converges to the function $\tilde V(\gt)$ defined by the conditions
$$\begin{array}{lcc}
\tilde V(\gt)&\text{ is}&\text{continuous},\\
\tilde V'(\gt)=&-1,&\gt<V_0,\\
\tilde V(\gt)=&V(\gt),&V_0<\gt<V_n,\\
\tilde V'(\gt)=&1,&V_n<\gt.
\end{array}$$

Let $t^k_i,h^k_i,i\in\{1,2\}$ be as in \eqref{eq:th}. The previous two statements, together with Corollary \ref{cor:limPhi} give that for plane partitions restricted to a box in such a way that $\gt\in[V_0,V_n]$,
\begin{multline*}
\lim_{k\rightarrow \infty}r_k\ln\Phi_{b_{\lambda_k}} (z,t^k_i)=
\lim_{k\rightarrow \infty}r_k\ln\Phi_{\tilde b_{\lambda_k}} (z,t^k_i)=\\
-\int_{V_0}^{\tau} -\frac{1}{2}(1+V'(M))\ln\(1-e^{M}z^{-1}\)dM +\int_\tau^{V_n}-\frac{1}{2}(1-V'(M)) \ln\(1-e^{-M}z\)dM.
\end{multline*}

Now, from the integral formula \eqref{eq:main-corr2} conclude that in the limit $r_k\rightarrow 0$, $r_kt^k_i \rightarrow\gt$, $r_kh^k_i\rightarrow\gq$, the correlation kernel can be written as
\begin{equation}
\label{eq:K-to-S}
K_{\lambda_k,q_k}((t^k_1,h^k_1),(t^k_2,h^k_2))=\frac{1}{(2\pi \ii)^2}\int\int e^{\frac{S_{\gt,\gq}(z)-S_{\gt,\gq}(w)}{r_k}+O(1)} \frac{1}{z-w}dwdz,
\end{equation}
where the contours of integration are as in \eqref{eq:main-corr2}, and $S_{\gt,\gq}(z)$ is defined as 
\begin{multline}
\label{eq:S_integral}
S_{\gt,\gq}(z):=-\int_{V_0}^{\tau} -\frac{1}{2}(1+V'(M))\ln\(1-e^{M}z^{-1}\)dM
\\+\int_\tau^{V_n}-\frac{1}{2}(1-V'(M))\ln\(1-e^{-M}z\)dM-\ln(z)\(\gq-\frac 12 V(\gt)\).
\end{multline}

\subsection{Number of complex critical points of \texorpdfstring{$S(z)$}{S(z)}}
\label{sec:2or0critPts}
In Section \ref{sec:CorrKer} the asymptotics of the correlation kernel will be studied by the saddle point method. Since the asymptotically leading term of the correlation kernel is given by the function $S_{\gt,\gq}(z)$, to use the saddle point method the critical points of $S_{\gt,\gq}(z)$ need to be studied. The goal of this section is to show that $S_{\gt,\gq}(z)$ has exactly 2 or 0 non-real complex critical points. The number of such critical points depends on the position of the point $(\gt,\gq)$. 

To simplify formulas, from now on the subscripts in $S_{\gt,\gq}(z)$ will be omitted unless that might cause ambiguities.

\subsubsection{Formulas for \texorpdfstring{$zS'$}{zS'} and \texorpdfstring{$z(zS')'$}{z(zS')'}}
In the study of critical points of $S(z)$ formulas for  $z\frac{dS(z)}{dz}$ and $z\frac{d}{dz}(z\frac{dS(z)}{dz})$ will be needed. The reason for working with these functions instead of $S'$ and $S''$ is that expressions for $S'$ and $S''$ are more complicated. 

It follows from \eqref{eq:S_integral} that
\begin{align*}
\nonumber
z\frac{dS(z)}{dz}=&\int_{V_0}^{\gt}\frac 12(1+V'(M))\frac{e^Mz^{-1}}{1-e^Mz^{-1}}dM+ \int_{\gt}^{V_n}\frac 12(1-V'(M))\frac{e^{-M}z}{1-e^{-M}z}dM
\\&-(\gq-\frac 12V(\gt))
\\=&-\sum_{i=1}^{j-1}\frac 12(1+\gb_i) \ln\(\frac{1-e^{V_i}z^{-1}}{1-e^{V_{i-1}}z^{-1}}\)
-\frac 12(1+\gb_j) \ln\(\frac{1-e^{\gt}z^{-1}}{1-e^{V_{j-1}}z^{-1}}\)
\\&+\frac 12(1-\gb_j) \ln\(\frac{1-e^{-V_j}z}{1-e^{-\gt}z}\)
+\sum_{i=j+1}^{n}\frac 12(1-\gb_i) \ln\(\frac{1-e^{-V_i}z}{1-e^{-V_{i-1}}z}\) \\&-(\gq-\frac 12V(\gt)).
\end{align*}

Using $V(V_i)=V(V_{i-1})+\gb_i(V_i-V_{i-1})$, $z\frac{dS(z)}{dz}$ can be rewritten as follows:
\begin{align}
z\frac{dS(z)}{dz}=
&\nonumber\label{eq:Sp}
-\sum_{i=1}^{j-1}\frac 12(1+\gb_i) \ln\(\frac{ze^{-V_i}-1}{ze^{-V_{i-1}}-1}\)
-\frac 12(1+\gb_j) \ln\(\frac{ze^{-\gt}-1}{ze^{-V_{j-1}}-1}\)
\\&+\frac 12(1-\gb_j) \ln\(\frac{ze^{-V_j}-1}{ze^{-\gt}-1}\)
+\sum_{i=j+1}^{n}\frac 12(1-\gb_i) \ln\(\frac{ze^{-V_i}-1}{ze^{-V_{i-1}}-1}\)
\\\nonumber &-\gq-\frac 12\gt+\frac 12(V(V_0)+V_0).
\end{align}

From here it is easy to obtain
\begin{equation}
\label{eq:Spp}
z\frac d{dz}\(z\frac d{dz} S(z)\)
=\frac 12(1+\gb_1)\frac 1{z-e^{V_0}}+\sum_{i=1}^{n-1}\frac 12(\gb_{i+1}-\gb_i)\frac 1{z-e^{V_i}}+\frac 12(1-\gb_n)\frac 1{z-e^{V_n}}-\frac 1{z-e^{\gt}}.
\end{equation}

\subsubsection{Critical points of $S(z)$ are away from $[e^{V_{i-1}},e^{V_{i}}]$ if $\gb_i\neq\pm 1$}
\label{subsec:ValidZ-s}
If $z_{cr}$ is a critical point of $S(z)$, then $z_{cr}\frac{dS(z_{cr})}{dz}=0$. 
\begin{lemma}
\label{lem:badIntervals}
If $i\neq j$ is such that $\gb_i\neq\pm 1$, then there does not exist $z_0\in[e^{V_{i-1}},e^{V_{i}}]$ such that $\lim_{z\rightarrow z_0}z\frac{dS(z)}{dz}=0$.
\end{lemma}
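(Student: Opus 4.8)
The plan is to work directly with the closed-form expression \eqref{eq:Sp} for $z\frac{dS}{dz}$ and to treat the interior of the interval and its two endpoints separately. Throughout I use that the hypothesis $i\neq j$ forces $e^{\gt}\notin[e^{V_{i-1}},e^{V_i}]$: if $i<j$ then $\gt>V_{j-1}\geq V_i$ so $e^{\gt}>e^{V_i}$, and if $i>j$ then $\gt<V_j\leq V_{i-1}$ so $e^{\gt}<e^{V_{i-1}}$. Hence the factors built from $\gt$ do not degenerate on the interval, and the breakpoints $e^{V_0}<\cdots<e^{V_n}$ are all distinct from $e^{\gt}$ in the relevant range.

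For a point $z_0$ in the open interval $(e^{V_{i-1}},e^{V_i})$, the key observation is that every logarithm appearing in \eqref{eq:Sp} has the form $\ln\frac{ze^{-V_a}-1}{ze^{-V_b}-1}$ with $V_a>V_b$, and such a term is real at $z_0$ unless exactly one of $z_0e^{-V_a}-1$ and $z_0e^{-V_b}-1$ is negative, i.e. unless $e^{V_b}<z_0<e^{V_a}$. First I would check, by comparing $z_0$ against the ordered reals $e^{V_0},\dots,e^{V_n},e^{\gt}$, that among all the logarithmic terms the only one whose argument is a negative real at $z_0$ is the single term indexed by $i$: the term $-\frac12(1+\gb_i)\ln\frac{ze^{-V_i}-1}{ze^{-V_{i-1}}-1}$ from the first sum when $i<j$, and the term $\frac12(1-\gb_i)\ln\frac{ze^{-V_i}-1}{ze^{-V_{i-1}}-1}$ from the last sum when $i>j$. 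Letting $z\to z_0$ from the upper half-plane, the numerator $ze^{-V_i}-1$ approaches the negative real axis from above (argument $\to\pi$) while the denominator $ze^{-V_{i-1}}-1$ approaches the positive real axis (argument $\to 0$), so this logarithm contributes imaginary part $\pi$; every other logarithm and the constant $-\gq-\frac12\gt+\frac12(V(V_0)+V_0)$ are real. Thus the boundary value of $z\frac{dS}{dz}$ has imaginary part $-\frac{\pi}{2}(1+\gb_i)$ when $i<j$ and $\frac{\pi}{2}(1-\gb_i)$ when $i>j$, which is nonzero precisely because $\gb_i\neq\mp 1$. Approaching from the lower half-plane yields the negative of this value, so no approach to $z_0$ can drive $z\frac{dS}{dz}$ to the real value $0$.

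For the endpoints $z_0=e^{V_{i-1}}$ and $z_0=e^{V_i}$ I would instead invoke \eqref{eq:Spp}. Near a breakpoint $e^{V_\ell}$ the function $z\frac{d}{dz}\bigl(z\frac{dS}{dz}\bigr)$ has a simple pole with residue $\frac12(\gb_{\ell+1}-\gb_\ell)$ for $1\leq \ell\leq n-1$, and residues $\frac12(1+\gb_1)$ at $e^{V_0}$ and $\frac12(1-\gb_n)$ at $e^{V_n}$. Integrating, $z\frac{dS}{dz}$ acquires a genuine logarithmic singularity at the endpoint with the same nonzero coefficient: at the interior breakpoints the residue is nonzero by the standing hypothesis $\gb_\ell\neq\gb_{\ell+1}$, while the residues $\frac12(1+\gb_1)$ and $\frac12(1-\gb_n)$ (which are the relevant ones exactly in the cases $i=1$ and $i=n$) are nonzero precisely by the assumption $\gb_i\neq\pm1$. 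Hence $\bigl|z\frac{dS}{dz}\bigr|\to\infty$ at each endpoint, so the limit cannot equal $0$ there either.

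The only genuine work is the bookkeeping in the middle paragraph: tracking the sign of $z_0e^{-V_a}-1$ for every index so as to confirm that exactly one logarithm degenerates at $z_0$. I expect this sign analysis to be the main point requiring care. The two hypotheses enter cleanly: $i\neq j$ keeps the $e^{\gt}$-factor off the interval so that it contributes no stray imaginary part, and $\gb_i\neq\pm1$ is exactly what makes the surviving imaginary part nonzero in the interior case and the extreme-endpoint residues nonzero in the boundary case.
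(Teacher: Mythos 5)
Your proposal is correct and follows essentially the same route as the paper: the core step in both is that for $z$ near a point of the open interval $(e^{V_{i-1}},e^{V_i})$, exactly one logarithm in \eqref{eq:Sp} has argument on the branch cut, contributing imaginary part $\pm\pi$ times its coefficient $-\frac12(1+\gb_i)$ or $\frac12(1-\gb_i)$, which is nonzero precisely because $\gb_i\neq\pm1$, while all other terms have imaginary part $O(\varepsilon)$. Your endpoint treatment via the simple poles of \eqref{eq:Spp} is a slightly more explicit version of the paper's one-line observation that $e^{V_{i-1}}$ and $e^{V_i}$ are singular points of $z\frac{dS}{dz}$, so the limit there cannot be $0$.
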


\begin{proof}
Suppose $z_0=x+\ii\varepsilon$, where $x\in(e^{V_{i-1}},e^{V_{i}})$ and $|\varepsilon|\ll1$. For such $z_0$ all terms in \eqref{eq:Sp}, except perhaps one, have imaginary parts close to zero. More precisely, $$\Im\left(z_0\frac{dS(z_0)}{dz}-c\ln\(\frac{z_0e^{-V_i}-1}{z_0e^{-V_{i-1}}-1}\)\right) = O(\varepsilon),$$
where $c$ is the coefficient of $\ln(\frac{z_0e^{-V_i}-1}{z_0e^{-V_{i-1}}-1})$ in \eqref{eq:Sp}. Since $\ln$ is the branch of the logarithm with an imaginary part in $(-\pi,\pi)$ with a cut along $\mathbb{R}_{-}$, then $$\Im\left(c\ln\(\frac{z_0e^{-V_i}-1}{z_0e^{-V_{i-1}}-1}\)\right) = \pm c \pi + O(\varepsilon),$$ which in turn implies $\Im(z_0\frac{dS(z_0)}{dz}) = \pm c \pi + O(\varepsilon)$. From \eqref{eq:Sp} it follows that $c=\frac 12(1+\gb_i)$ or $c=\frac 12(1-\gb_i)$. If $\gb_i\neq\pm 1$, then $c\neq 0$. Thus, there is no $z_0\in(e^{V_{i-1}},e^{V_{i}})$ such that $\lim_{z\rightarrow z_0}z\frac{dS(z)}{dz}=0$. The only remaining points are $z_0=e^{V_{i-1}}$ and $z_0=e^{V_i}$, but these are singular points of $z\frac{dS(z)}{dz}$ and therefore $\lim_{z\rightarrow z_0}z\frac{dS(z)}{dz}\neq0$.
$\qed$
\end{proof}

Notice, that if $\gb_i=1$ and $i\geq j$, or $\gb_i=-1$ and $i\leq j$, the coefficient $c$ is zero. In this case $\Im(z_0\frac{dS(z_0)}{dz}) = O(\varepsilon)$, and there may be critical points of $S(z)$ in the interval $(e^{V_{i-1}},e^{V_{i}})$. 

\subsubsection{Critical points when \texorpdfstring{$\gq\rightarrow\infty$}{chi is large}}
\label{subsec:NumberOfCriticalPointsAtInfty}

\begin{lemma}
\label{lem:largeChi}
Fix $\gt\in(V_{j-1},V_j)$. For sufficiently large $\gq$, $S_{\gt,\gq}(z)$ has no non-real complex critical points if $\gb_j=\pm 1$ and exactly two non-real complex critical points (which will be complex conjugates) if $\gb_j\neq\pm 1$.
\end{lemma}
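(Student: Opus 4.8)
The plan is to recast the critical-point equation as $G(z)=\gq$ for a function $G$ independent of $\gq$, and then to locate its solutions in the regime $\gq\to+\infty$ by localizing near the singular points on the positive real axis. Write $a$ for a generic element of the singularity set $\{e^{V_0},\dots,e^{V_n},e^{\gt}\}$. From \eqref{eq:Sp}, $z\,dS/dz$ is, up to the additive constant $-\gq-\tfrac12\gt+\tfrac12(V(V_0)+V_0)$, a linear combination of the logarithms $\ln(ze^{-V_i}-1)$ and $\ln(ze^{-\gt}-1)$; collecting coefficients (equivalently, reading the residues in \eqref{eq:Spp}) gives
\[
z\frac{dS}{dz}=\sum_a c_a\ln(z-a)-\gq+\mathrm{const},
\]
with $c_{e^{V_0}}=\tfrac12(1+\gb_1)$, $c_{e^{V_i}}=\tfrac12(\gb_{i+1}-\gb_i)$ for $1\le i\le n-1$, $c_{e^{V_n}}=\tfrac12(1-\gb_n)$, and $c_{e^{\gt}}=-1$. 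A telescoping computation shows $\sum_a c_a=0$. Setting $G(z):=z\,dS/dz+\gq$, which no longer depends on $\gq$, a point $z$ is a critical point of $S$ exactly when $G(z)=\gq$.

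First I would show that for large $\gq$ all critical points cluster at the singularities. Because $\sum_a c_a=0$, the function $G$ has finite limits as $z\to0$ and $z\to\infty$, so $G$ is bounded on $\CCC$ with small disks around the points $a$ removed. Hence for $\gq$ large enough every solution of $G(z)=\gq$ lies in one of these disks. Near a singularity $a$ one has $G(z)=c_a\ln(z-a)+H_a(z)$ with $H_a$ holomorphic at $a$, so $\Re G(z)=c_a\ln|z-a|+O(1)$ tends to $+\infty$ as $z\to a$ only when $c_a<0$; only such singularities can carry solutions for large $\gq$.

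The heart of the argument is the local count at a singularity $a$ with $c_a<0$. Fixing the principal branches, with all cuts along the real axis to the left of each $a$, makes $G$ single valued and holomorphic off the real axis. As $z\to a$ the imaginary part satisfies $\Im G(z)=c_a\arg(z-a)+\pi R^+(a)+o(1)$, where $R^+(a):=\sum_{b>a}c_b$ records the $+\ii\pi$ jumps of the logarithms $\ln(z-b)$ with $b>a$. Thus the branch of the level set $\{\Im G=0\}$ through $a$ is tangent to the ray $\arg(z-a)=\pi R^+(a)/|c_a|$, and along it $\Re G\to+\infty$; by monotonicity of $\Re G$ and the intermediate value theorem this branch carries exactly one solution of $G=\gq$ for all large $\gq$. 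That solution is non-real exactly when the ray points into the open upper half-plane, i.e. when $0<R^+(a)<-c_a$; by the symmetry $\overline{S(\bar z)}=S(z)$ it is then accompanied by its complex conjugate in the lower half-plane.

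It remains to evaluate $R^+(a)$. For $a=e^{\gt}$ one has $c_{e^{\gt}}=-1$ and, telescoping, $R^+(e^{\gt})=\sum_{i=j}^{n}c_{e^{V_i}}=\tfrac12(1-\gb_j)$, so the condition $0<R^+<1$ becomes exactly $-1<\gb_j<1$. For every $a=e^{V_i}$ with $c_{e^{V_i}}<0$ one finds $R^+(e^{V_i})=\tfrac12(1-\gb_{i+1})$ when $i\ge j$ and $R^+(e^{V_i})=-\tfrac12(1+\gb_{i+1})$ when $i\le j-1$; the first never satisfies $R^+<-c_{e^{V_i}}$ (it would force $\gb_i>1$), and the second is $\le0$, so no $e^{V_i}$ produces a non-real critical point. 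Consequently, for large $\gq$ the only non-real critical points are the conjugate pair emanating from $e^{\gt}$, present iff $\gb_j\neq\pm1$, which is the assertion. I expect the main difficulty to lie in the rigorous form of the third step — guaranteeing that each admissible ray carries exactly one solution for large $\gq$, that distinct solutions do not coalesce at $e^{\gt}$, and that the clustering in the second step is uniform — rather than in the bookkeeping of the suffix sums.
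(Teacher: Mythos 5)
Your argument is correct in its conclusions and all the bookkeeping checks out (the coefficients $c_a$ agree with the residues one reads off from \eqref{eq:Sp}, $\sum_a c_a=0$, and the suffix sums $R^+(e^{\gt})=\tfrac12(1-\gb_j)$, $R^+(e^{V_i})=\tfrac12(1-\gb_{i+1})$ for $i\ge j$ and $-\tfrac12(1+\gb_{i+1})$ for $i\le j-1$ are right); your ray angle $\pi R^+(e^\gt)/|c_{e^\gt}|=\tfrac{\pi}{2}(1-\gb_j)$ is consistent with the explicit asymptotics in Lemma \ref{lem:critPtsLargeChi}(a). The route, however, is genuinely different from the paper's. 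The paper localizes via the real part and then analyzes the imaginary part \eqref{eq:ImSp} as a weighted sum of angles, invoking the elementary geometric Lemma \ref{lem:geom} to force reality of the solution near each singularity where one of the two adjacent coefficients vanishes, and it defers the existence of the conjugate pair near $e^\gt$ to the methods of \cite{BMRT}. You instead package $z\,dS/dz$ as a single log-linear combination, use $\sum_a c_a=0$ to get boundedness away from the singularities, and decide reality by the direction of the branch of $\{\Im G=0\}$ through each singularity with $c_a<0$. Your version is more unified (one criterion, $0<R^+(a)<-c_a$, handles every singularity, and it produces the conjugate pair and its location in one stroke), at the price of needing a careful local complex-analytic argument where the paper needs only planar trigonometry.

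The one genuine gap is exactly the point you flag, and it is not merely a uniformity issue: in the boundary cases $R^+(a)=0$ or $R^+(a)=-c_a$ (which are precisely the cases $\gb_j=\pm1$ at $a=e^\gt$, and all the inner corners $e^{V_i}$), tangency of the level-set branch to the real axis does not by itself exclude a non-real solution creeping along that branch — this is the phenomenon Lemma \ref{lem:geom} is built to rule out, via the ratios $\mu_l/\alpha\to0$, $\nu_l/\alpha\to0$. To close it within your framework, note that $H_a$ (the continuation of $\sum_{b\ne a}c_b\ln(z-b)$ from the upper half-plane) has $\Im H_a\equiv\pi R^+(a)$ on a real punctured neighborhood of $a$; hence $\Psi(z):=(z-a)e^{H_a(z)/c_a}$ is analytic, injective near $a$, and \emph{real} on the real axis there exactly when $R^+(a)\in\{0,-c_a\}$, so the unique preimage of the small positive number $e^{\gq/c_a}$ lies on the real axis itself, not merely near it. With that supplement (plus conjugation symmetry to transfer the count to the lower half-plane), your proof is complete and, if anything, slightly more self-contained than the paper's, which outsources part (the existence and asymptotics of the pair when $\gb_j\ne\pm1$) to \cite{BMRT}.
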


Let us first prove the following lemma, which will be used in the proof of Lemma \ref{lem:largeChi}.
\begin{lemma}
\label{lem:geom}
Let $\{x_i\}_{i=1}^{m_1}$, $\{y_i\}_{i=1}^{m_2}$, $\{\xx_i\}_{i=1}^{m_1-1}$, $\{\yy_i\}_{i=1}^{m_2-1}$, $w$ and $\ww\neq 0$ be real numbers such that
$$x_{m_1}<\ldots<x_2<x_1<w<y_1<y_2<\ldots < y_{m_2}.$$
Let $\zeta$ be a complex number with $\Im{\zeta}\geq 0$. Define
\begin{equation*}
\begin{array}{c}
\mu_i=\mathrm{angle}(\zeta-x_{i+1},\zeta-x_i),\ \forall i=1,2,\ldots,m_1-1,\\
\nu_i=\mathrm{angle}(\zeta-y_i,\zeta-y_{i+1}),\ \forall i=1,2,\ldots,m_2-1,
\end{array}
\end{equation*}
and
$$
\alpha=\mathrm{angle}(\zeta-x_1,\zeta-w).
$$
\begin{figure}[ht]
\caption{\label{Fig:geom1} Setup of Lemma \ref{lem:geom}.}
\includegraphics[width=12cm]{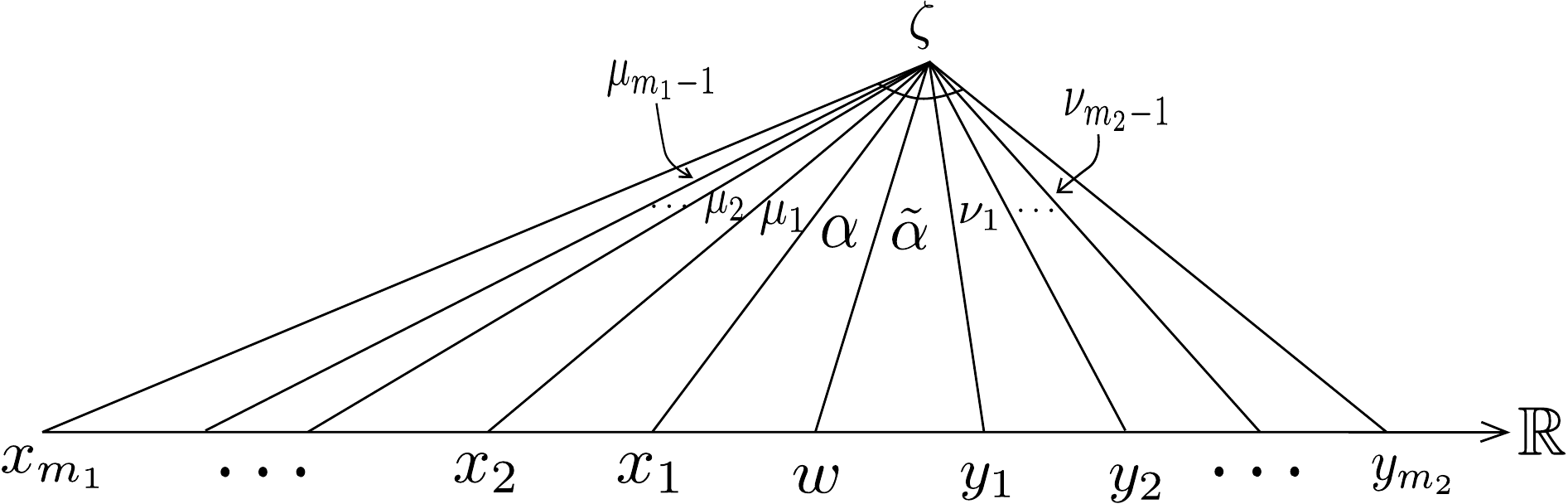}
\end{figure}
(See Figure \ref{Fig:geom1} for an illustration of the setup on the complex plane.) There exists $\varepsilon>0$ such that if $|\zeta-w|<\varepsilon$, then
\begin{equation}
\label{eq:angles}
\sum_{i=1}^{m_1-1} \xx_i\mu_i + \ww\alpha +\sum_{i=1}^{m_2-1} \yy_i\nu_i=0
\end{equation}
if and only if $\zeta\in\mathbb{R}$ and $\zeta>w$.

The same holds if $\alpha$ is replaced by $\tilde{\alpha}:=\mathrm{angle}(\zeta-w,\zeta-y_1)$ and $\zeta>w$ is replaced by $\zeta<w$.
\end{lemma}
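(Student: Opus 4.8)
The plan is to recognize the left-hand side of \eqref{eq:angles} as the imaginary part of a single (multivalued) holomorphic function, and then to strip off the one genuinely singular contribution, namely the term attached to the point $w$ which $\zeta$ approaches. First I would rewrite each angle as the imaginary part of a logarithm, using $\mathrm{angle}(a,b)=\Im\ln(b/a)$ with the principal branch (valid because $\zeta$ is close to $w$, so the vectors $\zeta-x_i$, $\zeta-y_i$, $\zeta-w$ all stay in fixed sectors). This gives $\sum \xx_i\mu_i+\ww\alpha+\sum\yy_i\nu_i=\Im G(\zeta)$, where
$$
G(\zeta)=\sum_{i=1}^{m_1-1}\xx_i\ln\frac{\zeta-x_i}{\zeta-x_{i+1}}+\ww\ln\frac{\zeta-w}{\zeta-x_1}+\sum_{i=1}^{m_2-1}\yy_i\ln\frac{\zeta-y_{i+1}}{\zeta-y_i}.
$$

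Next I would peel off the singular term at $w$. Writing $F=\Im G=\ww\arg(\zeta-w)+\Im\tilde G$, the function $\tilde G(\zeta):=G(\zeta)-\ww\ln(\zeta-w)$ is holomorphic on the disk $|\zeta-w|<R$ with $R=\min(w-x_1,\,y_1-w)$, since its only finite singularities sit at the $x_i\le x_1$ and the $y_i\ge y_1$. The key observation is that for real $\xi\in(x_1,y_1)$ every argument of a logarithm in $\tilde G$ is a positive real: the ratios $\tfrac{\xi-x_i}{\xi-x_{i+1}}$ and $\tfrac{\xi-y_{i+1}}{\xi-y_i}$ are quotients of like-signed reals, and $\xi-x_1>0$. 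Hence $\tilde G$ is real on this segment, and by the Schwarz reflection principle its Taylor coefficients $a_n$ at $w$ are real. Writing $\zeta-w=\rho e^{\ii\phi}$ with $\phi\in[0,\pi]$, this yields
$$
H(\zeta):=\Im\tilde G(\zeta)=\sum_{n\ge1}a_n\rho^n\sin(n\phi),\qquad F=\ww\phi+H,
$$
so that $H$ vanishes at $w$ and identically along the real axis.

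Finally I would run a monotonicity argument in the angular variable. Differentiating, $\partial_\phi F=\ww+\partial_\phi H$, and reality of the $a_n$ gives $|\partial_\phi H|\le C\rho$ uniformly in $\phi\in[0,\pi]$ once $\rho$ is below a fixed fraction of $R$. Thus for $\varepsilon$ small enough $\partial_\phi F$ keeps the sign of $\ww\ne0$ throughout $\phi\in[0,\pi]$, so $F(\rho,\cdot)$ is strictly monotone. A direct evaluation shows $F=0$ on the boundary ray $\phi=0$ (there all $\mu_i=\nu_i=0$ and $\alpha=0$), so strict monotonicity forces $F\ne0$ for every $\phi\in(0,\pi]$; this gives the stated equivalence, since $\phi=0$ with $0<\rho<\varepsilon$ is exactly $\zeta\in\RRR,\ \zeta>w$. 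For the $\tilde\alpha$ variant the identical computation applies with the singular term replaced by $-\ww\arg(\zeta-w)$: the sign of $\partial_\phi F$ flips and the real-axis zero now sits at $\phi=\pi$, giving $F=0$ precisely on $\zeta\in\RRR,\ \zeta<w$.

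The main obstacle is the small-$\phi$ regime: there the leading term $\ww\phi$ is itself small and a priori comparable to the remainder $H=O(\rho)$, so one cannot conclude $F\ne0$ from a crude size estimate. The monotonicity argument is precisely what overcomes this, and it rests essentially on the reality of the Taylor coefficients of $\tilde G$ — equivalently, on $H$ vanishing to first order at $w$ with the clean $\sin(n\phi)$ structure — which is where the reflection principle and the sign pattern $x_i<w<y_i$ are used.
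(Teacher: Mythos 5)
Your proof is correct, but it takes a genuinely different route from the paper's. The paper argues by contradiction with plane trigonometry: it takes a sequence $\zeta_i\to w$ of non-real solutions of \eqref{eq:angles}, notes that all of $\mu_l,\nu_l$ and hence $\alpha$ tend to $0$, and then uses explicit tangent computations to show $\mu_l/\alpha\to 0$ and $\nu_l/\alpha\to 0$; dividing \eqref{eq:angles} by $\alpha$ then forces $\ww=0$, a contradiction. In other words, the paper normalizes by the one angle attached to $w$ and shows it dominates. Your argument is the complex-analytic counterpart of this: writing the angle sum as $\Im G(\zeta)$, splitting off the singular term $\ww\arg(\zeta-w)$, and using that the remainder $\tilde G$ is holomorphic and real on the real diameter (so $H=\Im\tilde G=\sum_{n\ge 1}a_n\rho^n\sin(n\phi)$ with real $a_n$) is exactly the statement that the $\alpha$-term dominates near $w$; your monotonicity in $\phi$ (or, equivalently, the bound $|H|\le C\rho|\sin\phi|\le C\rho\,\phi$ against $|\ww|\phi$) correctly handles the delicate small-$\phi$ regime that the paper handles with the ratio limits. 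What your version buys is a quantitative, non-sequential proof with an explicit admissible $\varepsilon$ and the extra information that the angle sum is strictly monotone in $\arg(\zeta-w)$ at fixed $|\zeta-w|$; what the paper's version buys is elementarity (no branch bookkeeping for the logarithms, no reflection principle) and a form of the estimates ($\mu_l/\alpha\to0$, $\nu_l/\alpha\to0$) that is reused verbatim in the case analysis of Section \ref{subsubsec:nearCorners}, where the coefficient of $\alpha$ degenerates and one must compare individual angles rather than just the total. Your treatment of the $\tilde\alpha$ variant is also right, modulo the harmless point that there $\Im\tilde G$ equals the constant $\ww\pi$ rather than $0$ on the real segment, which does not affect the $\phi$-derivative.
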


\begin{proof}
Let $\varepsilon=y_1-w$. If $\zeta\in \RRR$, $\zeta>w$ and $|\zeta-w|<\varepsilon$, then $w<\zeta<y_1$. Thus, all the angles $\mu_i,\nu_i$ and $\alpha$ are zero, and \eqref{eq:angles} is true. 

Let us prove the converse. Suppose $\nexists \varepsilon>0$ such that \eqref{eq:angles} implies $\zeta\in\RRR$, $\zeta>w$. If $\zeta\in\RRR$ and $x_1<\zeta<w$, then all the angles $\mu_i$ and $\nu_i$ are zero, but $\alpha=\pi$ and \eqref{eq:angles} cannot hold. It follows, that there must be a sequence $\zeta_i\in\CCC\backslash\RRR$ such that 
\begin{equation}
\label{eq:limZeta}
\lim_{i\rightarrow\infty}\zeta_i=w
\end{equation}
and \eqref{eq:angles} holds for $\zeta=\zeta_i, \forall i$. Notice that the angles $\alpha,\mu_i$, and $\nu_i$ depend on $\zeta$ and that \eqref{eq:limZeta} implies 
\begin{equation}
\label{eq:limMuNu}
\lim_{i\rightarrow\infty}\mu_l(\zeta_i)=0\text{ and } \lim_{i\rightarrow\infty}\nu_l(\zeta_i)=0\text{ for all }l.
\end{equation}
These, together with the assumption that \eqref{eq:angles} holds for $\zeta=\zeta_i,\ \forall i$, imply 
\begin{equation}
\label{eq:limAlpha}
\lim_{i\rightarrow\infty}\alpha(\zeta_i)=0.
\end{equation}

Define $\eta(\zeta):=\mathrm{angle}(\zeta-w,\zeta-\Re \zeta)$ and $\gamma(\zeta):=\mathrm{angle}(\zeta-\Re \zeta,\zeta-y_1)$ (see Figure \ref{Fig:geom2}).
\begin{figure}[ht]
\caption{\label{Fig:geom2}}
\includegraphics[width=6cm]{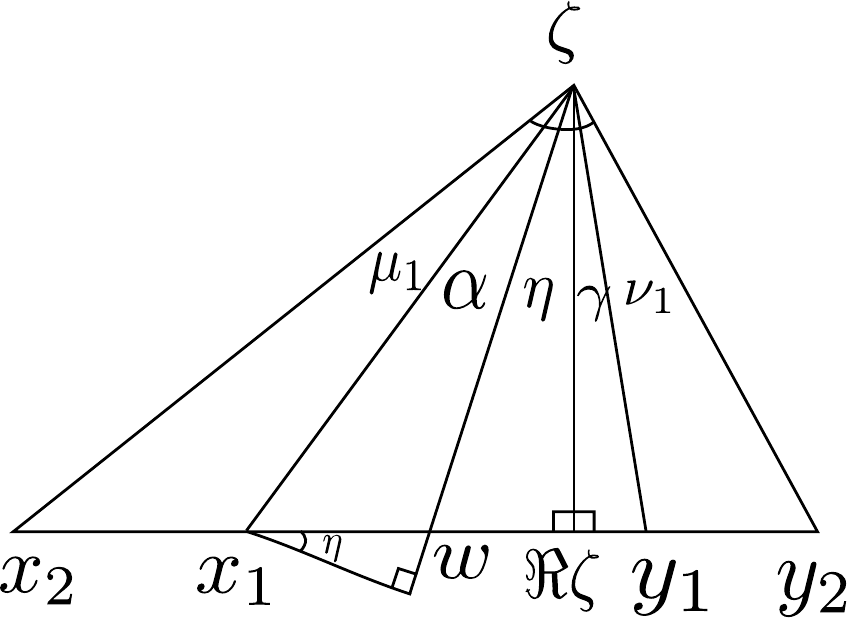}
\end{figure}
Using this notation,
$$\tan(\ga(\zeta_i))=\frac{|w-x_1|\cos(\eta(\zeta_i))}{|\zeta_i-w|+|w-x_1|\sin(\eta(\zeta_i))}.$$
Writing a similar expression for $\alpha(\zeta_i)+\mu_1(\zeta_i)$, and using 
\eqref{eq:limMuNu} and \eqref{eq:limAlpha}, obtain
\begin{multline}
\label{eq:geom-tan}
\lim_{i\rightarrow \infty}\frac{\gm_1(\zeta_i)+\ga(\zeta_i)}{\ga(\zeta_i)}
=\lim_{i\rightarrow \infty}\frac{\tan(\gm_1(\zeta_i)+\ga(\zeta_i))}{\tan(\ga(\zeta_i))}
=\\=\lim_{i\rightarrow \infty} \frac{\frac{|w-x_2|\cos(\eta(\zeta_i))}{|\zeta_i-w|+|w-x_2|\sin(\eta(\zeta_i))}}{
\frac{|w-x_1|\cos(\eta(\zeta_i))}{|\zeta_i-w|+|w-x_1|\sin(\eta(\zeta_i))}}
=\lim_{i\rightarrow \infty} \frac{\frac {|\zeta_i-w|}{|w-x_1|} + \sin(\eta(\zeta_i))}
{\frac {|\zeta_i-w|}{|w-x_2|}+\sin(\eta(\zeta_i))}.
\end{multline}

It follows from \eqref{eq:limZeta} and \eqref{eq:limAlpha} that  
\begin{equation}
\label{eq:limDelGam}
\lim_{i\rightarrow \infty}\eta(\zeta_i)=\lim_{i\rightarrow \infty}\gamma(\zeta_i)=\frac{\pi}{2}.
\end{equation}
Thus, \eqref{eq:geom-tan} gives that $\lim_{i\rightarrow \infty}\frac{\gm_1(\zeta_i)+\ga(\zeta_i)}{\ga(\zeta_i)}=1$, and hence that $\lim_{i\rightarrow \infty}\frac{\gm_1(\zeta_i)}{\ga(\zeta_i)}=0$. It is easy to see that using the same argument it can be shown that $\lim_{i\rightarrow \infty}\frac{\gm_l(\zeta_i)}{\ga(\zeta_i)}=0$ for all $l=1,2,3,\ldots,m_1$.

It follows from \eqref{eq:limZeta} and \eqref{eq:limDelGam}  that
\begin{multline*}
\lim_{i\rightarrow \infty}\frac{\gn_1(\zeta_i)}{\ga(\zeta_i)}
=\lim_{i\rightarrow \infty}\frac{\tan(\gn_1(\zeta_i))}{\tan(\ga(\zeta_i))}
=\lim_{i\rightarrow \infty} \frac{\frac{|y_1-y_2|\cos(\gamma(\zeta_i))}{|\zeta_i-y_1|+|y_1-y_2|\sin(\gamma(\zeta_i))}}{\frac{|w-x_1|\cos(\eta(\zeta_i))}{|\zeta_i-w|+|w-x_1|\sin(\eta(\zeta_i))}}
=\\=\frac{|y_1-y_2|}{|w-y_1|+|y_1-y_2|}\lim_{i\rightarrow \infty} \frac{\cos(\gamma(\zeta_i))}{\cos(\eta(\zeta_i))}
=\frac{|y_1-y_2|}{|w-y_1|+|y_1-y_2|}\lim_{i\rightarrow \infty} \frac{\sin(\frac \pi 2-\gamma(\zeta_i))}{\sin(\frac \pi 2-\eta(\zeta_i))}
=\\=\frac{|y_1-y_2|}{|w-y_1|+|y_1-y_2|}\lim_{i\rightarrow \infty} \frac{|\zeta_i-w|}{|\zeta_i-y_1|}=0.
\end{multline*}
Similarly, $\lim_{i\rightarrow \infty}\frac{\gn_l(\zeta_i)}{\ga(\zeta_i)}=0$ for all $l=1,2,3,\ldots,m_2$. Combining the results gives
$$\lim_{i\rightarrow\infty}\left(\sum_{l=1}^{m_1-1} \xx_l\frac{\mu_l (\zeta_i)}{\alpha(\zeta_i)} + \ww +\sum_{l=1}^{m_2-1} \yy_l\frac{\nu_l(\zeta_i)}{\alpha(\zeta_i)}\right)\neq0,$$
which is a contradiction to the assumption that \eqref{eq:angles} is satisfied for $\zeta=\zeta_i$ for all $i$.

This proves the first statement in the lemma. The second statement follows by symmetry.
$\qed$
\end{proof}

\begin{proof}[Proof of Lemma \ref{lem:largeChi}]
Since $z\neq 0$, studying the critical points of $S(z)$ is equivalent to studying  the solutions to $z\frac d{dz}S(z)=0$. The real part of the equation, i.e. $0=\Re(z\frac d{dz}S(z))$ implies that if $\gq$ is very large, then $z$ must be very close to $e^{\gt}$ or to $e^{V_l}$ for some $l$. 

The imaginary part $0=\Im(z\frac d{dz}S(z))$ is equivalent to
\begin{multline}
\label{eq:ImSp}
0 = 
-\sum_{m=1}^{j-1}
	\frac 12(1+\gb_m)
		\mathrm{angle}(z-e^{V_{m-1}},z-e^{V_m})
-\frac 12(1+\gb_j)\mathrm{angle}(z-e^{V_{j-1}},z-e^{\tau}) 
\\
+\frac 12(1-\gb_j)\mathrm{angle}(z-e^\tau,z-e^{V_j})
+\sum_{m=j+1}^{n}\frac 12(1-\gb_m)
	\mathrm{angle}(z-e^{V_{m-1}},z-e^{V_m}).
\end{multline}

For arbitrary real numbers $x,a,b$, if $x\notin[a,b]$, it is immediate that
$$\lim_{z\rightarrow e^x}\mathrm{angle}(z-e^a,z-e^b)=0.$$

This implies that if $z$ is near $e^{V_m}$, then all but the two terms in \eqref{eq:ImSp} where $e^{V_m}$ appears are close to zero. The sum of the angles in the remaining two terms is $\pi$, and both have coefficients of the same sign. If none of those two coefficients is zero, then if $z$ is sufficiently close to $e^{V_m}$, the RHS of \eqref{eq:ImSp} cannot be zero. 

If one of the two coefficients is zero, then in order for \eqref{eq:ImSp} to hold, $z$ must be real. This follows from Lemma \ref{lem:geom}. For example, in the case $\beta_m\neq -1$, $\beta_{m+1}=-1$ and $m<j-1$, setting $\zeta=z$, 
\begin{equation}
\label{eq:GeomLemCrspV}
\begin{array}{cccccccccccccccc}
x_{m_1}&x_{m_1-1}&\ldots&x_1&w&y_1&\ldots&y_{j-m-1}&y_{j-m}&y_{j-m+1}&\ldots&y_{m_2}\\
||&||&\ldots&||&||&||&\ldots&||&||&||&\ldots&||&\\
e^{V_0}&e^{V_1}&\ldots &e^{V_{m-1}}&e^{V_{m}}&e^{V_{m+1}}&\ldots&e^{V_{j-1}}&e^{\tau}&e^{V_{j}}&\ldots&e^{V_n}
\end{array},
\end{equation}
and
\begin{equation}
\label{eq:GeomLemCrspB}
\begin{array}{cccccccccccccc}
\xx_{m_1-1}&\ldots&\xx_1&\ww&\yy_1&\ldots&\yy_{j-m-1}&\yy_{j-m}&\ldots&\yy_{m_2-1}\\
||&\ldots&||&||&||&\ldots&||&||&\ldots&||&\\
-\frac{1+\beta_1}{2}&\ldots &-\frac{1+\beta_{m-1}}{2}&-\frac{1+\beta_{m}}{2}&-\frac{1+\beta_{m+2}}{2}&\ldots&-\frac {1+\beta_{j}}{2}&\frac{1-\beta_{j}}{2}&\ldots&\frac{1-\beta_{n}}{2},
\end{array}
\end{equation}
Lemma \ref{lem:geom} gives that \eqref{eq:ImSp} implies $z$ must be real.

If $z$ is near $e^\gt$ and $\gb_j=\pm 1$, then exactly one of the coefficients of angles in \eqref{eq:ImSp} containing $z-e^\gt$ is zero, and again it follows from Lemma \ref{lem:geom}, with the parameters set up similarly to \eqref{eq:GeomLemCrspV} and \eqref{eq:GeomLemCrspB} but this time with $w=e^\gt$, that $z$ must be real.

If $z$ is near $e^\gt$ and $\gb_j\neq\pm 1$, then using the methods from the proof of Proposition 3.1 of \cite{BMRT}, it can be shown that there are two possibilities for $z$, and these two complex conjugate critical points of $S(z)$ can be asymptotically calculated. Formulas for the critical points are given in Lemma \ref{lem:critPtsLargeChi}.
$\qed$
\end{proof}

\begin{lemma} In the limit $\chi\rightarrow\infty$
\label{lem:critPtsLargeChi}
\begin{enumerate}[(a)]
\item If $\gt\in(V_{j-1},V_j), \gb_j\neq\pm 1$ is fixed, then the asymptotics of the non-real complex critical points is given by $z_{cr}=e^{\gt-\varepsilon}$, where
\begin{align*}
\varepsilon = e^{-\chi} e^{\pm \ii\pi\frac{1}{2}(1+\beta_j)} 
\prod_{i=0}^n \left| 2 \sinh 
\left(\frac{\tau-V_i}{2}\right)\right|^{\frac 12 (\beta_{i+1}-\beta_i)}\left(1+O(e^{-\chi})\right).
\end{align*}
\item
If $\tau=V_{j-1}+\delta$, $\chi\to \infty $ and $\delta\to 0$ in such a way that
$$
p=e^{\chi-\chi^{(j-1)}}|\delta|^{1-\frac 12(\beta_j-\beta_{j-1})}
$$
is fixed, with
\begin{equation*}
e^{\chi^{(j-1)}} := \prod_{\substack{i=0 \\ i\neq j-1}}^{n} \Bigl| 2\sinh\bigl(\frac{V_{j-1}-V_i}{2}\bigr)\Bigr|^{\frac 12(\beta_{i+1}-\beta_i)},
\end{equation*}
then the critical points behave as $z_{cr}=e^{\tau-s|\delta|}$, where $s$ is a solution to the equation
\begin{equation}\label{eq:t-scaling}
  p=e^{\pm \ii\pi \frac{1}{2}(1+\beta_{j-1})} \frac{(s-\mathrm{sign}(\delta))^ {\frac 12 (\beta_j-\beta_{j-1})}}{s}.
\end{equation}
\end{enumerate}
\end{lemma}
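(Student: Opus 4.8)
The plan is to work directly with the equation $z\frac{dS}{dz}=0$ in the form \eqref{eq:Sp}, since a critical point of $S$ is exactly a zero of $P(z):=z\frac{dS}{dz}$. First I would record the branch structure of $P$ cleanly: collecting the coefficient of each logarithm in \eqref{eq:Sp}, using $\beta_0=-1$, $\beta_{n+1}=1$ and the telescoping identity $V(V_i)-V(V_{i-1})=\beta_i(V_i-V_{i-1})$, one gets
$$P(z)=-\ln(z-e^{\tau})+\sum_{l=0}^{n}\tfrac12(\beta_{l+1}-\beta_l)\ln(z-e^{V_l})+C-\chi,$$
with $C$ an explicit real constant. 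By Lemma \ref{lem:largeChi} the non-real critical points lie near $e^{\tau}$, so the natural ansatz is $z=e^{\tau-\varepsilon}$ with $\varepsilon\to 0$, and the only term of $P$ that is unbounded as $\varepsilon\to0$ is $-\ln(z-e^{\tau})\sim-\ln(-e^{\tau}\varepsilon)$.

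For part (a), with $\tau$ fixed, I would freeze every remaining logarithm at $z=e^{\tau}$ (incurring an error $O(\varepsilon)$) and balance the divergent term against $-\chi$. Then $P=0$ becomes $\ln(-\varepsilon)=-\chi+\sum_l\tfrac12(\beta_{l+1}-\beta_l)\ln(e^{\tau}-e^{V_l})+\text{const}$, which is self-consistent with $\varepsilon=O(e^{-\chi})$. Writing $e^{\tau}-e^{V_l}=e^{(\tau+V_l)/2}\,2\sinh\frac{\tau-V_l}{2}$ separates the modulus, producing $\prod_{i=0}^n|2\sinh\frac{\tau-V_i}{2}|^{\frac12(\beta_{i+1}-\beta_i)}$, from a phase coming from the negative factors $V_l>\tau$, whose total exponent is $\tfrac12(1-\beta_j)$ since $\sum_{l\ge j}\tfrac12(\beta_{l+1}-\beta_l)=\tfrac12(\beta_{n+1}-\beta_j)$. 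The leftover real prefactors $e^{(\tau+V_l)/2}$ and the constant $C$ must cancel, and here I would use Abel summation, $\sum_l\tfrac12(\beta_{l+1}-\beta_l)V_l=\tfrac12[V_n+V_0-V(V_n)+V(V_0)]$, together with the boundary normalization of $V$ at $V_0,V_n$. Feeding $\varepsilon=O(e^{-\chi})$ back into the discarded $O(\varepsilon)$ terms controls the error.

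Part (b) I would present as the crossover regime of (a): the product in (a) contains the factor $|2\sinh\frac{\tau-V_{j-1}}{2}|^{\frac12(\beta_j-\beta_{j-1})}\sim|\delta|^{\frac12(\beta_j-\beta_{j-1})}$, so $\varepsilon\sim e^{-\chi}|\delta|^{\frac12(\beta_j-\beta_{j-1})}$ becomes comparable to $|\delta|$ precisely when $e^{\chi}|\delta|^{1-\frac12(\beta_j-\beta_{j-1})}$ is bounded, i.e. when $p$ is fixed. In this regime the gap between $e^{\tau}$ and $e^{V_{j-1}}$ is itself of order $|\delta|$, so the (a)-approximation breaks down and I must keep both nearby logarithms exactly in the scaled variable $\varepsilon=s|\delta|$: there $z-e^{\tau}\sim-e^{V_{j-1}}s|\delta|$ and $z-e^{V_{j-1}}\sim e^{V_{j-1}}|\delta|(\mathrm{sign}(\delta)-s)$, while every other logarithm is frozen at $e^{V_{j-1}}$ and yields, through the same $2\sinh$ factorization, exactly the modulus $e^{\chi^{(j-1)}}$ of the definition. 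Collecting the $\ln|\delta|$ terms with total coefficient $-(1-\tfrac12(\beta_j-\beta_{j-1}))$, exponentiating, and inserting the definition of $p$ then produces the algebraic equation \eqref{eq:t-scaling} for $s$, with the phase $e^{\pm\ii\pi\frac12(1+\beta_{j-1})}$ arising from the same sign accounting as in (a).

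I expect the main obstacle to be the exact tracking of the phases and of the real constant $C$ through the $2\sinh$ factorization: obtaining the precise representative $\frac12(1+\beta_j)$ (resp. $\frac12(1+\beta_{j-1})$) for the argument, rather than one differing by an even integer, requires pinning down the branch of each logarithm in \eqref{eq:Sp} consistently and invoking the normalization of $V$ at the box boundary. The purely analytic content—locating the zeros and bounding the errors—is comparatively routine once the scaling ansatz $z=e^{\tau-s|\delta|}$ in (b) is justified, and that justification is exactly the crossover computation above.
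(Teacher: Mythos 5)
The paper gives no proof of this lemma---it simply defers to the analogous computation in \cite{BMRT}---and your outline is exactly that computation: expand $z\frac{dS}{dz}=0$ from \eqref{eq:Sp} around $z=e^{\tau}$ (resp.\ in the scaled variable $s=\varepsilon/|\delta|$ near the corner), factor $e^{\tau}-e^{V_l}=e^{(\tau+V_l)/2}\,2\sinh\frac{\tau-V_l}{2}$, and balance the single divergent logarithm against $-\chi$. I checked the bookkeeping you flag as the main obstacle---the collected coefficients $\frac12(\beta_{l+1}-\beta_l)$, the phase $\frac12(1-\beta_j)$ becoming $\frac12(1+\beta_j)$ after absorbing the sign of $-\varepsilon$ (and likewise $\frac12(1+\beta_{j-1})$ in part (b)), and the cancellation of the leftover real constant, which indeed requires the boundary normalization $V(V_0)=-V_0$, $V(V_n)=V_n$ that you invoke---and everything reproduces the stated formulas, so your proposal is correct and follows the same route as the paper's (omitted) argument.
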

\begin{proof}
Since the calculations are very similar to those in \cite{BMRT}, in order to avoid repetition, we will omit them here.
$\qed$
\end{proof}

\subsubsection{Nature of critical points in various limits}
\label{subsubsec:nearCorners}
Let us analyze the solutions to \eqref{eq:t-scaling}. Assume $\gd>0$. The case $\gd<0$ is similar. Consider the two limits $p\rightarrow\infty$ and $p\rightarrow 0$ in various scenarios depending on the angles $\gb_{j-1}$ and $\gb_j$. 

In the limit $p\rightarrow\infty$ the solution to \eqref{eq:t-scaling} has the asymptotics
\begin{equation}
\label{eq:s,pToInf}
s=p^{-1}e^{\pm \ii\pi\frac 12(1+\gb_j)}(1+O(p^{-1})).
\end{equation}

When $p\rightarrow 0$,
\begin{equation}
\label{eq:s,pTo0}
s=p^{-\frac 1{1-\frac 12(\gb_j-\gb_{j-1})}}e^{\pm \ii\pi \frac{\frac 12(1+\gb_{j-1})}{1-\frac 12(\gb_j-\gb_{j-1})}}(1+O(p)).
\end{equation}

To get these asymptotics, it is necessary to show that if $(\gb_j-\gb_{j-1})<0$ and $p\rightarrow\infty$, or $(\gb_j-\gb_{j-1})>0$ and $p\rightarrow 0$, then $s\nrightarrow 1$. This is a technicality that has been addressed in \cite{BMRT}.

In Lemma \ref{lem:geom} set 
\begin{align*}
\zeta&=z,\\
w&=e^\gt,\\
x_i&=e^{V_{j-i}}, i=1,2,\ldots,j,\\
y_i&=e^{V_{j-1+i}}, i=1,2,\ldots,n-j+1,\\
\xx_i&=-\frac 12(1+\beta_{j-i}), i=1,2,\ldots,j-1,\\
\yy_i&=\frac 12(1-\beta_{j+i}), i=1,2,\ldots,n-j. 
\end{align*}
Lemma \ref{lem:geom} and its proof can be used to show that if $z$ satisfies \eqref{eq:ImSp}, then in the following situations it must be real. The idea is to show that if $z$ is not real, then one of the angles in \eqref{eq:angles} with non-zero coefficient is much larger than all the other angles with non-zero coefficients, which is not possible.

\begin{itemize}
\item[Case 1:]$p\rightarrow\infty, s\rightarrow 0,\gb_j=1.$ In Lemma \ref{lem:geom} set $\ww=-1$. Then \eqref{eq:ImSp} has the form of \eqref{eq:angles}. Also,
$$|\zeta-w|=|e^{\gt-\varepsilon}-e^\gt|=e^\gt|\varepsilon|+o(\varepsilon)$$
and
$$|w-x_1|=|e^{V_{j-1}}-e^\gt|=e^\gt|\gd|+o(\gd).$$
Hence, $\frac{|\zeta-w|}{|w-x_1|}\approx\frac{|\varepsilon|}{\gd}=s\rightarrow 0$. The proof of Lemma \ref{lem:geom} implies that $\zeta$ must be real. Since $z=\zeta$, $z$ must be real.
\item[Case 2:]$p\rightarrow 0, s\rightarrow\infty,\gb_j=1.$ The setup is similar to the first case and $|\zeta-w|\rightarrow 0$, $|w-x_1|\rightarrow 0$. Unlike the previous case, $\frac{|\zeta-w|}{|w-x_1|} \rightarrow\infty$. The proof of Lemma \ref{lem:geom} gives that in the limit considered $\gm_1\gg\ga\gg(\text{all the other angles})$, which implies $\zeta$ must be real.
\item[Case 3:]$p\rightarrow 0, s\rightarrow\infty, \gb_{j-1}=-1, \gb_j\neq 1.$ In this case \eqref{eq:ImSp} has the form of \eqref{eq:angles} with the term $\ww \alpha$ in \eqref{eq:angles} replaced by $-\frac 12(1+\beta_j)\alpha+\frac 12(1-\beta_j)\tilde{\alpha}$. Notice that $\xx_1=0$. From the proof of Lemma \ref{lem:geom} it is easy to see that if $\zeta$ is not real, then $\mu_i\rightarrow 0$ for $i>1$, $\nu_i\rightarrow 0$ for $i\geq 1$, $\alpha+\tilde{\alpha}\gg\mu_i$ for $i>1$ and $\alpha+\tilde{\alpha}\gg\nu_i$ for $i\geq1$. Since $\frac{|\zeta-w|}{|w-x_1|}\approx\frac{|\varepsilon|}{\gd}=s\rightarrow \infty$, it follows that $\alpha\rightarrow 0$ (see Figure \ref{Fig:geom1}). Thus, $\tilde{\alpha}\rightarrow 0$ as well. A calculation similar to \eqref{eq:geom-tan} yields
$$\frac{\tan(\alpha)+\tan(\tilde{\alpha})}{\tan(\alpha)}\approx\frac{\frac{|\zeta-x_1|}{|w-x_1|}+1}{\frac{|\zeta-x_1|}{|y_1-x_1|}+1}\approx s\rightarrow\infty,$$
which implies $\tilde{\alpha}\gg\alpha$. Thus, in the considered limit $\tilde{\alpha}$ is much larger than all the other angles that appear with non-zero coefficients, which is impossible. Hence, $\zeta$ must be real.
\item[Case 4:]$p\rightarrow\infty, s\rightarrow 0, \gb_j=-1.$ Now \eqref{eq:ImSp} has the form of \eqref{eq:angles} if $\alpha$ is replaced by $\tilde{\alpha}$ and $\ww=1$. In this case $|\zeta-w|\rightarrow 0$, $|x_1-w|\rightarrow 0$, and $\frac{\zeta-w}{x_1-w}\rightarrow 0$. It is easy to see from the proof of Lemma \ref{lem:geom} that $\tilde{\ga}\gg\gm_i,\gn_i$ and hence that $\zeta$ must be real.
\end{itemize}

It is easy to see that the complex critical points $z_{cr}$ of $S(z)$ obtained from \eqref{eq:s,pToInf} and \eqref{eq:s,pTo0} are non-real complex except in the cases listed above.

\subsubsection{Critical points at finite $(\tau,\chi)$.}
\label{subsec:FollowOtherPointsToInfty}
In the previous section the number of non-real complex critical points of $S_{\tau,\chi}(z)$ was identified when $\gq$ is large. This section studies the number of non-real complex critical points for an arbitrary point $(\gt_0,\gq_0)$. Suppose that at this point the number of non-real complex critical points is neither 2 nor 0. It must be even, since they come in conjugate pairs. Assume there are $2m$ such critical points. The number of such critical points depends on the point $(\gt_0,\gq_0)$. However, if $(\gt_0,\gq_0)$ continuously changes in the $\gt,\gq$ plane, the number of non-real complex critical points of $S_{\tau_0,\chi_0}(z)$ will not change, until it reaches a point were the equations
\begin{equation}
\label{eq:Bdry}
S_{\gt,\gq}'(z)=S_{\gt,\gq}''(z)=0
\end{equation}
have a real solution $z\in\RRR$. In other words, the number of non-real complex critical points can change only near points $(\tau,\chi)$ where $S_{\tau,\chi}(z)$ has double real critical points. From \eqref{eq:Spp} and \eqref{eq:Sp} the condition \eqref{eq:Bdry} is equivalent to
\begin{align}
\label{eq:TauZ}
e^\gt=&z-\frac{1}{\sum_{i=0}^n\frac 12(\gb_{i+1}-\gb_i)\frac 1{z-e^{V_i}}},
\end{align}
and
\begin{align}
\nonumber
\gq=&-\sum_{i=1}^{j-1}\frac 12(1+\gb_i) \ln\(\frac{ze^{-V_i}-1}{ze^{-V_{i-1}}-1}\)
-\frac 12(1+\gb_j) \ln\(\frac{ze^{-\gt}-1}{ze^{-V_{j-1}}-1}\)
\\\label{eq:ChiZ}
&+\frac 12(1-\gb_j) \ln\(\frac{ze^{-V_j}-1}{ze^{-\gt}-1}\)
+\sum_{i=j+1}^{n}\frac 12(1-\gb_i) \ln\(\frac{ze^{-V_i}-1}{ze^{-V_{i-1}}-1}\)
\\\nonumber&-\frac 12\gt+\frac 12(V(V_0)+V_0).
\end{align}

Think of this as a curve $(\gt(z),\gq(z))$ in the $\gt,\gq$ plane parametrized by $z\in\RRR$. Consider the complement of this curve in the $\gt,\gq$ plane. The number of non-real complex critical points of $S_{\gt,\gq}$ is the same for all points $(\gt,\gq)$ inside the same connected component of this complement. Let $\DD$ be the connected component which contains $(\gt_0,\gq_0)$ and let $(\gt(z_0),\gq(z_0))$ be a generic point on the boundary of $\DD$. The complement of $(\gt(z),\gq(z))$ has another connected component whose boundary contains $(\gt(z_0),\gq(z_0))$. Call it $\tilde{\DD}$. The number of non-real complex critical points of $S_{\gt,\gq}$ is $2m$ for points $(\gt,\gq)\in\DD$ and $2m\pm 2$ for points $(\gt,\gq)\in\tilde{\DD}$. From Lemma \ref{lem:badIntervals} it follows that $z_0$ must be in one of the intervals $(e^{V_{i-1}},e^{V_i}),\gb_i=\pm 1$ or in $(-\infty,e^{V_0})\union(e^{V_n},\infty)$. Suppose, for example, $z_0\in(e^{V_{l-1}},e^{V_l})$. The portion of the curve $(\gt(z),\gq(z))$ corresponding to $z\in(e^{V_{l-1}},e^{V_l})$ is contained in $\overline{\DD}\cap\overline{\tilde{\DD}}$, where $\overline{\DD}$ denotes the closure of $\DD$ (a detailed study of the curve $(\gt(z),\gq(z))$ is carried out in Section \ref{sec:TheBoundary}). Now, when $z$ approaches $e^{V_l}$ (or $e^{V_{l-1}}$), $\gq(z)$ will approach $\infty$. However, it was shown in the previous section, that when $\gq$ is very large, the number of complex critical points is either 2 or 0. Thus, $2m=0$ or $2m=2$.

This establishes the following proposition:
\begin{proposition}
\label{prop:NumOfCritPtsEverywhere}
For any $(\gt,\gq)$ the number of non-real complex critical points of $S_{\gt,\gq}(z)$ is $2$ or $0$. Divide the $\gt,\gq$ plane into regions  according to the number of non-real complex critical points of $S_{\gt,\gq}(z)$. Each connected component where the number of such critical points is $0$ has points where $\gq$ is arbitrarily large.
\end{proposition}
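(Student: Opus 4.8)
The plan is to prove the count by a continuity argument that imports the answer from the large-$\gq$ regime, where Lemma \ref{lem:largeChi} already settles it, to the whole $(\gt,\gq)$ plane. Write $N(\gt,\gq)$ for the number of non-real complex critical points. Since the zeros of $z\,\frac{d}{dz}S_{\gt,\gq}(z)$ move continuously with $(\gt,\gq)$ and the non-real ones occur in conjugate pairs, $N$ can change only when a conjugate pair merges on the real axis; such a merger is exactly a real $z$ solving simultaneously $S'=S''=0$, i.e. condition \eqref{eq:Bdry}. Hence $N$ is constant on each connected component of the complement of the discriminant curve $\Gamma=\{(\gt(z),\gq(z)):z\in\RRR\}$, where $\gt(z)$ is obtained from \eqref{eq:Spp}$=0$ as in \eqref{eq:TauZ} and $\gq(z)$ from \eqref{eq:Sp}$=0$ as in \eqref{eq:ChiZ}; moreover, since a generic crossing merges a single pair, the values of $N$ on two components sharing a boundary arc of $\Gamma$ differ by exactly $2$.

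Next I would determine which real $z$ actually contribute to $\Gamma$ and describe the ends of the resulting arcs. By Lemma \ref{lem:badIntervals} a real double critical point $z_0$ cannot lie in an interval $(e^{V_{i-1}},e^{V_i})$ with $\gb_i\neq\pm1$ nor at a branch point $e^{V_i}$; thus $z_0$ must lie in some interval $(e^{V_{l-1}},e^{V_l})$ with $\gb_l=\pm1$ or in an unbounded interval $(-\infty,e^{V_0})$, $(e^{V_n},\infty)$. The key analytic observation is that the coefficient of the logarithm $\ln(ze^{-V_l}-1)$ in \eqref{eq:ChiZ} is the residue $\tfrac12(\gb_{l+1}-\gb_l)$ of \eqref{eq:Spp}, which is nonzero because $\gb_{l+1}\neq\gb_l$; consequently $\gq(z)$ blows up as $z$ approaches the relevant endpoint, so the corresponding arc of $\Gamma$ runs off to infinity in the $\gq$-direction.

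I would then assemble the argument. Let $\DD$ be a component of the complement of $\Gamma$, on which $N\equiv 2m$, and pick a generic boundary point $(\gt(z_0),\gq(z_0))$; the arc of $\Gamma$ through it lies in $\overline\DD\cap\overline{\tilde\DD}$ for the adjacent component $\tilde\DD$, and by the previous paragraph this arc escapes to large $\gq$. Staying on the $\DD$-side of the arc as $z$ runs to the endpoint, we see that $\DD$ itself contains points of arbitrarily large $\gq$; there Lemma \ref{lem:largeChi} gives $N\in\{0,2\}$, so $2m\in\{0,2\}$. As $\DD$ was arbitrary, $N\in\{0,2\}$ everywhere. For the final claim, note that because neighbouring components differ by $2$, the open set $\{N=0\}$ is a union of such components, and the same reasoning shows each of them reaches arbitrarily large $\gq$.

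The main obstacle is the first step: justifying rigorously that $N$ is locally constant off $\Gamma$. One must control the zeros of the multivalued transcendental function $z\,dS/dz$ uniformly as $(\gt,\gq)$ varies and, crucially, rule out that a non-real critical point leaves the finite plane---escaping to $z=\infty$ or to a logarithmic branch point $e^{V_i}$ or $e^\gt$---without first crossing $\RRR$. The relevant checks are that $\Re(z\,dS/dz)$ diverges as $z$ approaches any $e^{V_i}$ or $e^\gt$, so no zero can be absorbed there, and that $z\,dS/dz$ has a finite limit as $z\to\infty$, so no zero can run off to infinity; promoting these to the uniform statement needed for the monodromy principle is the technical heart of the proof, and the global geometry of $\Gamma$ that makes the last paragraph precise is exactly what the detailed study in Section \ref{sec:TheBoundary} supplies.
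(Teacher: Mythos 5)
Your proposal follows the paper's own proof essentially verbatim: localize the real double critical points via Lemma \ref{lem:badIntervals}, note that the resulting arcs of the discriminant curve $(\gt(z),\gq(z))$ escape to $\gq=+\infty$, and then import the count from the large-$\gq$ analysis of Section \ref{subsec:NumberOfCriticalPointsAtInfty} using the constancy of the number of non-real critical points on components of the complement. The one caveat (present in the paper's proof as well) is that your residue argument for the blow-up of $\gq(z)$ fails at an endpoint $e^{V_{l-1}}$ with $\gb_{l-1}=-1$, $\gb_l=1$, where $e^{\gt(z)}\to e^{V_{l-1}}$ simultaneously and the singular logarithms cancel so that $\gq(z)$ stays finite (cf.\ Section \ref{subsec:TheFrozenBoundary}); since the other endpoint of that arc still sends $\gq\to+\infty$, the argument is unaffected.
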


\section{Asymptotics of the correlation kernel}
\label{sec:CorrKer}

This section analyzes the asymptotics of the correlation kernel $K_{\lambda_k,r_k}((t_1^k,h_1^k),(t_2^k,h_2^k))$ in the scaling limit when $\lim_{k\rightarrow \infty}r_k=0$, the skew plane partitions are scaled by $r_k$ in all directions, $$\lim_{k\rightarrow \infty}r_kt_1^k=\lim_{k\rightarrow \infty}r_kt_2^k=\gt,\ \lim_{k\rightarrow \infty}r_kh_1^k=\lim_{k\rightarrow \infty}r_kh_2^k=\gq,$$ and $\Delta(t):=t_1^k-t_2^k$ and $\Delta(h):=h_1^k-h_2^k$ are constants. A version of the saddle point method is used for calculating the asymptotics of the correlation kernel \eqref{eq:main-corr2}. The arguments used are along the lines of \cite{OR1},\cite{OR2},\cite{BMRT}.

Let $\gt,\gq$ be such that $S_{\gt,\gq}$ has two non-real complex critical points. Deform the contours of integration $C_z$ and $C_w$ in the double integral representation of the correlation kernel given in \eqref{eq:main-corr2} to $C'_z,C'_w$ in such a way that the new contours pass transversely through the two critical points of $S(z)$ and $\Re(S(z))\leq\Re(S(w))$ $\forall z\in C'_z,\forall w\in C'_w$, with equality if and only if $z=w=z_{cr}$. It was shown in the proof of Theorem 4.1 in \cite{BMRT} that this can be done for the function $S(z)$ when $\beta_i\neq\pm 1, \forall i$. The argument used there is general and applies for arbitrary $\beta_i\in[-1,1]$. During this contour deformation the contours cross each other along a path between the complex critical points of $S(z)$, so the residues from the term $\frac 1{z-w}$ should be picked. Thus, the integral \eqref{eq:main-corr2} can be written as 
\begin{multline*}
K_{\lambda_k,q_k}((t_1^k,h_1^k),(t_2^k,h_2^k)) = \frac{1}{(2\pi \ii)^2}
\int_{z\in C'_z}\int_{w\in C'_w}
\frac{\Phi_{-,b_{\lambda_k}}(z,t_1^k)\Phi_{+,b_{\lambda_k}}(w,t_2^k)}{\Phi_{+,b_{\lambda_k}}(z,t_1^k)\Phi_{-,b_{\lambda_k}}(w,t_2^k)}\times
\\\times \frac{\sqrt{zw}}{z-w}z^{-h_1^k+\frac 12 b_{\lambda_k}(t_1^k)-1/2} w^{h_2^k-\frac 12 b_{\lambda_k}(t_2^k)+1/2}\frac{dzdw}{zw} 
\\+ \frac{1}{2\pi \ii}\int_{z_{cr_1}}^{z_{cr_2}} 
\frac{\Phi_{-,b_{\lambda_k}}(z,t_1^k)\Phi_{+,b_{\lambda_k}}(z,t_2^k)}{\Phi_{+,b_{\lambda_k}}(z,t_1^k)\Phi_{-,b_{\lambda_k}}(z,t_2^k)}
z^{h_2^k-h_1^k+\frac 12 b_{\lambda_k}(t_1^k)-\frac 12 b_{\lambda_k}(t_2^k)-1}dz.
\end{multline*}

Recall that the first integral in the above formula has the form
\begin{equation*}
\frac{1}{(2\pi \ii)^2}\int_{C'_z}\int_{C'_w} e^{\frac{S_{\gt,\gq}(z)-S_{\gt,\gq}(w)}{r_k}+O(1)} \frac{1}{z-w}dwdz.
\end{equation*}
In the limit $k\rightarrow \infty$ the asymptotically leading term of the double integral is $e^{\frac{S(z)-S(w)}{r_k}}$ and $\Re(S(z))<\Re(S(w))$ along the contours $C'_z,C'_w$ except at the critical points. This implies that the main contribution to the integral comes from the critical points, but since the contours of integration cross transversely at the critical points, the integral is zero in the limit. Hence,
\begin{multline*}
\lim_{k\rightarrow \infty} K_{\lambda_k,q_k}((t_1^k,h_1^k),(t_2^k,h_2^k)) =\\= \lim_{k\rightarrow \infty}\frac{1}{2\pi \ii}\int_{z_{cr_1}}^{z_{cr_2}} 
\frac{\Phi_-(z,t_1^k)\Phi_+(z,t_2^k)}{\Phi_+(z,t_1^k)\Phi_-(z,t_2^k)}
z^{h_2^k-h_1^k+\frac 12 b_{\lambda_k}(t_1^k)-\frac 12 b_{\lambda_k}(t_2^k)-1}dz.
\end{multline*}

Suppose $t_1^k<t_2^k$. Let $P_{\gl_k}(t_1^k,t_2^k)=\#\{D^-\cap(t_1^k,t_2^k)\}$. The correlation kernel can be written as follows:
\begin{align*}
\lim_{k\rightarrow \infty} & K_{\lambda_k,q_k}((t_1^k,h_1^k),(t_2^k,h_2^k)) =
\\=& \lim_{k\rightarrow \infty} \frac{1}{2\pi \ii}\int_{z_{cr_1}}^{z_{cr_2}} 
\frac {z^{h_2^k-h_1^k+\frac 12 b_{\lambda_k}(t_1^k)-\frac 12 b_{\lambda_k}(t_2^k)-1}} {\prod_{\stackrel{t_1^k<m<t_2^k}{b_{\lambda_k}'(m)=1}}(1-e^{r_km}z^{-1}) \prod_{\stackrel{t_1^k<m<t_2^k}{b_{\lambda_k}'(m)=-1}}(1-e^{-r_km}z)} dz 
\\=&
\lim_{k\rightarrow \infty} \frac{1}{2\pi \ii}\int_{z_{cr_1}}^{z_{cr_2}} \prod_{\stackrel{t_1^k<m<t_2^k}{b_{\lambda_k}'(m)=1}}(-e^{-r_km}z) \prod_{t_1^k<m<t_2^k}\frac 1{(1-e^{-r_km}z)}z^{h_2^k-h_1^k+\frac 12 b_{\lambda_k}(t_1^k)-\frac 12 b_{\lambda_k}(t_2^k)-1}dz 
\\=& 
\lim_{k\rightarrow \infty} \frac{1}{2\pi \ii}\int_{z_{cr_1}}^{z_{cr_2}} (-e^{-\gt})^{P_{\gl_k}(t_1^k,t_2^k)} (1-e^{-\gt} z)^{\gD t} z^{-\gD h + P_{\gl_k}(t_1^k,t_2^k)+\frac 12 b_{\lambda_k}(t_1^k)-\frac 12 b_{\lambda_k}(t_2^k)-1}dz
\\=& \lim_{k\rightarrow \infty} 
\frac{1}{2\pi \ii}\int_{z_{cr_1}}^{z_{cr_2}} (-e^{-\gt})^{P_{\gl_k}(t_1^k,t_2^k)} (1-e^{-\gt} z)^{\gD t} z^{-\gD h -\frac 12 \gD t-1}dz.
\end{align*}

In the case $t_2^k<t_1^k$, $K_{\lambda_k,q_k}$ has exactly the same expression as above, with $-P_{\gl_k}(t_1^k,t_2^k)$ instead of $P_{\gl_k}(t_1^k,t_2^k)$. It is easy to verify from the definition of $P_{\gl_k}(t_1^k,t_2^k)$ that it has the form $f(t_1^k)-f(t_2^k)$. Recall that the correlation functions are given by a determinant. Therefore, the process described by the kernel $K_{\lambda_k,q_k}$ is the same as the process with a kernel without the term $(-e^\gt)^{\pm P_{\gl_k}(t_1^k,t_2^k)}$. Thus, the following theorem is obtained:

\begin{theorem}\label{thm:cor_beta} 
The correlation functions of the system near a point $(\gt,\gq)$ in the bulk are given by the incomplete beta kernel 
\begin{equation}
\label{eq:CorrKer}
  K_{\gt,\gq}(\Delta t, \Delta h) = \int_\gamma (1- e^{-\tau} z)^{\Delta t} z^{-\Delta h -\frac{\Delta t}{2}-1} \frac{dz}{2\pi\ii z},
\end{equation}
where the integration contour connects the two non-real complex critical points of $S_{\tau,\chi}(z)$, passing through the real line in the interval $(0,1)$ if $\Delta(t)\geq 0$ and through $(-\infty, 0)$ otherwise.
\end{theorem}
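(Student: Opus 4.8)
The plan is to start from the asymptotic representation \eqref{eq:K-to-S} of the correlation kernel, in which the integrand carries the exponentially large factor $e^{(S_{\gt,\gq}(z)-S_{\gt,\gq}(w))/r_k}$, and to extract the limit by the saddle point method. Since $(\gt,\gq)$ lies in the bulk, Proposition \ref{prop:NumOfCritPtsEverywhere} guarantees that $S_{\gt,\gq}$ has exactly two non-real complex critical points $z_{cr_1},z_{cr_2}$, which are complex conjugates. The first step is to deform the contours $C_z,C_w$ of \eqref{eq:main-corr2} to new contours $C'_z,C'_w$ that pass transversely through both critical points and satisfy $\Re(S(z))\leq\Re(S(w))$ for all $z\in C'_z$, $w\in C'_w$, with equality only at $z=w=z_{cr}$. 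Following the proof of Theorem 4.1 of \cite{BMRT}, I would argue that the steepest descent/ascent topology of $\Re S$ permits routing $C'_z$ along directions where $\Re S$ decreases away from each saddle and $C'_w$ along the orthogonal directions where it increases; away from the saddles the exponential factor is then strictly decaying.

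The second step records the price of this deformation. In moving $C_z$ relative to $C_w$ the two contours necessarily sweep across one another along an arc joining $z_{cr_1}$ to $z_{cr_2}$, and crossing the pole of $\frac{1}{z-w}$ at $z=w$ produces a residue contribution equal to a single integral of the diagonal ($z=w$) integrand along this arc. The remaining double integral over $C'_z\times C'_w$ vanishes as $k\to\infty$: its only potentially surviving contribution concentrates at the saddle $z=w=z_{cr}$, but there $\Re(S(z))=\Re(S(w))$ and the transverse (orthogonal) crossing of the steepest descent and ascent paths makes the leading Gaussian contributions cancel against the $\frac{1}{z-w}$ factor. Hence in the limit the kernel equals only the residue integral $\frac{1}{2\pi\ii}\int_{z_{cr_1}}^{z_{cr_2}} \frac{\Phi_-(z,t_1^k)\Phi_+(z,t_2^k)}{\Phi_+(z,t_1^k)\Phi_-(z,t_2^k)}$ times an explicit power of $z$.

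The third step is the algebraic simplification of this residue integral via the product formulas \eqref{eq:Phis}. Assuming $t_1^k<t_2^k$, every factor in the $\Phi$-ratio cancels except those indexed by the slope-changes strictly between $t_1^k$ and $t_2^k$; I would introduce $P_{\gl_k}(t_1^k,t_2^k)=\#\{D^-\cap(t_1^k,t_2^k)\}$ to bookkeep them. Since $\gD t=t_1^k-t_2^k$ is fixed, only finitely many factors remain, and with $r_k m\to\gt$ for the relevant indices $m$ the product collapses in the limit to $(-e^{-\gt})^{P_{\gl_k}}(1-e^{-\gt}z)^{\gD t}z^{-\gD h-\frac12\gD t-1}$. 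The prefactor $(-e^{-\gt})^{\pm P_{\gl_k}}$ has the multiplicative gauge form $f(t_1^k)/f(t_2^k)$; because the correlation functions are determinants (Theorem \ref{thm:fin-corr2}), such a conjugation of the kernel leaves every minor invariant, so it may be discarded. This yields the incomplete beta kernel \eqref{eq:CorrKer}, and the statement about where the contour $\gamma$ meets the real axis (the interval $(0,1)$ when $\gD t\geq 0$, otherwise $(-\infty,0)$) is read off from the position of the steepest descent arc relative to the poles of the integrand at $z=0$ and $z=e^{\gt}$.

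The main obstacle is the first step: the global contour deformation establishing $\Re(S(z))\leq\Re(S(w))$ with transverse crossing at the saddles. Although the construction is inherited from \cite{BMRT}, there the back wall had slopes strictly inside $(-1,1)$, whereas here lattice slopes $\gb_i=\pm 1$ are permitted. These produce logarithmic singularities of $S$ on segments of the real axis, and—unlike the non-lattice case treated by Lemma \ref{lem:badIntervals}—real critical points can occur inside such intervals, degenerating the level-set picture. The delicate point is therefore to verify that the descent and ascent paths can still be threaded through the two complex saddles without being obstructed by these singular real intervals. Once this topological input is secured, the rest is the routine residue evaluation and gauge reduction sketched above.
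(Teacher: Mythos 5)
Your proposal follows essentially the same route as the paper: deform the contours through the two complex conjugate saddles as in Theorem 4.1 of \cite{BMRT}, collect the residue of $\frac{1}{z-w}$ along the arc joining them, discard the vanishing double integral, telescope the $\Phi$-ratio down to the factors indexed by $D^-\cap(t_1^k,t_2^k)$, and remove the resulting prefactor as a determinant-preserving gauge factor. The ``main obstacle'' you flag (whether the contour deformation survives the presence of lattice slopes $\gb_i=\pm1$) is exactly the point the paper also leans on, resolving it only by asserting that the \cite{BMRT} construction applies verbatim for arbitrary $\gb_i\in[-1,1]$, so your account matches the paper's proof in both substance and level of detail.
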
 

When $\gt,\gq$ are such that $S_{\gt,\gq}$ has no non-real complex critical points, there are two possibilities. Either along the original contours $C_z,C_w$ it is true that $\Re(S(z))\leq\Re(S(w))$, in which case the above arguments imply the correlation kernel is zero, or otherwise, when the contours are deformed to obtain $\Re(S(z))\leq\Re(S(w))$, one of the contours completely passes over the other and the correlation kernel is given again by \eqref{eq:CorrKer} with $\gamma$ now being a full circle around $0$ and such that $e^\tau$ is not inside it. Taking $\gD h = \gD t = 0$ gives that the limit of one point correlation functions, $\lim_{k\rightarrow\infty}\rho_{\lambda_k,q_k}(t_1^k,h_1^k)$ is either $0$ or $1$. This means that if $S_{\gt,\gq}(z)$ has no non-real complex critical points, the tiles near the point $\gt,\gq$ are horizontal tiles with probability $0$ or $1$, and therefore the point$(\gt,\gq)$ is in a frozen region. 

The above results imply that the curve $(\gt(z),\gq(z))$ where $S_{\gt,\gq}'(z)=S_{\gt,\gq}''(z)=0$ is the boundary between the frozen and liquid regions, so called frozen boundary. Section \ref{sec:TheBoundary} studies this curve in detail.

Theorem \ref{thm:IndOfFam} follows immediately from Theorem \ref{thm:cor_beta}, since \eqref{eq:CorrKer} gives that the correlation kernel depends only on the critical points of the function $S_{\tau,\chi}(z)$ which, as can be seen from its formula given in \eqref{eq:S_integral}, is independent of the sequence $\lambda_k$ of partitions and only depends on the function $V(\tau)$ giving the back wall in the scaling limit.

\subsection{Correlation kernels on the frozen boundary and when \texorpdfstring{$\gq\rightarrow\infty$}{chi is large}}
\label{subsec:CorrKerOther}
The correlation kernel on the frozen boundary is given by the Airy kernel and near the cusps by the Pearcey process. The calculations are almost identical to those in \cite{OR2}, therefore they will be omitted.

It is proven in Section 4.2 of \cite{BMRT} that if in the limit $\chi\rightarrow\infty$, $S(z)$ has one pair of complex conjugate critical points and they have the asymptotics given in Lemma \ref{lem:critPtsLargeChi}, then the local statistics of the system is described by the bead process of Boutillier \cite{Bou}.

The analysis in Section \ref{subsubsec:nearCorners} shows that in the limit $$\chi\rightarrow\infty,\gt=V_{j-1}+\gd,\gd\rightarrow 0,$$ if $$p=e^{\chi-\chi^{j-1}}|\gd|^{1-\frac 12(\gb_j-\gb_{j-1})}$$ is fixed, then depending on the type of the angle at $V_{j-1}$ and whether $p\ll1$ or $p\gg1$ either the bead process or a frozen region is observed. This is illustrated in Figures \ref{Fig:M-Bdry} and \ref{Fig:SM1-BoundaryComponents}.

\section{The boundary of the limit shape}
\label{sec:TheBoundary}
The purpose of this section is to study the frozen boundary. It consists of those points $(\gt,\gq)$ for which $S_{\gt,\gq}(z)$ has double real critical points. In this section the most useful representation of the frozen boundary is as the parametric curve $(\gt(z),\gq(z))$ where $\gt(z)$ and $\gq(z)$ are given by \eqref{eq:TauZ} and \eqref{eq:ChiZ}, and $z\in\RRR$ runs over all real numbers for which $\gt(z)$ and $\gq(z)$ are real.

Define $\UU$ to be 
$$\UU= \left((-\infty,e^{V_0})\union(e^{V_n},\infty)\right)\bigcup\left(\bigcup_{\stackrel{1\leq i\leq n}{\gb_i=\pm 1}} (e^{V_{i-1}},e^{V_i})\right).$$
It was established in Section \ref{subsec:ValidZ-s} that if $\gt(z)$ and $\gq(z)$ are real, then $z$ must be in $\UU$.
\subsection{\texorpdfstring{$\gt(z)\text{ and }\gq(z)$}{tau(z) and chi(z)} are real for all \texorpdfstring{$z\in\UU$}{z in U}}
\label{subsec:TauChiDefinedForAllValidZ}

\begin{proposition}
\label{prop:TauChiRealForAllZs}
For every $z\in\UU$, $\gt=\gt(z)$ and $\gq=\gq(z)$ given by \eqref{eq:TauZ} and \eqref{eq:ChiZ} are real. In particular, $\forall z\in\UU$ the point $(\gt(z),\gq(z))$ is on the frozen boundary.
\end{proposition}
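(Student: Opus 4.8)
The goal is to show that for every $z\in\UU$, both $\gt(z)$ from \eqref{eq:TauZ} and $\gq(z)$ from \eqref{eq:ChiZ} are real. The plan is to treat the two formulas separately, handling $\gt(z)$ first since it is the more delicate of the two, and then deducing reality of $\gq(z)$ as a consequence.

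For $\gt(z)$, examine \eqref{eq:TauZ}: since $z\in\UU\subset\RRR$ and each $e^{V_i}$ is real, the quantity $e^\gt$ will be real precisely when the denominator $\sum_{i=0}^n\frac 12(\gb_{i+1}-\gb_i)\frac 1{z-e^{V_i}}$ is a real number (and the formula produces a positive value, so that $\gt=\ln(e^\gt)$ makes sense). The sum is manifestly real for $z$ avoiding all the $e^{V_i}$, which holds on $\UU$. The genuine content is therefore to verify that $e^\gt>0$, i.e. that the right-hand side of \eqref{eq:TauZ} is positive. First I would rewrite $z(zS')'$ from \eqref{eq:Spp}, noting that $\frac 12(1+\gb_1), \frac 12(\gb_{i+1}-\gb_i)$ and $\frac 12(1-\gb_n)$ are exactly the successive differences of the slope sequence $\gb_0=-1,\gb_1,\ldots,\gb_n,\gb_{n+1}=1$; this lets me write the denominator compactly and analyze its sign on each component of $\UU$ by tracking how the rational function $\sum_{i=0}^n\frac 12(\gb_{i+1}-\gb_i)\frac 1{z-e^{V_i}}$ behaves between consecutive poles.

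For $\gq(z)$, the formula \eqref{eq:ChiZ} is a sum of terms of the form $c\ln\(\frac{ze^{-V_i}-1}{ze^{-V_{i-1}}-1}\)$ plus real constants. Each such logarithm is real exactly when its argument is a positive real number. Once reality of $e^\gt$ (equivalently of $e^{-\gt}$) is in hand, every factor $ze^{-V_i}-1$ and $ze^{-\gt}-1$ appearing in \eqref{eq:ChiZ} is real; so I must check that the relevant ratios are positive on each component of $\UU$. This reduces to a sign-counting argument: for $z$ in a fixed interval of $\UU$, I track the sign of each $ze^{-V_i}-1$ (it is negative when $z<e^{V_i}$ and positive when $z>e^{V_i}$) and verify that numerator and denominator of each logarithmic argument carry the same sign, so the quotient is positive and the logarithm real. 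The constants $-\frac 12\gt$ and $\frac 12(V(V_0)+V_0)$ contribute real values once $\gt$ is real.

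The main obstacle is establishing the positivity of $e^\gt$, i.e. controlling the sign of the denominator in \eqref{eq:TauZ} on each connected piece of $\UU$, since this is where the interplay between the slopes $\gb_i$ and the ordering of the $e^{V_i}$ really enters. The key tool is that the coefficients $\frac 12(\gb_{i+1}-\gb_i)$ telescope the boundary conditions $\gb_0=-1$, $\gb_{n+1}=1$, so the rational function has controlled behavior at $\pm\infty$ and a definite sign pattern between poles; I expect that on each interval of $\UU$ one can read off that $z-1/(\text{denominator})$ lands in the positive reals, with the lattice-slope condition $\gb_i=\pm 1$ defining $\UU$ being exactly what guarantees the argument goes through. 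Once this sign analysis is complete, reality of $\gq(z)$ follows from the same interval-by-interval bookkeeping, and the final sentence of the proposition — that $(\gt(z),\gq(z))$ lies on the frozen boundary — is immediate from the defining condition \eqref{eq:Bdry}, since $\gt(z),\gq(z)$ were constructed precisely so that $z$ is a double real critical point of $S_{\gt,\gq}$.
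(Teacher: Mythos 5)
Your outline identifies the right two sub-goals (positivity of $e^{\gt(z)}$ and reality of each logarithm in \eqref{eq:ChiZ}), but both are left at the level of ``one can read off the signs between consecutive poles,'' and in each case that sign analysis is exactly where the difficulty lives. For $\gt(z)$: the denominator $T(z)=\sum_{i=0}^n\frac 12(\gb_{i+1}-\gb_i)\frac 1{z-e^{V_i}}$ has residues of mixed signs (the $\gb_i$ need not be monotone), so a priori it can vanish between consecutive poles, and $z-1/T(z)$ can a priori fail to be positive. The paper has to rule this out: Lemma \ref{lem:T(z)neq0} shows $T^{(2m)}(z)\neq 0$ on $\UU$ by an induction on the number of corners (merging two adjacent corners while controlling the sign of the modified function $\tilde T$), and a second induction of the same flavor establishes $zT(z)>1$, i.e.\ $e^{\gt(z)}>0$. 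Nothing in your proposal substitutes for these inductions; the telescoping of the boundary values $\gb_0=-1$, $\gb_{n+1}=1$ controls $T$ near $\pm\infty$ but says nothing about the bounded intervals $(e^{V_{l-1}},e^{V_l})$ where the claim is delicate.

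For $\gq(z)$ the proposal contains a genuine error: it is not true that every logarithmic argument in \eqref{eq:ChiZ} has numerator and denominator of the same sign. For $z\in(e^{V_{l-1}},e^{V_l})$ the ratio $\frac{ze^{-V_l}-1}{ze^{-V_{l-1}}-1}$ has negative numerator and positive denominator, so that logarithm is \emph{not} real. The proposition survives only because the coefficient of this particular term vanishes: one first shows (via Lemma \ref{lem:T(z)neq0} again) that $T$ keeps a constant sign on the whole interval, deduces from \eqref{eq:tauT} that $e^{\gt(z)}$ lies on the correct side of $z$, so that the offending logarithm is the one multiplied by $1-\gb_l$ (resp.\ $1+\gb_l$), and then uses $\gb_l=\pm1$ to kill it. This vanishing-coefficient mechanism, rather than a same-sign check, is the content of the $\gq$ half of the proof, and it is absent from your argument. (Your final sentence, that reality of $\gt(z),\gq(z)$ places the point on the frozen boundary by \eqref{eq:Bdry}, is fine and matches the paper.)
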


Before proving this proposition we need to prove one lemma. 

Define a function $T(z)$ by 
\begin{equation}
\label{eq:T}
T(z):=\sum_{i=0}^n\frac 12(\gb_{i+1}-\gb_i)\frac 1{z-e^{V_i}}.
\end{equation}
$\gt(z)$ from \eqref{eq:TauZ} can be expressed in terms of $T(z)$ by
\begin{equation}
\label{eq:tauT}
e^{\gt(z)}=z-\frac 1{T(z)}.
\end{equation}

\begin{lemma}
\label{lem:T(z)neq0}
For any integer $m\geq 0$ the $2m$'th derivative of $T$, $T^{(2m)}(z)\neq 0$ for all $z\in\UU$.
\end{lemma}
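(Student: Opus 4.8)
The plan is to reduce the statement to a sign analysis of a single rational function and then control that sign on each piece of $\UU$ by a summation-by-parts argument. Writing $c_i := \frac12(\gb_{i+1}-\gb_i)$, equation \eqref{eq:T} reads $T(z) = \sum_{i=0}^n c_i\,(z-e^{V_i})^{-1}$, and differentiating term by term gives $T^{(2m)}(z) = (2m)!\sum_{i=0}^n c_i\,(z-e^{V_i})^{-(2m+1)}$. Thus it suffices to show that $g_m(z) := \sum_{i=0}^n c_i\,(z-e^{V_i})^{-(2m+1)}$ never vanishes on $\UU$. I want to stress at the outset that the \emph{oddness} of the exponent $2m+1$ is exactly what makes the argument run, and this is where the hypothesis that the derivative order is even enters; for odd derivatives the corresponding sum genuinely can vanish.

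Set $u_i = z-e^{V_i}$ and $w_i = u_i^{-(2m+1)}$, so that $g_m = \sum_i c_i w_i$. Since $e^{V_0}<\cdots<e^{V_n}$, the $u_i$ are strictly decreasing in $i$, and the first step is to record the sign and monotonicity pattern of $w_i$ on each connected component of $\UU$. On a piece lying entirely to one side of all poles, namely $(-\infty,e^{V_0})$ or $(e^{V_n},\infty)$, all $u_i$ share one sign and, because $2m+1$ is odd, $w_i$ is monotone increasing in $i$ (negative throughout on the left piece, positive throughout on the right). On a lattice-slope interval $(e^{V_{l-1}},e^{V_l})$ with $\gb_l=\pm1$ the quantities $u_i$ change sign exactly once, at $i=l$, so $w_i$ is positive and increasing for $i\le l-1$, negative and increasing for $i\ge l$, with a single downward jump between $w_{l-1}>0$ and $w_l<0$.

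The main idea is then to evaluate $g_m$ by Abel summation using one of the two ``potentials'' $p_i := \frac12(1+\gb_i)$ or $q_i := \frac12(1-\gb_i)$, for which $c_i = p_{i+1}-p_i = q_i-q_{i+1}$ and which satisfy the boundary values $p_0=q_{n+1}=0$, $p_{n+1}=q_0=1$ coming from $\gb_0=-1$, $\gb_{n+1}=1$. On the two unbounded pieces a single summation by parts (with $q$ on the right piece, $p$ on the left) turns $g_m$ into a boundary term $w_0$ (resp.\ $w_n$) plus a sum of terms $q_j(w_j-w_{j-1})$ (resp.\ $p_j(w_{j-1}-w_j)$) which, by the monotonicity recorded above, all carry the same sign as that boundary term. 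Since $w_0\ne0$ (resp.\ $w_n\ne0$), this forces $g_m>0$ (resp.\ $g_m<0$).

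The delicate case, and what I expect to be the main obstacle, is a lattice-slope interval, where the sign change of $w_i$ at $i=l$ obstructs any uniform-sign summation across the whole range. The resolution is to split the sum at $l$ and apply summation by parts to each half; the extra boundary term produced at the split is a multiple of $q_l$ (if I sum with $q$) or of $p_l$ (if I sum with $p$), and it is annihilated precisely because $\gb_l=\pm1$ forces $q_l=0$ (when $\gb_l=1$) or $p_l=0$ (when $\gb_l=-1$). Choosing $q$ on both halves when $\gb_l=1$, and $p$ on both halves when $\gb_l=-1$, then expresses $g_m$ as a strictly nonzero boundary term ($w_0>0$, resp.\ $w_n<0$) plus terms of that same sign, yielding $g_m>0$ (resp.\ $g_m<0$) and hence $T^{(2m)}(z)\ne0$ for all $z\in\UU$.
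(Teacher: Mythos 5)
Your proof is correct, but it takes a genuinely different route from the paper's. The paper argues by induction on the number of corners: the base case is a direct sign computation, and the inductive step assumes $T^{(2m)}(z_0)=0$, introduces a deformation $\tilde T(z,a,b)$ in which two adjacent corner positions are movable, and uses monotonicity in those parameters to slide a corner until two corners merge, producing a back wall with fewer corners for which the $2m$-th derivative still vanishes at $z_0$ --- contradicting the inductive hypothesis. Your argument is a one-shot Abel summation: writing $c_i=p_{i+1}-p_i=q_i-q_{i+1}$ with $p_i=\frac12(1+\gb_i)$, $q_i=\frac12(1-\gb_i)$, the boundary values $p_0=q_{n+1}=0$, $p_{n+1}=q_0=1$ give $g_m=w_0+\sum_{i=1}^n q_i(w_i-w_{i-1})$ (resp.\ $g_m=w_n+\sum_{i=1}^n p_i(w_{i-1}-w_i)$), and since $p_i,q_i\in[0,1]$ (this nonnegativity, coming from $\gb_i\in[-1,1]$, is the one hypothesis you use silently but essentially) every summand has the sign of the nonzero boundary term once the single sign-changing difference $w_l-w_{l-1}$ is killed by $q_l=0$ or $p_l=0$ on a lattice-slope interval. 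Your approach buys two things: it avoids the induction and its case analysis entirely, and it outputs the sign of $T^{(2m)}$ on each component of $\UU$ (positive on $(e^{V_n},\infty)$ and on intervals with $\gb_l=1$, negative on $(-\infty,e^{V_0})$ and on intervals with $\gb_l=-1$) rather than mere nonvanishing --- a fact the paper later needs for $T$ itself in the proof of Proposition \ref{prop:TauChiRealForAllZs} and recovers there by a separate limiting argument at the endpoints. The paper's deformation argument, on the other hand, is the one reused almost verbatim to prove the inequality $z-\frac{1}{T(z)}>0$ in that proposition, so it earns its keep elsewhere. One cosmetic remark: the splitting of the sum at $i=l$ that you describe is not really necessary --- the single global summation by parts already works, because the coefficient $q_l$ (or $p_l$) of the unique bad difference term is zero.
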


\begin{proof}
Formula \eqref{eq:T} gives that 
$$T^{(2m)}(z)=\sum_{i=0}^n\frac 12(\gb_{i+1}-\gb_i) \frac{(-1)^{2m}(2m)!}{(z-e^{V_i})^{2m+1}}.$$
Suppose by contradiction that $\exists z_0\in\UU$ such that $T^{(2m)}(z_0)=0$. From the definition of $\UU$ it follows that $z_0\in(-\infty,e^{V_0})\union(e^{V_n},\infty)$ or $z_0\in(e^{V_{l-1}},e^{V_l})$ with $\gb_l=\pm 1$. The case $z_0\in(e^{V_{l-1}},e^{V_l})$, $\gb_l=1$ will be treated. Other cases are similar.

When there are no corners, i.e. when $n=1$, the proof is trivial. The rest of the proof is by induction on the number of corners of the limiting back wall. Suppose that there is only one corner, so $n=2$, $V_0<V_1<V_2$, $l=1$, $e^{V_0}<z<e^{V_1}$, $\gb_1=1$, and $\gb_2\in[-1,1)$. Then 
\begin{multline*}
T^{(2m)}(z_0)=\((-1)^{2m}(2m)!\)\frac 12\(\frac 2{z-e^{V_0}} + \frac{\gb_2-1}{z-e^{V_1}} + \frac{1-\gb_2}{z-e^{V_2}}\) \\= (2m)!\frac 12\(\frac 2{z-e^{V_0}} + \frac{(1-\gb_2)(e^{V_2}-e^{V_1})}{(z-e^{V_1})(z-e^{V_2})}\) >0,
\end{multline*}
which contradicts the assumption $T^{(2m)}(z_0)=0$. Thus, $T^{(2m)}(z_0)\neq 0$ when the back wall has a single corner.

Now, suppose the result is true when the number of corners is less than $n-1$, and prove it when the number of corners is $n-1$. First, consider the case $n>l+1$, i.e. when there is at least one corner to the right of the corner at position $V_l$. It will be shown that if $T^{(2m)}(z_0)=0$, the number of corners to the right of $z_0$ can be reduced and still have that $T^{(2m)}(z_0)=0$. 

\begin{figure}[ht]
\caption{\label{Fig:MoveVs} Reducing the number of corners.}
\includegraphics[width=10cm]{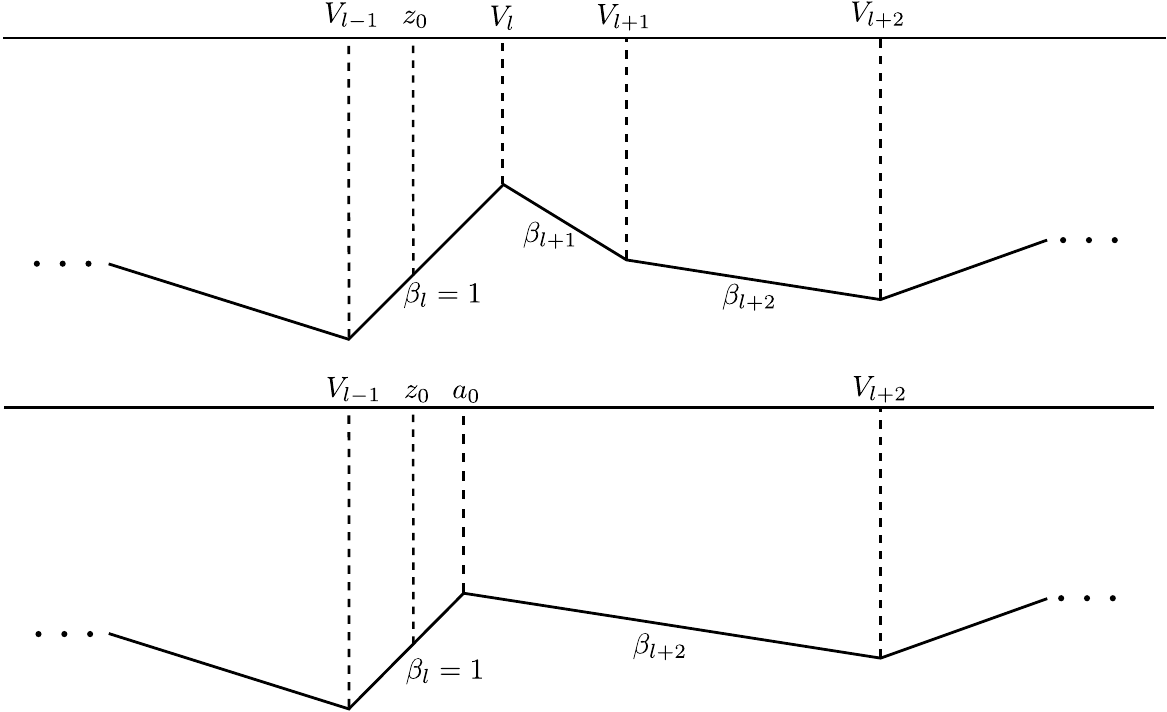}
\end{figure}

Consider the function $\tilde{T}(z)$ defined as
\begin{equation*}
\tilde{T}(z,a,b):=\sum_{\stackrel{i=0}{i\neq l,i\neq l+1}}^{n}\frac 12(\gb_{i+1}-\gb_i)\frac 1{z-e^{V_i}}+\frac 12(\gb_{l+1}-\gb_l)\frac 1{z-e^{a}}+\frac 12(\gb_{l+2}-\gb_{l+1})\frac 1{z-e^{b}}.
\end{equation*}
Notice that $\tilde{T}(z,V_{l},V_{l+1})=T(z)$. In particular, $\frac{\partial^{2m}}{\partial z^{2m}}\tilde{T}(z_0,V_{l},V_{l+1})=T^{(2m)}(z_0)=0$. Consider two cases.
\begin{enumerate}
\item[Case 1:] $\gb_{l+2}-\gb_{l+1}>0$. In this case $\frac{\partial^{2m}}{\partial z^{2m}}\tilde{T}(z_0,a,b)$ is an increasing function of $b$ for $b$ such that $e^b>z_0$ and thus $\frac{\partial^{2m}}{\partial z^{2m}}\tilde{T}(z_0,V_l,V_l)<0$. From $\beta_l=1$ it follows that $\frac 12(\beta_{l+2}-\beta_l)<0$ which implies that $\frac{\partial^{2m}}{\partial z^{2m}}\tilde{T}(z_0,a,a)$ is decreasing as a function of $a$ when $e^a>z_0$ and that $$\lim_{a\rightarrow\ln(z_0)+}\frac{\partial^{2m}}{\partial z^{2m}}\tilde{T}(z_0,a,a)=\infty.$$ 
Thus, $\exists a_0\in\RRR$ such that $z_0<e^{a_0}<e^{V_l}$ and $\frac{\partial^{2m}}{\partial z^{2m}}\tilde{T}(z_0,a_0,a_0)=0$. This contradicts the inductive assumption since $\tilde{T}(z,a_0,a_0)$ is equal to the function $T(z)$ corresponding to a piecewise linear back wall with parameters $V_0,V_1,\ldots,V_{l-1}$, $a_0$, $V_{l+2},\ldots,V_n$ and $\beta_1,\beta_2,\ldots,\beta_l,\beta_{l+2},\ldots,\beta_n$, which has at most $n-2$ corners (see Figure \ref{Fig:MoveVs}).
\item[Case 2:] $\gb_{l+2}-\gb_{l+1}<0$. The argument is similar to the previous case. Now, $\frac{\partial^{2m}}{\partial z^{2m}}\tilde{T}(z_0,a,b)$ is a decreasing function of $b$ for $b$ such that $e^b>z_0$ and thus $\frac{\partial^{2m}}{\partial z^{2m}}\tilde{T}(z_0,V_l,V_{l+2})<0$. $\frac{\partial^{2m}}{\partial z^{2m}}\tilde{T}(z_0,a,V_{l+2})$ is decreasing as a function of $a$ when $e^a>z_0$ and  $$\lim_{a\rightarrow\ln(z_0)+}\frac{\partial^{2m}}{\partial z^{2m}}\tilde{T}(z_0,a,V_{l+2})=\infty.$$ 
Thus, $\exists a_0\in\RRR$ such that $z_0<e^{a_0}<e^{V_l}$ and $\frac{\partial^{2m}}{\partial z^{2m}}\tilde{T}(z_0,a_0,V_{l+2})=0$. This contradicts the inductive assumption since $\tilde{T}(z,a_0,V_{l+2})$ is equal to the function $T(z)$ corresponding to a piecewise linear back wall with parameters $V_0,V_1,\ldots,V_{l-1}$, $a_0$, $V_{l+2},\ldots,V_n$ and $\beta_1,\beta_2,\ldots,\beta_{l+1},\beta_{l+3},\ldots,\beta_n$, which has at most $n-2$ corners.
\end{enumerate}

This concludes the proof when there is at least one corner to the right of the corner at position $V_l$. When there are corners to the left, the argument can be easily modified to show that if $T^{(2m)}(z_0)=0$, the number of corners to the left of $z_0$ can be reduced and still have that $T^{(2m)}(z_0)=0$. 
$\qed$
\end{proof}

\begin{proof}[Proof of Proposition \ref{prop:TauChiRealForAllZs}]
Again, only the case $z\in(e^{V_{l-1}},e^{V_l})$ with $\gb_l=1$ will be presented, as other cases work in the same way. From \eqref{eq:tauT}, to show $\gt$ is real it is enough to show that $z-\frac 1{T(z)}>0$. Lemma \ref{lem:T(z)neq0} gives that $T(z)\neq 0$. If $T(z)<0$, there is nothing to show, so assume $T(z)>0$. 

Again, do induction on the number of corners on the back wall. When there is only one corner, i.e. when $n=2$, $V_0<V_1<V_2$, $l=1$, $e^{V_0}<z<e^{V_1}$, $\gb_1=1$, and $\gb_2\in[-1,1)$, the inequality can be rewritten as follows:
\begin{multline*}
z-\frac 1{T(z)}>0 \Leftrightarrow 1<zT(z) \Leftrightarrow 1< \frac 12 z \(\frac 2{z-e^{V_0}} + \frac{\gb_2-1}{z-e^{V_1}} + \frac{1-\gb_2}{z-e^{V_2}}\) \\\Leftrightarrow \frac 12 z(1-\gb_2)\frac{e^{V_1}-e^{V_2}}{(z-e^{V_1})(z-e^{V_1})} < \frac z{z-e^{V_0}} - 1.
\end{multline*}
The last inequality is true since the LHS is less than zero and the RHS is greater than zero.

Now, assume that $z-\frac 1{T(z)}>0$ whenever the back wall has at most $n-2$ corners. Fix $z$ and let $0<l<n-1$ be such that $z<e^{V_l}$. Such $l$ exists if not all corners are to the left of $z$. The case when all corners are to the left of $z$ can be treated similarly. 

Consider the function $\tilde{T}(z,V_l,b)$ from the proof of Lemma \ref{lem:T(z)neq0}. Depending on the sign of $\beta_{l+2}-\beta_{l+1}$, $\tilde{T}(z,V_l,b)$ is either increasing or decreasing in $b$ for $b\geq V_l$. Hence, either $0<\tilde{T}(z,V_l,V_l)<T(z)$ or $0<\tilde{T}(z,V_l,V_{l+2})<T(z)$. However, as in Lemma \ref{lem:T(z)neq0}, both $\tilde{T}(z,V_l,V_l)$ and $\tilde{T}(z,V_l,V_{l+2})$ are equal to the function $T(z)$ corresponding to back walls with at most $n-2$ corners. Thus, the induction hypothesis gives that $z-\frac 1{\tilde{T}(z,V_l,V_{l+2})}>0$ and $z-\frac 1{\tilde{T}(z,V_l,V_l)}>0$. Hence, $z-\frac 1{T(z)}>0$. 

This establishes that if $z\in\UU$, then $\gt(z)=\ln(z-\frac 1{T(z)})$ is real.

Now, let us show that $\gq(z)$ is real. This is obvious if $z\in(-\infty,e^{V_0})\union(e^{V_n},\infty)$. Suppose $z\in(e^{V_{l-1}},e^{V_l})$ with $\gb_l=\pm 1$. If $\gb_l=1$, ($\gb_l=-1$), then $\gb_{l+1}-\gb_l<0$, ($\gb_{l+1}-\gb_l>0$). If $z=e^{V_l}-\varepsilon$, it follows from \eqref{eq:T} that $T(z)>0$, ($T(z)<0$). By Lemma \ref{lem:T(z)neq0}, $T(z)\neq 0$, which implies $T(z)>0$, ($T(z)<0$) for all $z\in(e^{V_{l-1}},e^{V_l})$. Now, \eqref{eq:tauT} implies that $e^\gt<z<e^{V_l}$, ($e^\gt>z>e^{V_{l-1}})$. Therefore, the coefficient of $\ln(\frac{ze^{V_l}-1}{ze^{V_{l-1}}-1})$ in \eqref{eq:ChiZ} is $1-\gb_l=0$, ($1+\gb_l=0$). In conclusion, it was shown that if $z\in(e^{V_{l-1}},e^{V_l})$ with $\gb_l=\pm 1$, then the coefficient of $\ln(\frac{ze^{V_l}-1}{ze^{V_{l-1}}-1})$ in \eqref{eq:ChiZ} is always zero, which implies that $\gq(z)$ is real for all $z\in\UU$.
$\qed$
\end{proof}

\subsection{The frozen boundary: connected components, cusps}
\label{subsec:TheFrozenBoundary}

\begin{proposition}
The frozen boundary consists of connected components, one for each outer corner where at least one of the slopes is $\pm 1$,  and one connected component at the bottom (see Figure \ref{Fig:M-Bdry}).
\end{proposition}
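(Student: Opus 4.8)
The plan is to analyze the parametric curve $(\gt(z),\gq(z))$ defined by \eqref{eq:TauZ} and \eqref{eq:ChiZ} as $z$ ranges over $\UU$, and to count its connected components together with locating its cusps. By Proposition \ref{prop:TauChiRealForAllZs}, $\gt(z)$ and $\gq(z)$ are real precisely when $z\in\UU$, and $\UU$ is a disjoint union of open intervals: the two unbounded intervals $(-\infty,e^{V_0})$ and $(e^{V_n},\infty)$, together with one bounded interval $(e^{V_{i-1}},e^{V_i})$ for each $i$ with $\gb_i=\pm 1$. Since $(\gt,\gq)$ depends continuously on $z$ on each such interval, each interval traces out a connected arc of the frozen boundary, so the number of connected components is at most the number of intervals in $\UU$. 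First I would therefore determine, for each interval, the limiting behavior of $(\gt(z),\gq(z))$ as $z$ approaches each endpoint, in order to see which arcs join up and which run off to infinity.

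The key computation is the behavior of $\gq(z)$ near the endpoints $e^{V_i}$. From \eqref{eq:ChiZ}, as $z\to e^{V_i}$ the logarithmic terms involving $ze^{-V_i}-1$ force $\gq(z)\to\pm\infty$ unless the relevant coefficient vanishes. The proof of Proposition \ref{prop:TauChiRealForAllZs} already isolated exactly this phenomenon: on an interval $(e^{V_{l-1}},e^{V_l})$ with $\gb_l=\pm 1$, the coefficient of $\ln\bigl(\frac{ze^{-V_l}-1}{ze^{-V_{l-1}}-1}\bigr)$ is zero, so $\gq$ does not blow up from that logarithm, but near the \emph{other} endpoint the corresponding coefficient (governed by the slope on the adjacent interval) is generically nonzero and $\gq\to+\infty$. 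I would organize the analysis around the outer corners, i.e. the points $V_i$ with $V'(V_i-)<V'(V_i+)$: at such a corner at least one adjacent slope is a lattice slope precisely in the cases the statement singles out, and this is where two arcs from adjacent intervals of $\UU$ meet at a finite point rather than escaping to infinity. The counting claim ``one component per qualifying outer corner, plus one at the bottom'' should then emerge from matching up the endpoints of the intervals: the unbounded intervals and the interval-ends with vanishing coefficient assemble into a single bottom component passing through the liquid region near the real axis, while each qualifying outer corner glues together a closed component that carries a cusp.

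For the cusps, the natural tool is to compute $\frac{d\gt}{dz}$ and $\frac{d\gq}{dz}$ and locate the points where the tangent vector $(\gt'(z),\gq'(z))$ vanishes simultaneously, producing a singular point of the curve. A cusp occurs where $\gt'(z)=\gq'(z)=0$ but the curve reverses direction; differentiating \eqref{eq:tauT} gives $e^{\gt(z)}\gt'(z)=1+\frac{T'(z)}{T(z)^2}$, so $\gt'(z)=0$ exactly when $T'(z)=-T(z)^2$, and one checks that $\gq'(z)$ vanishes there as well using the relation between \eqref{eq:TauZ} and \eqref{eq:ChiZ} that defines the boundary (the double-critical-point condition \eqref{eq:Bdry}). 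Lemma \ref{lem:T(z)neq0}, applied with $m\geq 1$, should be what guarantees nondegeneracy and rules out spurious singular points on intervals where $\gb_i\neq\pm1$, thereby tying each cusp to a genuine outer corner.

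I expect the main obstacle to be the bookkeeping that shows the arcs glue together in exactly the claimed pattern: one must verify that as $z$ sweeps each bounded interval $(e^{V_{l-1}},e^{V_l})$ with $\gb_l=\pm1$, the curve goes to infinity at one end and meets a neighboring arc at a finite point at the other end, and that these finite meeting points are precisely the qualifying outer corners, with no over- or under-counting when several lattice-slope intervals sit adjacent to one corner. Controlling the signs of $T(z)$ across the intervals — which the proof of Proposition \ref{prop:TauChiRealForAllZs} handles on each interval individually — and translating them into statements about whether $\gt(z)$ stays in the appropriate range and whether $\gq(z)\to+\infty$ or remains finite at each endpoint, is the delicate part; the rest is continuity and the cusp computation.
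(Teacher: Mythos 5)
Your overall framework---parametrize the frozen boundary by $z\in\UU$, use the continuity of $(\gt(z),\gq(z))$ on each interval of $\UU$, and decide from the endpoint limits which arcs escape to infinity and which join up---is exactly the paper's strategy. But the gluing pattern you sketch is not the correct one, and since the entire content of the proposition is the count of components, this is a genuine gap rather than deferred bookkeeping. You assert that at each outer corner where \emph{at least one} slope is a lattice slope, ``two arcs from adjacent intervals of $\UU$ meet at a finite point rather than escaping to infinity.'' Two arcs can only meet at $V_i$ if \emph{both} adjacent intervals $(e^{V_{i-1}},e^{V_i})$ and $(e^{V_i},e^{V_{i+1}})$ lie in $\UU$, i.e.\ if both slopes at the corner are lattice slopes; and the paper shows that $\gq(z)$ has a finite limit $\gq_0$ at $z=e^{V_i}$ exactly in that case ($\gb_i=-1$, $\gb_{i+1}=1$), which is where the two arcs are joined. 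At an outer corner with exactly one lattice slope there is only one adjacent interval of $\UU$, and on that interval $\gq(z)\to+\infty$ at \emph{both} endpoints---not just at ``the other endpoint,'' as you suggest; the divergences come from the neighboring logarithms and the $e^{\gt}$-dependent terms in \eqref{eq:ChiZ}, not from the single term whose coefficient vanishes. Such an interval is therefore a complete connected component by itself. The correct correspondence is: an interval with $\gb_l=1$ is attached to the outer corner $V_{l-1}$, an interval with $\gb_l=-1$ to the outer corner $V_l$; a corner receives two intervals precisely when both its slopes are lattice slopes, and exactly then are they glued at a finite point. This yields one component per outer corner with at least one lattice slope, plus the bottom component coming from $z\notin[e^{V_0},e^{V_n}]$.

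Two smaller points. The ``vanishing coefficient'' isolated in the proof of Proposition \ref{prop:TauChiRealForAllZs} is the coefficient of the logarithm whose argument is negative on the interval; it is what makes $\gq(z)$ \emph{real} there, not what controls whether $\gq$ blows up at an endpoint, and conflating these two roles is what leads your sketch astray. Also, the cusps are not part of this proposition: they are a separate theorem in the paper, proved via Lemma \ref{lem:cusp-cond} together with the tangency of $T(z)$ and $1/(z-e^{\gt})$, where Lemma \ref{lem:T(z)neq0} is needed to guarantee a single tangency point; your cusp paragraph is in the right spirit but is not needed here.
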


\begin{proof}
If $(\gt,\gq)$ is on the frozen boundary, then it is easy to see from \eqref{eq:ChiZ} that 
\begin{equation}
\label{eq:dChidTau}
\frac{d\gq}{d\gt}=\frac z{z-e^\gt} -\frac 12.
\end{equation}
Suppoze $z\in(e^{V_{l-1}},e^{V_l})\subset \UU$, for some $l$, for which $\gb_l=\pm 1$. Assume $\gb_l=1$ ($\gb_l=-1$ can be done similarly). It is immediate from the formulas for $\gt(z),\gq(z)$ that $(\gt(z),\gq(z))$ is a continuous curve when $z\in(e^{V_{l-1}},e^{V_l})$.
Consider two cases. 

First, suppose that $\gb_{l-1}\neq -1$. Then \eqref{eq:TauZ} implies
\begin{equation}
\label{eq:limTauZtoV}
\lim_{z\rightarrow e^{V_{l}}-}\gt(z)=V_{l}-, \text{ and } 
\lim_{z\rightarrow e^{V_{l-1}}+}\gt(z)= V_{l-1}-,
\end{equation}
where for example by $\lim_{z\rightarrow e^{V_{l}}-}\gt(z)=V_{l}-$ we mean $\gt(z)$ converges to $V_{l}$ from below when $z$ converges to $e^{V_{l}}$ from below.

Looking at \eqref{eq:ChiZ} it is easy to see that $$\lim_{z\rightarrow {e^{V_l}\text{ or }e^{V_{l-1}}}}\gq(z) =+\infty.$$ 
These calculations imply that when $z$ ranges over $(e^{V_{l-1}},e^{V_l})$, it gives a connected component of the curve $(\gt(z),\gq(z))$.

If $\gb_{l-1}=-1$, then \eqref{eq:TauZ} gives that $$\lim_{z\rightarrow e^{V_{l-1}}-}\gt(z)=\lim_{z\rightarrow e^{V_{l-1}}+}\gt(z)=V_{l-1}.$$
Consider values of $z$ which range over $(e^{V_{l-2}},e^{V_{l-1}})\cup(e^{V_{l-1}},e^{V_l})$. It follows from \eqref{eq:ChiZ} that
\begin{align*}
\gq_0:=\lim_{z\rightarrow V_{l-1}\pm}\gq(z)
=&-\sum_{i=1}^{l-2}\frac 12(1+\gb_i) \ln\(\frac{ze^{-V_i}-1}{ze^{-V_{i-1}}-1}\)
-\\
&+\sum_{i=l+1}^{n}\frac 12(1-\gb_i) \ln\(\frac{ze^{-V_i}-1}{ze^{-V_{i-1}}-1}\) -\frac 12 V_{l-1}+\frac 12(V(V_0)+V_0)
\end{align*}
and
$$\lim_{z\rightarrow {e^{V_l}\text{ or }e^{V_{l-2}}}}\gq(z) =+\infty.$$

These imply that when $z$ ranges over $(e^{V_{l-2}},e^{V_{l-1}})\cup(e^{V_{l-1}},e^{V_l})$, one connected component of the curve $(\gt(z),\gq(z))$ is obtained. The two pieces of this connected component corresponding to the intervals $z\in(e^{V_{l-2}},e^{V_{l-1}})$ and $z\in(e^{V_{l-1}},e^{V_l})$ are connected at the point $(V_{l-1},\gq_0)$.

Values $z\notin[e^{V_{0}},e^{V_{n}}]$ correspond to the component at the bottom.
$\qed$
\end{proof}

\begin{figure}[ht]
\caption{\label{Fig:SM1-BoundaryComponents}All possible corners.}
\includegraphics[width=15cm]{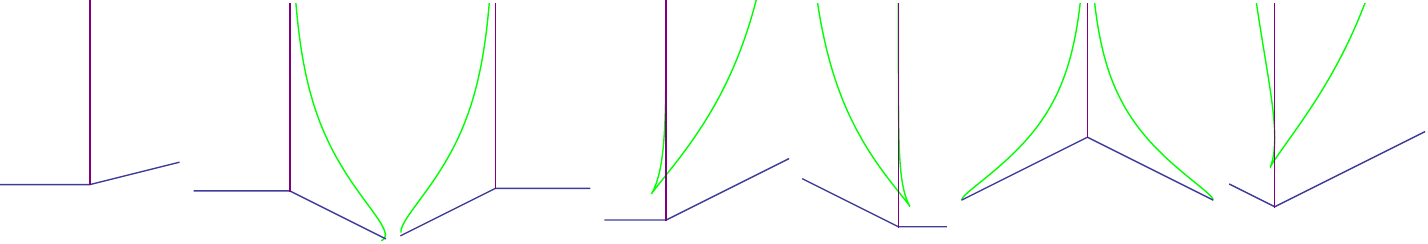}
\end{figure}

Next, it will be shown that each component corresponding to an outer corner develops a cusp. The above results, together with the analysis in Section \ref{subsec:CorrKerOther} imply the frozen boundary at corners looks as described in Figure \ref{Fig:SM1-BoundaryComponents}.

A cusp appears if $\frac{d\gq}{dz}=\frac{d\gt}{d z}=0$. This is equivalent to $S'(z)=S''(z)=S'''(z)=0$.

\begin{lemma}
\label{lem:cusp-cond}
Suppose $(\gt,\gq)$ is on the frozen boundary. Then $$S'''(z)=0 \Leftrightarrow \frac{d\gt}{d z}=0.$$
\end{lemma}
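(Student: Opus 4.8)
The plan is to express the condition $S'''(z)=0$ and the condition $\frac{d\gt}{dz}=0$ in terms of the function $T(z)$ defined in \eqref{eq:T}, and then show the two conditions are equivalent under the standing assumption that $(\gt,\gq)$ is on the frozen boundary. Recall from \eqref{eq:Spp} that $z\frac{d}{dz}(z\frac{d}{dz}S(z)) = T(z) - \frac{1}{z-e^\gt}$, so working with $zS'$ and the operator $z\frac{d}{dz}$ is far more convenient than working with $S', S'', S'''$ directly, exactly as the paper does elsewhere. First I would record the relation \eqref{eq:tauT}, namely $e^\gt = z - \frac{1}{T(z)}$, which holds precisely because $(\gt,\gq)$ lies on the frozen boundary (this is the defining equation \eqref{eq:TauZ}). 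This relation lets me eliminate $e^\gt$ wherever it appears and reduce everything to a statement purely about $T(z)$ and its derivatives.

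\textbf{Translating $\frac{d\gt}{dz}=0$.}
Differentiating \eqref{eq:tauT} gives $e^\gt \frac{d\gt}{dz} = 1 + \frac{T'(z)}{T(z)^2}$, so $\frac{d\gt}{dz}=0$ is equivalent to $T(z)^2 + T'(z) = 0$ (using $e^\gt \neq 0$). This is a clean algebraic condition, and it is the right-hand side of the claimed equivalence rewritten in terms of $T$.

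\textbf{Translating $S'''(z)=0$.}
For the left-hand side, I would compute the third derivative of $S$ through the $z\frac{d}{dz}$ operator. Starting from $G(z):=z\frac{d}{dz}(zS') = T(z) - \frac{1}{z-e^\gt}$, one recovers $S''$ and $S'''$ by undoing the operators, but the cleanest route is to note that on the frozen boundary the relevant vanishing conditions are $S'=S''=0$, so at such a point the lower-order terms drop out and $S'''(z)$ is proportional to a single explicit expression. Concretely, since $S'(z)=S''(z)=0$ at the point in question, differentiating the identity relating $G$ to $S$ repeatedly and evaluating at this point collapses the chain rule: the value of $S'''$ at a double critical point is (up to a nonzero factor involving powers of $z$) equal to $G'(z) = T'(z) + \frac{1}{(z-e^\gt)^2}$. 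Using \eqref{eq:tauT} to substitute $z - e^\gt = \frac{1}{T(z)}$ gives $\frac{1}{(z-e^\gt)^2} = T(z)^2$, so $G'(z) = T'(z) + T(z)^2$. Hence $S'''(z)=0 \Leftrightarrow T'(z)+T(z)^2 = 0$, which is exactly the condition obtained for $\frac{d\gt}{dz}=0$, completing the equivalence.

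\textbf{Main obstacle and bookkeeping.}
The main obstacle is the careful bookkeeping in passing between $S'''$ and the operator $z\frac{d}{dz}$: one must verify that the extra terms generated by the chain rule (the powers of $z$ and the lower derivatives of $S$) either carry a nonzero multiplicative factor or vanish identically because $S'=S''=0$ at the point, so that the equivalence is genuinely with $G'(z)=0$ and not with some contaminated expression. I would handle this by writing $zS'=:F$, so that $S''=\frac{1}{z}(F'-\frac{F}{z})$ and iterating; at a point where $F=F'=0$ (equivalently $S'=S''=0$) all terms proportional to $F$ or $F'$ drop, leaving $S'''$ proportional to $F''$, and $F'' = \frac{d}{dz}G \cdot(\text{factor})$ reduces to $G'(z)$ times a nonvanishing power of $z$. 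Since $z\in\UU$ is bounded away from $0$, this factor is nonzero, so the two vanishing conditions coincide. This is the only delicate point; everything else is the substitution $z-e^\gt = 1/T(z)$.
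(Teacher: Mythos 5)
Your proof is correct and follows essentially the same route as the paper: both reduce $\frac{d\gt}{dz}=0$ and $S'''(z)=0$ to the single condition $T'(z)+T(z)^2=0$ via the frozen-boundary relation $e^\gt=z-1/T(z)$. Your bookkeeping step showing that $G'(z)=z^2S'''(z)$ at a point where $S'=S''=0$ is in fact more careful than the paper, which simply asserts $S'''(z)=T'+\frac{1}{(z-e^\gt)^2}$ without noting that this identification is only valid (up to the nonzero factor $z^2$) at a double critical point.
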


\begin{proof}
By definition of $T(z)$ $$e^\gt=z-\frac 1{T(z)},$$ so
$$\frac{d\gt}{d z}=0 \Leftrightarrow 1+\frac {T'}{T^2}=0 \Leftrightarrow T'+T^2=0 \Leftrightarrow 0=T'+\frac 1{(z-e^\gt)^2}. $$
However, notice that $S'''(z)=T'+\frac 1{(z-e^\gt)^2}$. Thus, $\frac{d\tau}{d z}=0\Leftrightarrow S'''(z)=0$.
$\qed$
\end{proof}

\begin{theorem}
Each connected component of the frozen boundary corresponding to an outer corner has a cusp on it (see Figures \ref{Fig:M-Bdry} and \ref{Fig:SM1-BoundaryComponents}). In particular, the number of cusps is equal to the number of outer corners where at least one of the slopes is a lattice slope. 
\end{theorem}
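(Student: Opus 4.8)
The plan is to convert the geometric cusp condition into a statement about sign changes of one explicit function and then to count its zeros interval by interval. By Lemma \ref{lem:cusp-cond} a point of the frozen boundary is a cusp precisely when $\frac{d\gt}{dz}=0$, and since the frozen-boundary relation \eqref{eq:tauT} gives $e^{\gt(z)}=z-\frac1{T(z)}$, differentiation yields $e^{\gt}\frac{d\gt}{dz}=1+\frac{T'}{T^2}=\frac{T^2+T'}{T^2}$. As $e^{\gt}>0$ and $T^2>0$, the sign of $\frac{d\gt}{dz}$ equals the sign of $T^2+T'$; moreover $z\neq e^{\gt}$ on the frozen boundary (because $z-e^{\gt}=\frac1T\neq0$ by Lemma \ref{lem:T(z)neq0}), so $\frac{d\gq}{d\gt}$ from \eqref{eq:dChidTau} is finite and $\frac{d\gt}{dz}=0$ forces $\frac{d\gq}{dz}=0$. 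Hence a cusp is exactly a zero of $T^2+T'$ with $z\in\UU$. It is convenient to record the clean reformulation: setting $u(z):=\prod_{i=0}^n(z-e^{V_i})^{\frac12(\gb_{i+1}-\gb_i)}$, formula \eqref{eq:T} says $T=u'/u$, whence $T^2+T'=u''/u$. Since $\gb_l=\pm1$ on every interval of $\UU$, the factors of $u$ with $e^{V_m}>z$ have integer total exponent, so $u$ is real and nonvanishing in the interior of each interval; thus cusps correspond to the zeros of $u''$ on $\UU$.

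First I would prove existence of a cusp on each component associated with a single slope-$(\pm1)$ segment. Consider the interval $(e^{V_{l-1}},e^{V_l})$ with $\gb_l=1$ and $\gb_{l-1}\neq-1$ (the slope $-1$ case is symmetric). The residues of $T$ at the two endpoints are $\frac12(\gb_l-\gb_{l-1})=\frac12(1-\gb_{l-1})\in(0,1)$ and $\frac12(\gb_{l+1}-\gb_l)=\frac12(\gb_{l+1}-1)<0$. Substituting the Laurent leading term $T\sim a/(z-e^{V_i})$ into $T^2+T'$ gives the local behaviour $T^2+T'\sim a(a-1)/(z-e^{V_i})^2$, which tends to $-\infty$ as $z\to e^{V_{l-1}}+$ (since $a\in(0,1)$) and to $+\infty$ as $z\to e^{V_l}-$ (since $a<0$). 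By Lemma \ref{lem:T(z)neq0} the function $T$, hence $T^2+T'$, is finite and continuous on the open interval, so the intermediate value theorem produces a zero: a cusp.

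Next I would treat the glued components, where $\gb_{l-1}=-1$ and $\gb_l=1$ so that the intervals $(e^{V_{l-2}},e^{V_{l-1}})$ and $(e^{V_{l-1}},e^{V_l})$ join at the shared corner. Here the residue of $T$ at $e^{V_{l-1}}$ is $\frac12(1-(-1))=1$, an integer, so $e^{V_{l-1}}$ is a regular point of $u$ at which $u$ has a simple zero; in particular $u$ and $u''$ extend smoothly across $e^{V_{l-1}}$. The same Laurent computation at the far endpoints shows $u''\to-\infty$ as $z\to e^{V_{l-2}}+$ (where $u<0$) and $u''\to+\infty$ as $z\to e^{V_l}-$ (where $u>0$), so $u''$ has a zero on the glued interval, giving one cusp for this component. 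Combined with the previous paragraph, every component associated with an outer corner carrying a lattice slope has at least one cusp.

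The remaining, and hardest, point is to upgrade this to the exact count, which requires showing that each corner component carries exactly one cusp and that the bottom component $\left((-\infty,e^{V_0})\union(e^{V_n},\infty)\right)$ carries none. For the bottom component the base case $n=1$ is explicit: $T^2+T'=-\tfrac14(1+\gb_1)(1-\gb_1)\,(e^{V_0}-e^{V_1})^2/\big((z-e^{V_0})^2(z-e^{V_1})^2\big)$ has constant sign, and I would prove the constant-sign statement in general by induction on the number of corners, reducing corners exactly as in the proof of Lemma \ref{lem:T(z)neq0}. For uniqueness on a corner component I would show that $T^2+T'$ (equivalently $u''$) changes sign exactly once, i.e. that $u$ has a single inflection point there; the endpoint analysis above already forces an odd number of sign changes, so the task is to exclude extra ones. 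I expect this uniqueness to be the main obstacle, since ruling out spurious inflections needs a convexity/monotonicity input that does not follow from the boundary behaviour alone. I would approach it either by the same corner-reduction induction or by bounding the real roots, interval by interval, of the degree-$(2n-2)$ polynomial $(T^2+T')\prod_{i=0}^n(z-e^{V_i})^2$.
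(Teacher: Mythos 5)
Your reduction of the cusp condition is correct and is in substance the same as the paper's: the paper's tangency condition $T(z)=\frac{1}{z-e^\gt}$, $T'(z)=\bigl(\frac{1}{z-e^\gt}\bigr)'$ is literally the equation $T'+T^2=0$ that you derive, so up to that point the two arguments coincide. Where you genuinely diverge is in how a zero of $T'+T^2$ is produced. The paper keeps the geometric picture: it uses Lemma \ref{lem:T(z)neq0} (with $2m=0$ and $2m=2$) to show that on a corner interval $T$ is positive, strictly concave up, blows up at both endpoints and has a positive minimum, and then sweeps the hyperbola $\frac{1}{z-e^\gt}$ in the parameter $\gt$ until it touches the graph of $T$. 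You instead compute the endpoint asymptotics $T^2+T'\sim a(a-1)/(z-e^{V_i})^2$ from the residues $a$ of $T$ and apply the intermediate value theorem; this is more elementary, avoids the sweeping argument entirely, and handles the glued case $\gb_{l-1}=-1$, $\gb_l=1$ in one stroke (where the paper has to discuss the cancellation of infinite terms by hand). The identity $T^2+T'=u''/u$ is a nice repackaging, though for it to define a real function you should take $u=\prod_i|z-e^{V_i}|^{\frac12(\gb_{i+1}-\gb_i)}$ on each interval of $\UU$ (then $u>0$ and the sign bookkeeping in your glued case simplifies); you should also note that a zero of $u''$ landing exactly at the interior point $e^{V_{l-1}}$, where $u$ vanishes, is the degenerate situation the paper files under ``the cusp is at $\gt=V_{l-1}$'' and needs a separate word.

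The genuine gap is the exact count, which you yourself flag as unfinished. The theorem asserts not just existence but that the number of cusps \emph{equals} the number of outer corners with a lattice slope, which requires (i) exactly one zero of $T'+T^2$ on each corner component and (ii) no zeros on the bottom component $(-\infty,e^{V_0})\cup(e^{V_n},\infty)$. Your endpoint analysis only controls the parity of the number of sign changes, and an IVT argument can never rule out three or five zeros; nor does it say anything about the bottom component, where the two ends do not force a sign change. The missing input is exactly the global information supplied by Lemma \ref{lem:T(z)neq0}: on each interval of $\UU$ the function $T$ is nowhere zero and strictly convex (or concave), which is what the paper leans on to conclude that the hyperbola family meets the graph of $T$ tangentially ``at a single point.'' Your two suggested repairs (the corner-reduction induction, or root-counting for the polynomial $(T^2+T')\prod_i(z-e^{V_i})^2$) are plausible, but neither is carried out, and the second is delicate because that polynomial has degree $2n-2$ while the claim allocates at most one root per lattice interval and zero to the two unbounded ones. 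As written, your argument proves the first sentence of the theorem in the weaker form ``has at least one cusp'' and does not prove the second sentence; to close it you should either import the convexity of $T$ from Lemma \ref{lem:T(z)neq0} and complete the tangency argument, or actually run the induction you propose for the sign of $T^2+T'$ on the bottom component and for uniqueness on the corner components.
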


\begin{proof}
First, let us treat the case $e^{V_{l-1}}<z<e^{V_l}$, $\gb_{l-1}>-1, \gb_{l}=1$.

By Lemma \ref{lem:cusp-cond} cusps correspond to points $z$ such that $$e^\gt=z-\frac 1{T(z)}\text{ and } \frac{d\gt}{dz}=0.$$
This is equivalent to $$T(z)=\frac{1}{z-e^\gt}\text{ and }T'(z)=\(\frac{1}{z-e^\gt}\)',$$
which means that to prove the theorem it is enough to show that when $e^{V_{l-1}}<z<e^{V_l}$, then $T(z)$ and $\frac{1}{z-e^\gt}$ are tangent to each other at a single point.

Recall that $T(z)$ is given by $$T(z)=\sum_{i=0}^n\frac 12(\gb_{i+1}-\gb_i)\frac 1{z-e^{V_i}}.$$
It follows that $\lim_{z\rightarrow e^{V_{l-1}}+}T(z)=\infty=\lim_{z\rightarrow e^{V_{l}}-}T(z)$. Lemma \ref{lem:T(z)neq0} gives that $T^{(2m)}(z)\neq 0$. In particular it follows that in the interval $e^{V_{l-1}}<z<e^{V_l}$ the function $T(z)$ is never zero, is always concave up, and has a positive minimum in that interval.

\begin{figure}[ht]
\caption{\label{Fig:TforCusps}}
\includegraphics[width=7cm]{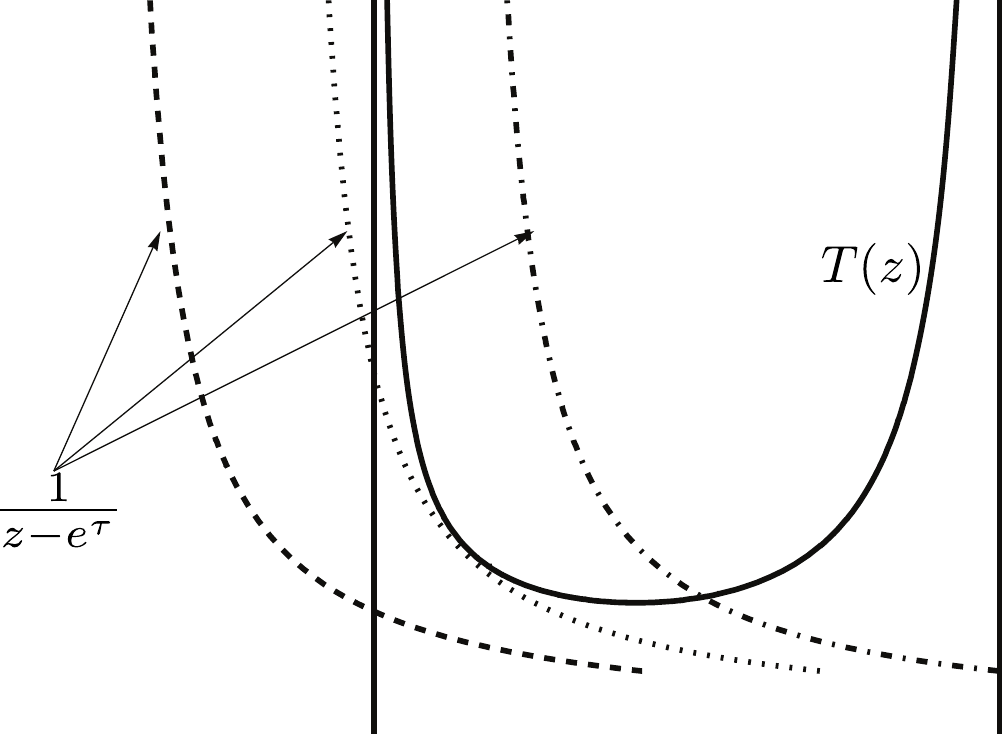}
\end{figure}

When $\gt>V_l$, $\frac{1}{z-e^\gt}$ does not intersect $T(z)$ in the interval $e^{V_{l-1}}<z<e^{V_l}$. When $\gt$ decreases, it intersects at a point. When $\gt=V_{l-1}$, then
$$\lim_{z\rightarrow e^{V_{l-1}}+}\frac{1}{z-e^{V_{l-1}}}-T(z)= \lim_{z\rightarrow e^{V_{l-1}}+} \frac{1}{z-e^{V_{l-1}}} - \frac{\frac 12 (1-\gb_{l-1})}{z-e^{V_{l-1}}} +\text{ finite terms }=\infty,$$
so $\frac{1}{z-e^\gt}$ and $T(z)$ intersect. $T(z)$ has a positive minimum in the interval $e^{V_{l-1}}<z<e^{V_l}$, so if $\gt$ is small enough, $\frac{1}{z-e^\gt}$ and $T(z)$ will be tangent to each other. This completes the proof in this case.

The case  $e^{V_{l-1}}<z<e^{V_l}$, $\gb_{l-1}=-1, \gb_{l}<1$ is identical.

It remains to study the case $e^{V_{l-1}}<z<e^{V_l}$, $\gb_{l-1}=-1, \gb_{l}=1$. 
The same thing happens as above, except now the infinite terms in 
$$\lim_{z\rightarrow e^{V_{l-1}}+}\frac{1}{z-e^{V_{l-1}}}-T(z)$$
cancel each other. If the limit is still positive, the argument works the same way as above, and a cusp $\gt<V_{l-1}$ is obtained. If the limit is negative, then as above it can be shown that a cusp $\gt>V_{l-1}$ appears when $e^{V_{l-2}}<z<e^{V_{l-1}}$. If the limit is zero, the cusp is at $\gt=V_{l-1}$.
$\qed$
\end{proof}

Notice that \eqref{eq:limTauZtoV} implies $\frac{d\gq}{d\gt}\rightarrow\pm\infty$ when $z\rightarrow e^{V_l},\forall l$. 
Examining \eqref{eq:dChidTau} it can also be established that $\frac{d\gq}{d\gt}\neq 0$, and since for $z\in(e^{V_{l-1}},e^{V_l})$, $\frac{d\gq(\gt)}{d\gt}$ is a continuous function, it follows that in the intervals $z\in(e^{V_{l-1}},e^{V_l})$, $\frac{d\gq(\gt)}{d\gt}$ does not change its sign.

In the $\gt,\gq$ plane each connected component except the one at the bottom starts at $\gt=V_{i-1},\gq=\infty$ for some $i$, $\gq$ decreases until it reaches the cusp, then increases to $\gt=V_i,\gq=\infty$. See Figures \ref{Fig:M-Bdry} and \ref{Fig:SM1-BoundaryComponents}.

\begin{figure}[ht]
\caption{\label{Fig:2cusps}
$V = {0, 1, 1.05, 2, 2.05, 3, 3.05},\gb = {1, -1, 1, 0.7, 1, 0.7}$.}
\includegraphics[width=7cm]{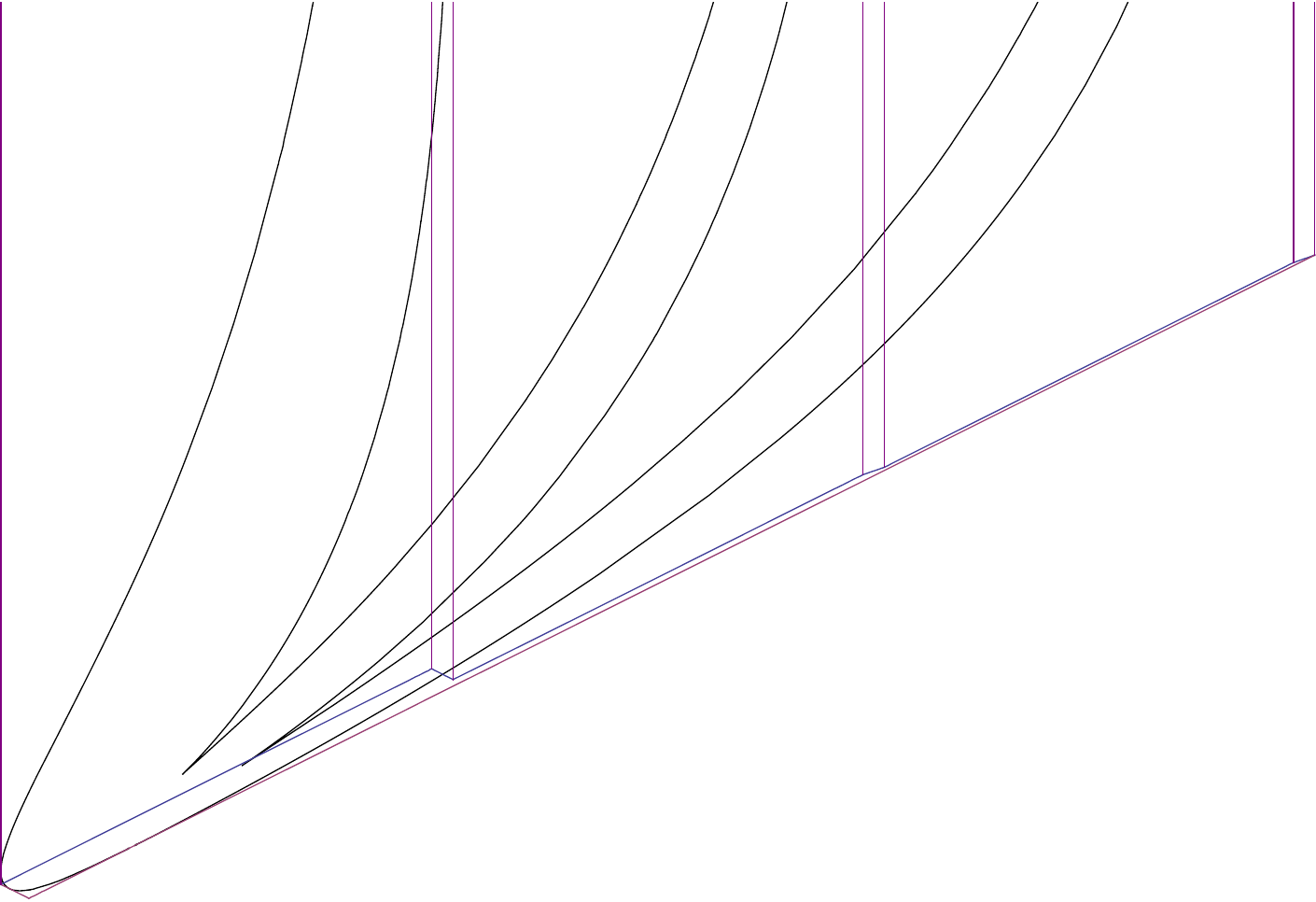}
\end{figure}

The results of this and previous sections give the following characterization of the frozen boundary:
\begin{enumerate}
\item On the pieces of the back wall where the slope is $\pm 1$ the disordered region is bounded above, while at other places it grows infinitely high. 
\item The number of connected components of the frozen boundary is one more than the number of outer corners where at least one of the slopes at the corner is a lattice slope.
\item The frozen boundary develops a cusp for each such outer corner. 
\end{enumerate}

Figures \ref{Fig:CuspsPass}, \ref{Fig:2cusps} and \ref{Fig:3cusps} give several interesting examples of frozen boundaries.

\begin{figure}[ht]
\caption{\label{Fig:3cusps}
$V = {0.7, 1, 1.05, 1.2, 1.25, 1.5, 1.55, 3, 3.1}, \gb = {   1, 0.7,  1,  0.7,     1, 0.7, 1, -1}$.}
\includegraphics[width=6cm]{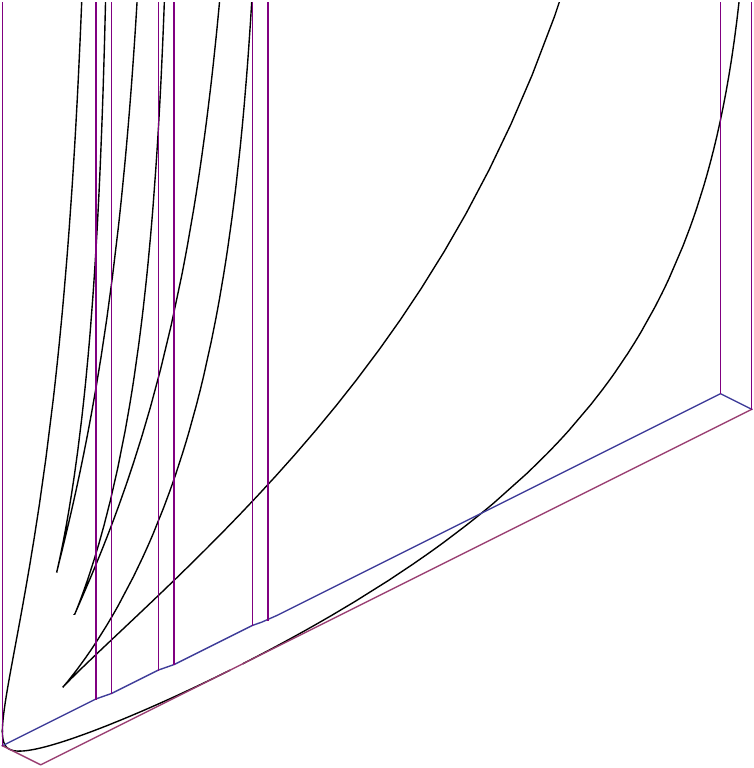}
\end{figure}

\appendix

\section{An integral formula for \texorpdfstring{$\Phi_{b_k}(z)$}{Phi\_b\_k(z)}} \label{sec:S_appendix}

One of the main difficulties in this paper, and in fact also in \cite{OR1} and \cite{OR2}, is understanding the asymptotics of \eqref{eq:main-corr2}. Similarly as in \cite{OR1} we study the asymptotics of the function
\begin{equation*}
r_k\ln\Phi_{b_{\lambda_k}}(z,t_k)=r_k\ln\frac{\Phi_{-,b_{\lambda_k}}(z,t_k)}{\Phi_{+,b_{\lambda_k}}(z,t_k)}
\end{equation*}
in the limit when $r_k\rightarrow 0$ and $t_k$ is a sequence such that $r_kt_k\rightarrow \tau$ (here $\Phi_{-,b_{\lambda_k}}(z,t)$ and $\Phi_{+,b_{\lambda_k}}(z,t)$ are as in \eqref{eq:Phis}). In this appendix we prove a technical result concerning this asymptotics in a very general setting. We consider an arbitrary continuous function 
\begin{equation*}
V(\tau): {\mathbb R} \rightarrow {\mathbb R},
\end{equation*}
which is Lipshitz with constant 1. This is a less restrictive assumption on $V$ than in the rest of the paper, where $V(\tau)$ is assumed to be piecewise linear, and the domain is restricted to a finite interval. Otherwise, we use notation from the introduction. In particular, we are considering a sequence of back walls $b_{\lambda_k}(t)$ and $r_k \in {\mathbb R}_{>0}$ such that $r_k$ tends to $0$, and functions $B_{\lambda_k}(\tau)$ defined by 
\begin{equation*}
B_{\lambda_k}(\tau) := r_k b_{\lambda_k}\(\frac{\tau}{r_k}\)
\end{equation*}
converge uniformly to $V(\tau)$. 

\begin{lemma} \label{lem-IndOfFam} If $B_{\lambda_k}$ uniformly converge to $V$, $\lim_{k\rightarrow\infty}r_k=0$, and $\lim_{k\rightarrow\infty}r_kt_k=\tau$, then 
\begin{multline*}
\lim_{k\rightarrow \infty}r_k \ln \Phi_{b_{\lambda_k}}(z,t_k)
=\frac12(\tau+V(\tau))\ln\(1-e^{\tau}z^{-1}\)
-\int_{-\infty}^{\tau}\frac{1}{2}(M+V(M))\frac{-e^Mz^{-1}}{1-e^{M}z^{-1}}dM
\\+\frac12(\tau-V(\tau))\ln\(1-e^{-\tau}z\)+\int_\tau^\infty\frac{1}{2}(M-V(M))\frac{e^{-M}z}{1-e^{-M}z}dM.
\end{multline*}
In particular $\lim_{k\rightarrow \infty} r_k \ln \Phi_{b_{\lambda_k}}(z,t_k)$ is independent of the family $\{b_{\lambda_k}(t)\}_k$. The convergence is uniform in $z$ on compact subsets of $\mathbb{C}$.
\end{lemma}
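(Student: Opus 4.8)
The plan is to start from \eqref{eq:Phis}, take logarithms, and rewrite the two products as a single sum over half-integers weighted by the local slope. With $q_k=e^{-r_k}$, and using that $m\in D^-$ (resp. $m\in D^+$) means $b_{\lambda_k}'(m)=+1$ (resp. $-1$), so that $\mathbb{1}[m\in D^-]=\tfrac12(1+b_{\lambda_k}'(m))$ and $\mathbb{1}[m\in D^+]=\tfrac12(1-b_{\lambda_k}'(m))$, I obtain
$$r_k\ln\Phi_{b_{\lambda_k}}(z,t_k)=\frac{r_k}{2}\sum_{m<t_k}(1+b_{\lambda_k}'(m))\ln(1-z^{-1}e^{r_km})-\frac{r_k}{2}\sum_{m>t_k}(1-b_{\lambda_k}'(m))\ln(1-ze^{-r_km}).$$
The slope-free halves are ordinary Riemann sums: rescaling $M=r_km$ (lattice spacing $r_k$) they converge to $\tfrac12\int_{-\infty}^{\tau}\ln(1-z^{-1}e^M)\,dM$ and $\tfrac12\int_{\tau}^{\infty}\ln(1-ze^{-M})\,dM$.

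The genuine difficulty is the two pieces containing $b_{\lambda_k}'(m)$: since $V$ is only Lipschitz, one cannot replace $b_{\lambda_k}'(m)$ by $V'(r_km)$, and in fact $b_{\lambda_k}'$ oscillates between $\pm1$ and does not converge pointwise. The key idea (due to a suggestion of P.~Tingley) is to transfer the differencing off the back wall by summation by parts. Because the slope is constant on each unit interval between consecutive half-integers, $b_{\lambda_k}'(m)=b_{\lambda_k}(m+\tfrac12)-b_{\lambda_k}(m-\tfrac12)$, so with $a_m:=b_{\lambda_k}(m-\tfrac12)$ one has $b_{\lambda_k}'(m)=a_{m+1}-a_m$, and Abel summation gives, writing $f_m=\ln(1-z^{-1}e^{r_km})$ and letting $q$ be the largest half-integer below $t_k$,
$$\frac{r_k}{2}\sum_{m<t_k}b_{\lambda_k}'(m)f_m=\frac{r_k}{2}\,a_{q+1}f_q-\frac12\sum_{m\le q}(r_ka_m)\,(f_m-f_{m-1}),$$
the boundary term at $-\infty$ vanishing since $f_m\to0$ exponentially while $a_m$ grows only linearly. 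Now only \emph{values} of the back wall enter: $r_ka_m=B_{\lambda_k}(r_k(m-\tfrac12))\to V(M)$ uniformly on compacts by hypothesis, while $f_m-f_{m-1}=r_k\,\frac{-z^{-1}e^{r_km}}{1-z^{-1}e^{r_km}}+O(r_k^2)$. Hence this piece tends to $\tfrac12 V(\tau)\ln(1-z^{-1}e^\tau)-\tfrac12\int_{-\infty}^{\tau}V(M)\frac{-e^Mz^{-1}}{1-e^Mz^{-1}}\,dM$, and the symmetric computation on $(\tau,\infty)$ yields $\tfrac12V(\tau)\ln(1-ze^{-\tau})+\tfrac12\int_\tau^\infty V(M)\frac{e^{-M}z}{1-e^{-M}z}\,dM$.

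To finish, I integrate the two slope-free Riemann integrals by parts, turning $\tfrac12\int\ln(\cdots)\,dM$ into the boundary terms $\tfrac12\tau\ln(1-z^{-1}e^\tau)$ and $-\tfrac12\tau\ln(1-ze^{-\tau})$ plus integrals of the form $\tfrac12\int M\,\frac{\cdots}{\cdots}\,dM$. Collecting all four contributions, the $\tau$-terms combine with the $V(\tau)$-terms into $\tfrac12(\tau\pm V(\tau))$ and the $\int M(\cdots)$-terms combine with the $\int V(M)(\cdots)$-terms into $\tfrac12(M\pm V(M))$, reproducing the claimed formula exactly. Independence of the family $\{b_{\lambda_k}\}_k$ is then immediate, since the limit involves the back walls only through $V$.

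I expect the main obstacle to be making the discrete-to-integral passage rigorous and uniform. One must (i) control the tails as $m\to\pm\infty$, which succeeds because the integrands decay like $e^{M}$ (resp. $e^{-M}$) against only linearly growing prefactors bounded via $|V(M)|\le|M|+C$; (ii) sum the finite-difference errors $f_m-f_{m-1}-r_kf'$ to something $O(r_k)$; and (iii) keep every estimate uniform for $z$ in a compact set bounded away from the positive real axis, where $1-z^{-1}e^M$ and $1-ze^{-M}$ stay away from $0$ and both sides are analytic. The virtue of the summation-by-parts step is precisely that, after it, only the uniformly convergent data $B_{\lambda_k}\to V$ appear, rather than the non-convergent slopes $b_{\lambda_k}'$, which is what renders (i)--(iii) tractable.
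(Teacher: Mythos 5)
Your proposal is correct and follows essentially the same route as the paper: the decisive step in both is Abel summation, which replaces the non-convergent slopes $b_{\lambda_k}'$ by the uniformly convergent values $B_{\lambda_k}\to V$, followed by the same Riemann-sum, tail, and uniformity estimates. The only differences are cosmetic --- the paper sums by parts on the combined weight $\tfrac12\bigl(M-B_{\lambda_k}(M)\bigr)$ in one stroke rather than splitting off the slope-free part and recombining by integration by parts at the end, and the sign you report for the $(\tau,\infty)$ boundary and integral terms should be negative (the final assembly you describe is nonetheless the correct one) --- and neither affects the argument.
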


\begin{proof} 
Let us analyze the denominator of $\Phi_{b_{\lambda_k}}(z,t_k)$. The same arguments will work for the numerator. Define 
$$D_k:=r_k\ln\(\frac{1}{\Phi_{+,b_{\lambda_k}}(z,t_k)}\).$$
From \eqref{eq:Phis} we get
$$D_k=-r_k\(\ln\(\prod_{m>t_k, m\in D^+, m\in \ZZZ+\frac 12}(1-zq_k^m)\)\)=r_k\(-\sum_{m>t_k,b'_k(m)=-1}\ln\(1-e^{-r_km}z\)\).$$
Notice that 
\begin{equation}
\label{eq:b'toMeas}
\frac{1}{2}(1-b'_k(m))=\left\{
\begin{array}{ll}
1,&b'_k(m)=-1\\
0,&b'_k(m)=1
\end{array}\right..
\end{equation}
Make the change of variable $M=r_km$ and set $\tau_k=r_k(t_k+\frac 12)$. $D_k$ can be rewritten as
\begin{multline*}
D_k=-\sum_{m>t_k}r_k\frac{1}{2}(1-b'_k(m)) \ln\(1-e^{-r_km}z\)= \\-\sum_{M\in\{\tau_k,\tau_k+r_k,\tau_k+2r_k,\ldots\}} r_k\frac{1}{2}(1-B'_k(M)) \ln\(1-e^{-M}z\).$$
\end{multline*}

To make formulas simpler, define $$f(M):=-\ln\(1-e^{-M}z\).$$ 

Let $\sigma>\tau$, and let $\sigma_k$ be a sequence such that $\frac{\sigma_k-\tau_k}{r_k}\in\ZZZ_+$ and $\lim_{k\rightarrow\infty}\sigma_k=\sigma$. Define
$$D^\sigma_k:= \sum_{M\in\{\tau_k,\tau_k+r_k,\tau_k+2r_k,\ldots,\sigma_k\}} r_k\frac{1}{2}(1-B'_k(M))f(M).$$
Since $B_{\lambda_k}(s)$ has constant slope when $s\in(M,M+r_k)$ we can write $D^\sigma_k$ as
\begin{multline*}
D^\sigma_k= \sum_{M\in\{\tau_k,\tau_k+r_k,\tau_k+2r_k,\ldots,\sigma_k\}} r_k\frac{1}{2}\(1-\frac{B_{\lambda_k}(M+r_k)-B_{\lambda_k}(M)}{r_k}\)f(M)
=\\=\frac 12\sum_{M\in\{\tau_k,\tau_k+r_k,\tau_k+2r_k,\ldots,\sigma_k\}} ([M+r_k-B_{\lambda_k}(M+r_k)]-[M-B_{\lambda_k}(M)])f(M).
\end{multline*}

Now, using $(a_{n+1}-a_n)b_n=(a_{n+1}b_{n+1}-a_nb_n)-a_{n+1}(b_{n+1}-b_n)$ rewrite $D^\sigma_k$ as
\begin{multline*}
D^\sigma_k= \frac 12\sum_{M\in\{\tau_k,\tau_k+r_k,\tau_k+2r_k,\ldots,\sigma_k\}} [(M+r_k-B_{\lambda_k}(M+r_k))f(M+r_k)-(M-B_{\lambda_k}(M))f(M)]
\\-\frac 12\sum_{M\in\{\tau_k,\tau_k+r_k,\tau_k+2r_k,\ldots,\sigma_k\}} (M+r_k-B_{\lambda_k}(M+r_k))(f(M+r_k)-f(M)).
\end{multline*}
Summing up the telescoping sum and rewriting the second one give
\begin{multline*}
D^\sigma_k= \frac 12(\sigma_k+r_k-B_{\lambda_k}(\sigma_k+r_k))f(\sigma_k+r_k)-\frac 12(\tau_k-B_{\lambda_k}(\tau_k))f(\tau_k)\\-\frac 12\sum_{M\in\{\tau_k,\tau_k+r_k,\tau_k+2r_k,\ldots,\sigma_k\}} r_k(M+r_k-B_{\lambda_k}(M+r_k))\(\frac{f(M+r_k)-f(M)}{r_k}\).
\end{multline*}

Since $B_{\lambda_k}(M)$ converge uniformly to $V(M)$, and $f'(M)$ and $f''(M)$ are bounded when $M\in(\tau_k,\sigma_k)$, we have
\begin{multline*}
\lim_{k\rightarrow \infty}\left|D^\sigma_k-\( \frac 12(\sigma_k-V(\sigma_k))f(\sigma_k)-\frac 12(\tau_k-V(\tau_k))f(\tau_k)\right.\right.\\\left.\left.-\frac 12\sum_{M\in\{\tau_k,\tau_k+r_k,\tau_k+2r_k,\ldots,\sigma_k\}} r_k(M-V(M))f'(M)\)\right|=0.
\end{multline*}

The sum above is a Riemann sum for $\frac 12\int_\tau^\sigma(M-V(M))f'(M)dM$ and thus in the limit $r_k\rightarrow 0$ it converges to the integral, so we have
\begin{equation*}
\lim_{k\rightarrow \infty}D^\sigma_k=\( \frac 12(\sigma-V(\sigma))f(\sigma)-\frac 12(\tau-V(\tau))f(\tau)-\frac 12\int_\tau^\sigma (M-V(M))f'(M)dM\).
\end{equation*}

Since $\lim_{\sigma\rightarrow\infty} \frac 12(\sigma-V(\sigma))f(\sigma)=0$, we obtain 
$$\lim_{k\rightarrow \infty}D_k=\(-\frac12(\tau-V(\tau))f(\tau)-\int_\tau^\infty\frac{1}{2}(M-V(M))f'(M)dM\).$$ 

Similarly, if $N_k$ is defined as $$N_k:=r_k\(\ln\(\Phi_{-,b_{\lambda_k}}(z,t_k)\)\),$$ it can be shown that 
$$\lim_{k\rightarrow \infty}N_k=+\frac12(\tau+V(\tau))\ln\(1-e^{\tau}z^{-1}\)
-\int_{-\infty}^{\tau}\frac{1}{2}(M+V(M))\frac{-e^Mz^{-1}}{1-e^{M}z^{-1}}dM.$$

Combining the results completes the proof.
$\qed$
\end{proof}

\begin{corollary}
\label{cor:limPhi}
If $V(\tau)$ is piecewise differentiable, $\lim_{k\rightarrow\infty}r_k=0$, and $\lim_{k\rightarrow\infty}r_kt_k=\tau$, then
\begin{multline*}
\lim_{k\rightarrow \infty}r_k\ln\Phi_{b_{\lambda_k}}(z,t_k)=
-\int_{-\infty}^{\tau} -\frac{1}{2}(1+V'(M))\ln\(1-e^{M}z^{-1}\)dM\\+\int_\tau^\infty-\frac{1}{2}(1-V'(M)) \ln\(1-e^{-M}z\)dM.
\end{multline*}
\end{corollary}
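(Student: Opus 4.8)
The plan is to obtain Corollary \ref{cor:limPhi} directly from Lemma \ref{lem-IndOfFam} by a single integration by parts in each of the two integrals, the extra hypothesis that $V$ is piecewise differentiable being exactly what is needed to make sense of $V'$ and to justify the manipulation. First I would record the elementary identities linking the integrands of the two formulas. With $f(M):=-\ln(1-e^{-M}z)$ and $h(M):=\ln(1-e^{M}z^{-1})$ as in the proof of Lemma \ref{lem-IndOfFam}, differentiation gives $f'(M)=-\frac{e^{-M}z}{1-e^{-M}z}$ and $h'(M)=\frac{-e^{M}z^{-1}}{1-e^{M}z^{-1}}$, so that the fourth term of Lemma \ref{lem-IndOfFam} is $-\int_\tau^\infty \frac12(M-V(M))f'(M)\,dM$ and its second term is $-\int_{-\infty}^\tau \frac12(M+V(M))h'(M)\,dM$, while the explicit logarithmic terms are $-\frac12(\tau-V(\tau))f(\tau)$ and $\frac12(\tau+V(\tau))h(\tau)$.

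Next I would integrate by parts, moving the derivative from the $\ln$ factor onto the linear coefficient. Setting $g(M):=\frac12(M-V(M))$ and $G(M):=\frac12(M+V(M))$, these are continuous and piecewise $C^1$ with $g'=\frac12(1-V')$ and $G'=\frac12(1+V')$ wherever $V$ is differentiable. Since $V$ is only piecewise differentiable, I would apply the integration by parts on each interval of differentiability and sum; continuity of $g,G,f,h$ (where continuity of $V$, assumed in Lemma \ref{lem-IndOfFam}, enters) makes the contributions at the interior break points telescope and cancel, so that only the endpoints $\tau$ and $\pm\infty$ survive. Carrying this out in the denominator part yields
\begin{equation*}
-\tfrac12(\tau-V(\tau))f(\tau)-\int_\tau^\infty \tfrac12(M-V(M))f'(M)\,dM=\int_\tau^\infty \tfrac12(1-V'(M))f(M)\,dM,
\end{equation*}
and symmetrically $\frac12(\tau+V(\tau))h(\tau)-\int_{-\infty}^\tau \frac12(M+V(M))h'(M)\,dM=\int_{-\infty}^\tau \frac12(1+V'(M))h(M)\,dM$. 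Substituting back $f(M)=-\ln(1-e^{-M}z)$ and $h(M)=\ln(1-e^{M}z^{-1})$ turns the right-hand sides into exactly the two integrals in the statement of the Corollary.

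The one step requiring genuine care is the vanishing of the boundary terms at $M=\pm\infty$. Here I would use that, because $V$ is Lipshitz with constant $1$, the coefficients satisfy $|g(M)|,|G(M)|=O(|M|)$, whereas $f(M)\to0$ as $M\to+\infty$ and $h(M)\to0$ as $M\to-\infty$ exponentially fast (like $e^{-M}z$ and $-e^{M}z^{-1}$); hence $g(M)f(M)\to0$ and $G(M)h(M)\to0$, and the integrated terms at infinity drop out. The only surviving endpoint terms, at $M=\tau$, are precisely $\frac12(\tau-V(\tau))f(\tau)$ and $-\frac12(\tau+V(\tau))h(\tau)$, which cancel against the explicit logarithmic terms of Lemma \ref{lem-IndOfFam}. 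This bookkeeping of boundary terms, together with the telescoping at the interior corners of $V$, is the main though entirely routine obstacle; beyond it the argument introduces no new analytic input over Lemma \ref{lem-IndOfFam} and is merely a rewriting of the answer.
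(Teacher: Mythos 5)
Your proposal is correct and is essentially the paper's own argument: the paper proves the corollary in one line, ``this follows from Lemma \ref{lem-IndOfFam} by integration by parts,'' and your write-up simply supplies the details of that integration by parts (the telescoping at the corners of $V$ and the vanishing of the boundary terms at $\pm\infty$, which the Lipschitz bound and the exponential decay of $\ln\(1-e^{-M}z\)$ and $\ln\(1-e^{M}z^{-1}\)$ indeed justify). The sign bookkeeping checks out, so nothing further is needed.
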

\begin{proof}
This follows from Lemma \ref{lem-IndOfFam} by integration by parts.
$\qed$
\end{proof}

\bibliography{ArbitrarySlopes-2010-12-27}

\begin{thebibliography}{BMRT10}

\bibitem[BMRT10]{BMRT}
C.~Boutillier, S.~Mkrtchyan, N.~Reshetikhin, and P.~Tingley.
\newblock Random skew plane partitions with a piecewise periodic back wall.
\newblock 2010.
\newblock arXiv:0912.3968v2 [math-ph].

\bibitem[Bou09]{Bou}
C.~Boutillier.
\newblock The bead model \& limit behaviors of dimer models.
\newblock {\em Annals of Prob.}, 37(1):107--142, 2009.

\bibitem[KO07]{KO}
R.~Kenyon and A.~Okounkov.
\newblock Limit shapes and the complex burgers equation.
\newblock {\em Acta Mathematica}, 199(2):263--302, 2007.

\bibitem[OR03]{OR1}
A.~Okounkov and N.~Reshetikhin.
\newblock Correlation function of schur process with application to local
  geometry of a random 3-dimensional young diagram.
\newblock {\em J. Amer. Math. Soc.}, 16:581--603 (electronic), 2003.

\bibitem[OR06]{OR3}
A.~Okounkov and N.~Reshetikhin.
\newblock The birth of a random matrix.
\newblock {\em Moscow Math. J.}, 6(3):553--566, 2006.

\bibitem[OR07]{OR2}
A.~Okounkov and N.~Reshetikhin.
\newblock Random skew plane partitions and the pearcey process.
\newblock {\em Comm. in Math. Phys.}, 269(3):571--609, 2007.

\end{thebibliography}
\bibliographystyle{alpha}

\end{document}